\theoremstyle{plain}
\newtheorem{theorem}{Theorem}[section]
\newtheorem{lemma}[theorem]{Lemma}
\newtheorem{assumption}{Assumption}
\newtheorem{proposition}{Proposition}
\theoremstyle{definition}
\newtheorem{remark}{Remark}
\newtheorem{example}{Example}
\newcommand{\ve}[1]{\mbox{\boldmath ${#1}$}}
\newcommand{\vesub}[2]{\mbox{{\boldmath ${#1}$}$_{#2}$}}
\newcommand{\vesup}[2]{\mbox{{\boldmath ${#1}$}$^{#2}$}}
\newcommand{\BLUE}[1]{{\color{blue} #1}}
\newcommand{\pr}{{\rm pr}}
\newcommand{\var}{{\rm var}}
\newcommand{\cov}{{\rm Rolin}}
\newcommand{\CC}[2]{C_{#2}^{#1}}
\newcommand{\PP}[2]{P_{#2}^{#1}}
\newcommand{\rolin}{{\rm Rolin}}
\newcommand{\T}{\!\top\!}
\newcommand{\ConditionAuthor}[1]{%
  \ifcase#1
    {
    \begin{aug}
        \author[A]{\fnms{FIRST }~\snm{AUTHOR}\ead[label=e1]{first@somewhere.com}},
        \author[C]{\fnms{SECOND}~\snm{AUTHOR}\ead[label=e2]{second@somewhere.com}},
        \author[A]{\fnms{THIRD}~\snm{AUTHOR}\ead[label=e3]{third@somewhere.com}}
        \and
        \author[D]{\fnms{FOURTH}~\snm{AUTHOR}\ead[label=e4]{fourth@somewhere.com}}
        \address[A]{Department, University or Company name\printead[presep={,\ }]{e1,e3}}
        
      \address[C]{Department, University or Company name\printead[presep={,\ }]{e2}}

        \address[D]{Department, University or Company name \printead[presep={,\ }]{e4}}
    \end{aug}
    }
  \else
    {
    \begin{aug}
        \author[A]{\fnms{Zhe}~\snm{Gao}\ead[label=e1]{gaozh8@mail.ustc.edu.cn}},
        \author[B]{\fnms{Roulin}~\snm{Wang}\ead[label=e2]{rlwang@sfs.ecnu.edu.cn}},
        \author[A]{\fnms{Xueqin}~\snm{Wang}\ead[label=e3]{wangxq20@ustc.edu.cn}}
        \and
        \author[C]{\fnms{Heping}~\snm{Zhang}\ead[label=e4]
        {heping.zhang@yale.edu}\orcid{0000-0000-0000-0000}}
        \address[A]{
        University of Science and Technology of China\printead[presep={,\ }]{e1,e3}}
        \address[B]{
        East China Normal University \printead[presep={,\ }]{e2}}
        \address[C]{
        Yale University \printead[presep={,\ }]{e4}}
    \end{aug}
    }
  \fi
}
\begin{document}

\begin{frontmatter}
\title{Studentized Tests of Independence: Random-Lifter approach} 
\runtitle{Random-Lifter}

\ConditionAuthor{1} 

\begin{abstract}

The exploration of associations between random objects with complex geometric structures 
has catalyzed the development of various novel statistical tests encompassing distance-based and kernel-based statistics. These methods have various strengths and limitations. One problem is that their test statistics tend to converge to asymptotic null distributions involving second-order Wiener chaos, which are hard to compute and need approximation or permutation techniques that use much computing power to build rejection regions. In this work, we take an entirely different and novel strategy by using the so-called ``Random-Lifter''. This method is engineered to yield test statistics with the standard normal limit under null distributions without the need for sample splitting. In other words, we set our sights on having simple limiting distributions and finding the proper statistics through reverse engineering. We use the Central Limit Theorems (CLTs) for degenerate U-statistics derived from our novel association measures to do this. As a result, the asymptotic distributions of our proposed tests are straightforward to compute. Our test statistics also have the minimax property. We further substantiate that our method maintains competitive power against existing methods with minimal adjustments to constant factors. Both numerical simulations and real-data analysis corroborate the efficacy of the Random-Lifter method.

\end{abstract}

\begin{keyword}[class=MSC]
\kwd[Primary ]{00X00}
\kwd{00X00}
\kwd[; secondary ]{00X00}
\end{keyword}

\begin{keyword}
\kwd{Random-lifter}
\kwd{Test of independence}
\kwd{Second-order Wiener chaos}
\kwd{Studentized test}
\end{keyword}

\end{frontmatter}

\section{INTRODUCTION}

In the dynamic landscape of statistical research, exploring relationships between variables is pivotal and becomes more challenging because scientific and technological advancements give rise to data objects with complex structures.
At the heart of this evolving field lies the investigation of interactions among random objects. This difficult but important field has led to the creation of many new statistical methods that can find dependencies in non-Euclidean data spaces, such as Hilbert spaces, Banach spaces, and Metric spaces.  Distance covariance \cite{szekely2007measuring, lyons2013, SZEKELY2013193, Gao2021}, Hilbert-Schmidt Independence Criterion (HSIC) \cite{gretton2005measuring, gretton2005kernel, gretton2007kernel, pfister2018kernel, Melisande2022}, Grothendieck's covariance \cite{jiang2023},  Ball covariance \cite{pan2019ball,wang2021nonparametric}, copulas-based test \cite{schweizer1981nonparametric, lopez2013randomized,zhang2019bet}, and rank-based test \citep{kim2020,shi2022,deb2023} are some of the most important new methods.

These approaches have resonated across a wide spectrum of applications, from theoretical statistics to practical machine learning, thanks to their nuanced capability to capture dependencies in unconventional data configurations.

These dependency measures stand out not only for their practical relevance but also for their mathematical sophistication and rigor. A notable attribute is the ``independence-zero equivalence'' property, which elegantly captures the essence of statistical independence by indicating that a measure's zero value uniquely signifies the independence of the random objects in question \citep{pmlr-vR5-gretton05a,NIPS2007_Fukumizu,JMLR:v11:sriperumbudur10a, lyons2013,jiang2023,wang2021nonparametric}. Moreover, these measures can be articulated as complex expected values of random objects, a formulation amenable to estimation via U(V)-statistics from sample data, thus providing an effective mechanism for testing independence hypotheses.

However, a challenge emerges with the asymptotic distributions of these test statistics under the null hypothesis because they generally follow certain second-order Wiener chaos distributions. This characteristic complicates our ability to use the asymptotic distribution, for example, to compute the quantiles and handicap the practical use of those test statistics. Typically, the null distributions are approximated through either asymptotic Gamma distributions or permutation techniques that are computationally intensive \cite{gretton2007kernel,li2019optimality}. 
Some rank-based methods can determine the distribution-free parameters under the null hypothesis, but they still have some computational complexity issues.

To deal with the conundrum of non-standard normal null limiting distributions in statistical tests, various strategies have been devised for particular scenarios. For instance, the block-averaged HISC involves segmenting data into smaller blocks, computing HSIC for each, and then averaging these calculations, leading to a limiting Gaussian distribution under the null hypothesis for a suitable choice of block size \cite{zhang2018large}. Similarly, the cross HSIC method, through sample splitting and studentization, achieves asymptotic normality under the null hypothesis \cite{shekhar2023permutation}. Studies into the asymptotic normality of kernel and distance-based statistics, including but not limited to Grothendieck's covariance\citep{jiang2023}, maximum mean discrepancy (MMD) \citep{gao2023two}, and distance covariance \citep{Gao2021}, have been conducted in high-dimensional contexts. These studies typically employ a martingale structure to the studentized test statistics and then apply the martingale central limit theorem. These results are contingent upon the simultaneous expansion of both dimension and sample size towards infinity, coupled with specific constraints on the eigenvalues of certain integral operators, which may be challenging to verify in practical settings.

To circumvent the challenges of non-standard normal limiting distributions, we aim to construct test statistics with the standard normal limit under the null hypothesis  without the need for sample splitting. The key idea is to introduce a novel methodology, termed "random-lifter," which imparts a random weight to the HISC.  This method is engineered to yield test statistics with the standard normal limit under null distributions without the need for sample splitting. Specifically, we use the Central Limit Theorems (CLTs) to degenerate U-statistics derived from novel HISC.

Leveraging this newly formulated HSIC, we propose novel dependence measures and corresponding independence test statistics. These measures and statistics exhibit several key features:
\begin{itemize}
\item They preserve the independence-zero equivalence property under mild conditions. (cf. Theorem \ref{Thm::iff})
\item The test statistics, while being degenerate U(V)-statistics, have their studentizations approaching an asymptotic standard normal null limit distribution, simplifying p-value computation and enhancing testing power without the need for sample splitting. (cf. Theorem \ref{Thm::randomH0}).
\item They demonstrate asymptotic normality under the alternative hypothesis, with a detailed comparison of their power functions relative to the existing test statistics. (cf. Theorem \ref{random:H1} and Theorem \ref{random:H12}).
\item They elucidate the impact of random-lifter technique on the power function and establish the separate rate of the minimax property. (cf. Theorem \ref{random:H12}, Theorem \ref{thm::minimax1} and Theorem \ref{thm::minimax2}).
\end{itemize}

The remainder of this paper is organized as follows: Section 2 introduces the random-lifter technique and elaborates on novel dependence measures and their corresponding independence test statistics.  Section 3 delves into the theoretical efficacy of these tests, particularly highlighting the minimax property.  Section 4 showcases case studies to validate and demonstrate the practical utility of the proposed method.  Section 5 offers concluding remarks, encapsulating our main contributions.

\section{Methodology} 

Consider two random objects,  $\ve{X}$ and $\ve{Y}$, which are defined in topological spaces $\mathcal{X}$ and $\mathcal{Y}$, respectively. These objects are associated with their corresponding Borel probability measures, $\mu$ for $\ve{X}$ and $\nu$ for $\ve{Y}$. Additionally, we have a joint Borel probability measure, denoted as $\omega$, defined on the Cartesian product space $\mathcal{X} \times \mathcal{Y}$, representing the pair $(\ve{X}, \ve{Y})$ as $(\ve{X}, \ve{Y}) \sim \omega$.

The goal is to perform an independence test to determine whether there exists a dependency between these two random objects, $\ve{X}$ and $\ve{Y}$, utilizing a sample of independent realizations ${(\vesub{X}{i}, \vesub{Y}{i})}_{i = 1}^n$. This hypothesis-testing scenario is formally articulated as follows:

\begin{equation}
 H_0 : \omega = \mu \otimes \nu  \quad \text{vs.} \quad H_1 : \omega \neq \mu \otimes \nu,
    \label{indep}
\end{equation}
where $\otimes$ symbolizes the tensor product of the measures.

Various test procedures have been developed to assess dependencies between $\ve{X}$ and $\ve{Y}$. While these procedures share some properties with earlier methods, such as consistency and convergence of the null limiting distribution to second-order Wiener chaos, accurately determining distributions under the null hypothesis poses challenges. Computing these procedures often requires computationally intensive permutation or bootstrap tests, which become particularly burdensome with larger sample sizes.

To address these computational difficulties, we propose a novel random lifter method that leverages the CLTs to degenerate U-statistics and enhance computational efficiency. In this paper, we demonstrate the effectiveness of our method using HISC, a dependence measure, as an example. However, it is important to note that our approach can also be applied to other dependence measures.

We first recall the  HISC. Given a reproducing kernel Hilbert spaces (RKHS)
$\mathcal{H}_{\zeta}$ with $\zeta: \ve{X} \times \ve{X} \rightarrow \mathbf{R}$  as the reproducing kernel, which is symmetric
and positive definite, and an RKHS
$\mathcal{H}_{\eta}$ with $\eta: \ve{Y} \times \ve{Y} \rightarrow \mathbf{R}$  as the reproducing kernel, the HISC is defined as
\begin{align*}
&\operatorname{HSIC}(\omega, \zeta, \eta)\\
=& E\left[\zeta\left(\vesub{X}{1}, \vesub{X}{2}\right)\eta\left(\vesub{Y}{1}, \vesub{Y}{2}\right)\right]+E\left[\zeta\left(\vesub{X}{1}, \vesub{X}{2}\right)\right]E\left[\eta\left(\vesub{Y}{1}, \vesub{Y}{2}\right)\right] \\&- 2E\left[\zeta\left(\vesub{X}{1}, \vesub{X}{2}\right)\eta\left(\vesub{Y}{1}, \vesub{Y}{3}\right)\right]\\= &  E\left\{\left[\zeta\left(\vesub{X}{1}, \vesub{X}{2}\right)+\zeta\left(\vesub{X}{3}, \vesub{X}{4}\right)-\zeta\left(\vesub{X}{1}, \vesub{X}{3}\right)-\zeta\left(\vesub{X}{2}, \vesub{X}{4}\right)\right]\right. \\
& \left.\cdot\left[\eta\left(\vesub{Y}{1}, \vesub{Y}{2}\right)+\eta\left(\vesub{Y}{3}, \vesub{Y}{4}\right)-\eta\left(\vesub{Y}{1}, \vesub{Y}{3}\right)-\eta\left(\vesub{Y}{2}, \vesub{Y}{4}\right)\right]\right\} .
\end{align*}
 
\subsection{Random-lifter technique}

Given a random lifter, which is indeed a random variable, say $Z$, is independent of $(\ve{X}, \ve{Y})$, then testing the independence of $\ve{X}$ and $\ve{Y}$ is equivalent to testing the independence of $\ve{X}$ and $(\ve{Y}, Z)$. Let's assume that $\kappa_b$ is a positive-definite kernel defined on $\mathbf{R} \times \mathbf{R}$ with a tuning parameter $b$, associated with an RKHS $\mathcal{H}_{\kappa_b}$. Since the tensor product of the positive-definite kernel is also positive-definite, $\eta\kappa_b$ becomes a new positive-definite kernel on the product space $\mathcal{Y} \times \mathbf{R}$. Then our new new random-lifter dependence 
measure is defined as 

\begin{align*}
    &T_{\zeta, \eta, \kappa_b}(\ve{X}, \ve{Y}) \\
    =& E\left[\zeta\left(\vesub{X}{1}, \vesub{X}{2}\right)\eta\left(\vesub{Y}{1}, \vesub{Y}{2}\right)\kappa_b(Z_1, Z_2)\right]+E\left[\zeta\left(\vesub{X}{1}, \vesub{X}{2}\right)\right]E\left[\eta\left(\vesub{Y}{1}, \vesub{Y}{2}\right)\kappa_b(Z_1, Z_2)\right] \\
& - 2E\left[\zeta\left(\vesub{X}{1}, \vesub{X}{2}\right)\eta\left(\vesub{Y}{1}, \vesub{Y}{3}\right)\kappa_b(Z_1, Z_3)\right].
\end{align*}
Since $Z$ is independent of $\ve{X}, \ve{Y}$, it allows for separate analysis, leading to the connection between random-lifter dependence measure and HSIC as
$$T_{\zeta, \eta, \kappa_b}(\ve{X}, \ve{Y}) = \operatorname{HSIC}(\omega, \zeta, \eta)E\left[\kappa_b(Z_1, Z_2)\right].$$

A consistent moment estimator of $T_{\zeta, \eta, \kappa_b}(\ve{X}, \ve{Y})$, denoted by $T_{n, b}$ is obtained as follows:
\begin{align*}
    &T_{n, b}\\ = & \frac{1}{ \PP{n}{2}}\sum_{i \neq j} \zeta(\vesub{X}{i}, \vesub{X}{j})\eta(\vesub{Y}{i}, \vesub{Y}{j})\kappa_b(Z_i, Z_j) -  \frac{2}{\PP{n}{3}} \sum_{i \neq j \neq s}  \zeta(\vesub{X}{i}, \vesub{X}{j})\eta(\vesub{Y}{i}, \vesub{Y}{s}) \kappa_b(Z_i, Z_s)\\ & + \frac{1}{\PP{n}{4}} \sum_{i \neq j\neq s \neq t} \zeta(\vesub{X}{i}, \vesub{X}{j})\eta(\vesub{Y}{s}, \vesub{Y}{t}) \kappa_b(Z_s, Z_t),
\end{align*}
where $\PP{n}{r}$ denotes the number of such $r$-permutation of $n$-set, $\PP{n}{r} = \frac{n!}{(n-r)!}$.

However, unlike the HSIC above, the null limit distribution of $T_{n, b}$ follows a normal distribution under weak conditions. Moreover, let $\zeta$ and $\varrho$ be the $n \times n$ Gram matrices with entries $\zeta(\vesub{X}{i}, \vesub{X}{j})$ and $\eta(\vesub{Y}{i}, \vesub{Y}{j})\kappa_b(Z_i, Z_j)$. Define $M = ((H\zeta H) \circ (H \varrho H))^{\cdot 2}$, where $ \circ$ is the entrywise matrix product and $M^{\cdot 2}$ is the entrywise matrix power. Let
\begin{align} \label{varest}
S_{n, b}^2 = \frac{2(n-4)(n-5)}{n^2(n-1)^2(n-2)(n-3)}\sum_{i\neq j}M_{ij}.
\end{align}
Then, $S_{n, b}^2$ serves as an estimate of the variance of $T_{n, b}$ according to Proposition \ref{Prop::var} in the next subsection, which will be discussed in the next subsection. Thus, we introduce the \underline{r}and\underline{o}m-\underline{l}ifter \underline{in}dependence (Rolin) test statistic as follows:
\begin{align}\label{ROLIN}
\rolin_{n,b}=\frac{T_{n, b}}{S_{n, b}}.
\end{align}
In the subsequent theorem, we will demonstrate that the test statistic $\rolin_{n,b}$ follows a standard normal distribution under the null hypothesis. Hence, for a given significance level $\alpha$, we only need to determine the $\alpha$-quantile $\Phi(\alpha)$ of the standard normal distribution. If $\rolin_{n,b} > \Phi(\alpha)$, we reject the null hypothesis and conclude that $\ve{X}$ and $\ve{Y}$ are dependent. Conversely, we do not reject the null hypothesis if $\rolin_{n,b} \leq \Phi(\alpha)$.

\subsection{Theoretical results}

In this subsection, we delve into the theoretical properties of our novel random-lifter dependence measure and its associated independence test statistic. We begin by discussing the properties associated with the random-lifter dependence measure. The following theorem establishes the fundamental property of our random-lifter dependence measure. 

\begin{theorem}[Independence-zero equivalence property]
The random-lifter dependence measure is non-negative, i.e.,
$T_{\zeta, \eta, \kappa_b}(\ve{X}, \ve{Y}) \ge 0$. Furthermore, if $\zeta, \eta, \kappa_b$ are positive-definite, the equality holds if and only if $\ve{X}$ and $\ve{Y}$ are independent.
\label{Thm::iff}
\end{theorem}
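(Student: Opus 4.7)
The plan is to exploit the factorization $T_{\zeta, \eta, \kappa_b}(\ve{X}, \ve{Y}) = \operatorname{HSIC}(\omega, \zeta, \eta)\, E[\kappa_b(Z_1, Z_2)]$ already established just above the theorem from the independence of $Z$ and $(\ve{X}, \ve{Y})$. This identity decouples the claim into two statements, one about each factor; the theorem then follows by combining non-negativity of both factors with strict positivity of the scalar factor and the classical zero-characterization of HSIC.

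For the scalar factor, I would introduce the canonical feature map $\phi_b$ of $\mathcal{H}_{\kappa_b}$ so that $\kappa_b(z_1, z_2) = \langle \phi_b(z_1), \phi_b(z_2) \rangle_{\mathcal{H}_{\kappa_b}}$, exchange expectation and inner product under the mild integrability condition $E[\kappa_b(Z, Z)] < \infty$, and obtain $E[\kappa_b(Z_1, Z_2)] = \| E\phi_b(Z) \|^2_{\mathcal{H}_{\kappa_b}} \geq 0$. Strict positivity follows from $E\phi_b(Z) \neq 0$, which is the standard consequence of $\kappa_b$ being (positive-definite and) characteristic, applied to the non-degenerate law of the chosen random lifter $Z$.

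For the HSIC factor, I would invoke the well-known RKHS representation $\operatorname{HSIC}(\omega, \zeta, \eta) = \| \mu_\omega - \mu_\mu \otimes \mu_\nu \|^2$ in the tensor-product RKHS $\mathcal{H}_\zeta \otimes \mathcal{H}_\eta$, where $\mu_\omega, \mu_\mu, \mu_\nu$ denote the respective mean embeddings. This identity yields $\operatorname{HSIC}(\omega, \zeta, \eta) \geq 0$ with equality iff the joint and product mean embeddings coincide; under the characteristic property of $\zeta, \eta$ this in turn is equivalent to $\omega = \mu \otimes \nu$, i.e., independence of $\ve{X}$ and $\ve{Y}$. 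Combining these two pieces, $T_{\zeta, \eta, \kappa_b} \geq 0$ with equality iff $\operatorname{HSIC}(\omega, \zeta, \eta) = 0$ (since the scalar factor is strictly positive), iff $\ve{X}$ and $\ve{Y}$ are independent.

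The main obstacle is really terminological rather than analytic: the zero-characterization step genuinely needs the kernels to be characteristic, which is somewhat stronger than ``positive-definite'' in the literal sense. I would pin down this strengthened interpretation at the start of the proof (noting that the usual working kernels such as the Gaussian satisfy it) so that the chain of implications above goes through cleanly; up to that convention, the proof is essentially a one-line corollary of existing HSIC theory applied to the factored expression for $T_{\zeta, \eta, \kappa_b}$.
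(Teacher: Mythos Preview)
Your proposal is correct and takes a genuinely different route from the paper. The paper does \emph{not} use the factorization $T_{\zeta,\eta,\kappa_b}=\operatorname{HSIC}(\omega,\zeta,\eta)\cdot E[\kappa_b(Z_1,Z_2)]$; instead it treats $T_{\zeta,\eta,\kappa_b}$ directly as an HSIC between $\ve{X}$ and the augmented variable $(\ve{Y},Z)$ with product kernel $\eta\kappa_b$, and then re-derives from scratch the injectivity of the mean-embedding maps $\beta_U,\beta_Q,\beta_{U\otimes Q}$ to conclude that $T=0$ forces $\omega'=\mu\otimes\nu'$ (hence $\ve{X}\perp(\ve{Y},Z)$, hence $\ve{X}\perp\ve{Y}$). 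Your approach is shorter and cleaner, since the factorization is already stated in the paper and reduces everything to the classical HSIC zero-characterization plus a one-line check that the scalar factor is strictly positive; the paper's approach is more self-contained but essentially reproves a standard HSIC lemma. One small point: your justification that $E[\kappa_b(Z_1,Z_2)]>0$ via the characteristic property is slightly off (injectivity on probability measures does not by itself rule out a mean embedding equal to zero); the cleanest fix in this paper's setting is to note either that Assumption~1(i) gives $\kappa_b\ge 0$ pointwise together with $Z$ having a density, or that Lemma~\ref{A1::lem::kerz} computes $E[\kappa_b(Z_1,Z_2)]=A_1 b+O(b^2)$ with $A_1>0$, or to invoke the (strictly) positive-definite/ISPD reading you already flag, under which $\|E\phi_b(Z)\|^2>0$ for any nonzero measure.
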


Next, we turn to the asymptotic properties of the test statistic $\rolin_{n,b}$. Since $\kappa_b$ can be constructed from some kernels, that is, $\kappa_b(z_1, z_2) =k\left(\frac{z_1 - z_2}{b}\right)$, we begin by introducing some assumptions on the kernel $\kappa$.
\begin{assumption}\label{ass:kerz} 
     Assume $\kappa$ satisfies
    
    (i) $\kappa \geq 0$; 

    (ii) $\int k(z) dz = 1$, $\int k^{2 + \delta}(z) dz < \infty$,  for some $\delta > 0$;

    (iii) $k(z)$ is symmetric with $z = 0$.
\end{assumption}
These assumptions on the kernel $k$ are widely employed in various methods, including those using Gaussian and Laplace kernels. Now, we define some constants and functions for further analysis:
  
Let $Z$ be a random variable defined on $\mathbf{R}$ with support $\mathcal{D}$, and let $g$ be the density function of $Z$. We define the set $\mathcal{D}(z) = \{u | z + bu \in \mathcal{D}\}$ and the function $F_i(z) = \int k^i(u) \mathcal{I}\{u \in \mathcal{D}(z)\} d u$. Additionally, we define the following constants:

\begin{align*}
A_1 = \int F_1(z) g^2(z) d z, \quad A_2 = \int F_1^2(z) g^3(z) d z, \quad A_3 = \int F_2(z) g^2(z) d z.
\end{align*}

Now, let us discuss the properties of the variance estimates:

\begin{proposition}\label{Prop::var}
    Under the null hypothesis, the variance of $T_{n, b}$ is 
    \begin{align*}
       & \mathrm{var}(T_{n, b}) \\
       = & \frac{2b \PP{n-4}{2}}{ \PP{n}{4}} \big\{E[\zeta^2(\vesub{X}{1}, \vesub{X}{2})] -2 E\big[\zeta(\vesub{X}{1}, \vesub{X}{2})\zeta(\vesub{X}{1}, \vesub{X}{3}) \big] + \big[E(\zeta(\vesub{X}{1}, \vesub{X}{2}))\big]^2 \big\} \\
        &\cdot \big\{ A_3 E [\eta^2(\vesub{Y}{1}, \vesub{Y}{2})]  - 2A_2 b E\big[\eta(\vesub{Y}{1}, \vesub{Y}{2})\eta(\vesub{Y}{1}, \vesub{Y}{3}) \big] + A_1^2 b \big[ E(\eta(\vesub{Y}{1}, \vesub{Y}{2})) \big]^2  \big\}\\& + O_p\left( \frac{b^2}{n^2} \right).
    \end{align*}
   Moreover, assuming that $E \big[\zeta(\vesub{X}{1}, \vesub{X}{2}) \big]^{2 + \delta} < \infty$, $E \big[\eta(\vesub{Y}{1}, \vesub{Y}{2}) \big]^{2 + \delta} < \infty $  hold for $\delta > 0$, and $b \to 0$, $nb \to \infty$, $S_{n, b}^2$ in Equation (\ref{varest}) is an estimate of $\mathrm{var}(T_{n, b})$.   
\end{proposition}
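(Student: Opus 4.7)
The plan is to establish the variance expression by writing $T_{n,b}$ as a degenerate U-statistic of order 4 under $H_0$ and computing its leading Hoeffding projection, then extracting the bandwidth scaling via the substitution $u = (z_2-z_1)/b$ to isolate the constants $A_1, A_2, A_3$.

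First, I would combine the three sums in the definition of $T_{n,b}$ into a single symmetric U-statistic of order 4 with kernel $h_b$ acting on $W_i = (\vesub{X}{i}, \vesub{Y}{i}, Z_i)$. Because $Z$ is independent of $(\ve{X}, \ve{Y})$ by construction and $\ve{X} \perp \ve{Y}$ under $H_0$, the three coordinates of $W_i$ are mutually independent, and the standard HSIC centering argument gives $E[h_b(W_1,\ldots,W_4)\mid W_1]=0$, so $T_{n,b}$ is degenerate of order 2. The variance identity for degenerate U-statistics then yields
\begin{align*}
\mathrm{var}(T_{n,b}) = \frac{\binom{4}{2}\binom{n-4}{2}}{\binom{n}{4}}\sigma_2^2 + O(n^{-3}),
\end{align*}
where $\sigma_2^2 = \mathrm{var}(h_{b,2}(W_1, W_2))$ with $h_{b,2}(w_1, w_2) = E[h_b\mid W_1=w_1, W_2=w_2]$. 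Mutual independence of the three coordinates under $H_0$ forces $h_{b,2}$ to factor (up to a symmetrization constant) as $\tilde\zeta(\vesub{X}{1},\vesub{X}{2}) \cdot \widetilde{\eta\kappa_b}(\vesub{Y}{1},Z_1,\vesub{Y}{2},Z_2)$, where $\tilde\zeta$ and $\widetilde{\eta\kappa_b}$ are the usual doubly centered versions. A direct computation then yields $E[\tilde\zeta^2] = E[\zeta^2(\vesub{X}{1},\vesub{X}{2})] - 2E[\zeta(\vesub{X}{1},\vesub{X}{2})\zeta(\vesub{X}{1},\vesub{X}{3})] + [E\zeta]^2$, which is the bracketed $X$-factor in the statement, and an analogous expansion delivers the $(Y,Z)$-factor.

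To extract the bandwidth scaling, I would evaluate the three moments of $\kappa_b$ that arise after expanding $E[\widetilde{\eta\kappa_b}^2]$: using $u = (z_2-z_1)/b$, Assumption \ref{ass:kerz}, and the independence $\ve{Y}\perp Z$, we obtain $E[\kappa_b^2(Z_1, Z_2)] = b A_3 + o(b)$, $E[\kappa_b(Z_1,Z_2)\kappa_b(Z_1,Z_3)] = b^2 A_2 + o(b^2)$, and $E[\kappa_b(Z_1,Z_2)]^2 = b^2 A_1^2 + o(b^2)$, with the truncation to $\mathcal{D}(z)$ accounting for the boundary of the support. Collecting leading-order contributions and the combinatorial factor $\binom{4}{2}\binom{n-4}{2}/\binom{n}{4}$ recovers the displayed formula, with the $O_p(b^2/n^2)$ remainder absorbing higher-order Hoeffding projections and Taylor residuals.

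For the consistency assertion, observe that $S_{n,b}^2$ equals a constant times $\sum_{i\neq j}[(H\zeta H)\circ(H\varrho H)]_{ij}^2$, which is precisely the sample analog of $E[h_{b,2}^2]$ after double centering, matched to the prefactor computed above. Taking expectations and using the $(2+\delta)$-moment conditions on $\zeta$ and $\eta$ together with $b\to 0$ and $nb\to\infty$ gives $E[S_{n,b}^2]/\mathrm{var}(T_{n,b})\to 1$; a parallel variance bound on $S_{n,b}^2$, obtained by decomposing the quartic sum into U-statistics over distinct 4-tuples and applying the moment bounds, yields $\mathrm{var}(S_{n,b}^2)=o(\mathrm{var}(T_{n,b})^2)$, and hence $S_{n,b}^2/\mathrm{var}(T_{n,b})\to 1$ in probability by Chebyshev. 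The principal obstacle is the bookkeeping in the third step: correctly tracking the combinatorial constants through the symmetrization of the three constituent sums of $T_{n,b}$, and honestly bounding the $O_p(b^2/n^2)$ remainder uniformly in the boundary corrections $\mathcal{D}(z)$.
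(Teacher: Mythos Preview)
Your plan for the variance formula is essentially the paper's: write $T_{n,b}$ as a degenerate order-4 U-statistic, identify the leading Hoeffding component $h_{b2}$, factor it under $H_0$ into the doubly-centered $\tilde\zeta$ times the doubly-centered $\eta\kappa_b$, and extract the constants $A_1,A_2,A_3$ from the moments of $\kappa_b$ via the substitution $u=(z_2-z_1)/b$. The paper adds one small observation you should make explicit in the symmetrization step: of the three terms in the partial symmetrization $\bar h_b=\tfrac{1}{3}[h_b(W_1,W_2,W_3,W_4)+h_b(W_3,W_2,W_1,W_4)+h_b(W_1,W_2,W_4,W_3)]$, the middle one satisfies $E[h_b(W_3,W_2,W_1,W_4)\mid W_1,W_2]=0$ and the other two coincide, which is what fixes the $1/6$ in front of $h_{b2}$ and ultimately the prefactor $2bP_2^{n-4}/P_4^n$.

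There is, however, a real gap in your consistency argument for $S_{n,b}^2$. You propose to control $\mathrm{var}(S_{n,b}^2)$ and apply Chebyshev, but $S_{n,b}^2$ is a sum of terms $[(H\zeta H)_{ij}(H\varrho H)_{ij}]^2$, so its variance involves fourth moments of the centered kernels, i.e.\ fourth moments of $\zeta$ and $\eta$. Under the stated hypothesis you only have $(2+\delta)$-moments for some $\delta>0$, which is not enough to make $\mathrm{var}(S_{n,b}^2)$ finite, let alone $o(\mathrm{var}(T_{n,b})^2)$. The paper avoids this by working in $L^{1+\delta/2}$ rather than $L^2$: it applies a Marcinkiewicz--Zygmund-type moment inequality for U-statistics (Theorem~1 of \citet{chen1980}) to show directly that
\[
E\Bigl[\Bigl|\tfrac{2}{n(n-1)b}\sum_{i<j}\eta^2(\vesub{Y}{i},\vesub{Y}{j})\kappa_b^2(Z_i,Z_j)-\tfrac{1}{b}E[\eta^2\kappa_b^2]\Bigr|^{1+\delta/2}\Bigr]=O\bigl((nb)^{-\delta/2}\bigr)\to 0,
\]
which needs exactly the $(2+\delta)$-moments assumed (together with $nb\to\infty$). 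Replace your Chebyshev step with this $L^{1+\delta/2}$ bound and the rest of your outline goes through.
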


Under the null hypothesis, the variance of $T_{n, b}$ is given by a complex expression. However, a key insight is that the variance exhibits the first order of $b,$ regardless of specific choices for $Z$ or $\kappa_b$. This observation signifies the role of $Z$ and $\kappa_b$ in achieving the asymptotic normality of the proposed test statistic. Additionally, $S_{n, b}^2$ can consistently estimate the variance of $T_{n, b}$.

We will now analyze the properties of the statistic $T_{n, b}$ and utilize U-statistics theory as our primary tool in this analysis.  The sample version of the random-lifter dependence measure corresponds to a U-statistic with the following kernel function:
\begin{align*}
    &h_b(\vesub{W}{1}, \vesub{W}{2}, \vesub{W}{3}, \vesub{W}{4}) \\
    = & \frac{1}{4} \big[\zeta(\vesub{X}{1}, \vesub{X}{2}) + \zeta(\vesub{X}{3}, \vesub{X}{4}) - \zeta(\vesub{X}{1}, \vesub{X}{3})- \zeta(\vesub{X}{2}, \vesub{X}{4})  \big]  \\
   & \cdot \big[\eta(\vesub{Y}{1}, \vesub{Y}{2})\kappa_b(Z_1, Z_2) + \eta(\vesub{Y}{3}, \vesub{Y}{4})\kappa_b(Z_3, Z_4)  - \eta(\vesub{Y}{1}, \vesub{Y}{3})\kappa_b(Z_1, Z_3)- \eta(\vesub{Y}{2}, \vesub{Y}{4})\kappa_b(Z_2, Z_4)  \big].
\end{align*}
The symmetric kernel is given by
\begin{align*}
    \Bar{h}_b (\vesub{W}{1}, \vesub{W}{2}, \vesub{W}{3}, \vesub{W}{4}) = & \frac{1}{3}\big[h(\vesub{W}{1}, \vesub{W}{2}, \vesub{W}{3}, \vesub{W}{4}) + h(\vesub{W}{3}, \vesub{W}{2}, \vesub{W}{1}, \vesub{W}{4}) \\
    &+ h(\vesub{W}{1}, \vesub{W}{2}, \vesub{W}{4}, \vesub{W}{3}) \big].
\end{align*}
With these kernels, we can express $T_{n, b}$ as: 
$$T_{n, b} = \frac{1}{\CC{n}{4}} \sum_{i < j < k < l} \Bar{h}_b (\vesub{W}{i}, \vesub{W}{j}, \vesub{W}{k}, \vesub{W}{l}), $$
where $\CC{n}{r}$ the number of $r$-combinations of an $n$-set, i.e., $\CC{n}{r} = \frac{n!}{(n-r)!r!}$.

In line with U-statistics theory, we find that $T_{n, b}$ serves as an unbiased estimator of $\cov^2_{\zeta, \eta, \kappa_b}(\ve{X}, \ve{Y})$. Importantly, due to the independence of $\ve{X}$ from $(\ve{Y}, Z)$, $T_{n, b}$ is a degenerate U-statistic, aligning with the conclusions drawn by most existing methods. However, in contrast to the majority of existing methods, the asymptotic distribution of $\rolin_{n,b}$ does not follow second-order Wiener chaos. In the Appendix, we demonstrate that by introducing the random-lifter, $\rolin_{n,b}$ can be expressed in terms of a martingale, enabling us to derive its asymptotic normality using the martingale central limit theorem. The key findings are summarized below:

\begin{theorem}[Null limit distribution]\label{Thm::randomH0}
 Under the null hypothesis, assuming that $E \big[\zeta(\vesub{X}{1}, \vesub{X}{2}) \big]^{2 + \delta} < \infty$, $E \big[\eta(\vesub{Y}{1}, \vesub{Y}{2}) \big]^{2 + \delta} < \infty $  hold for $\delta > 0$, and $b \to 0$, $nb \to \infty$, we have
  \begin{align*}
      \rolin_{n,b} \to N(0, 1).
  \end{align*}
\end{theorem}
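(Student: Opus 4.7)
The plan is to apply a martingale central limit theorem to the Hoeffding decomposition of $T_{n,b}$, and then invoke Slutsky's lemma together with the consistency statement for $S_{n,b}^2$ from Proposition \ref{Prop::var}.

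First, I would verify that under $H_0$ the symmetric kernel $\bar h_b$ is completely degenerate of order one. Since $\ve{X}\perp\ve{Y}$ under the null and $Z\perp(\ve{X},\ve{Y})$ by construction, conditioning on $\vesub{W}{1}$ in the product expression for $\bar h_b$ factorizes into the projection of the $\zeta$-bracket and the projection of the $\eta\kappa_b$-bracket; a direct computation shows each of the two brackets projects to zero because of its HSIC-style centered combination, so the first Hoeffding projection vanishes. Since $E[T_{n,b}]=0$ under the null by Theorem \ref{Thm::iff}, the Hoeffding decomposition becomes
\[
T_{n,b}=6\,H_n^{(2)}+4\,H_n^{(3)}+H_n^{(4)},
\]
where $H_n^{(c)}$ is the canonical $c$-th order U-statistic built from the $c$-th projection of $\bar h_b$.

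Second, I would show that $H_n^{(2)}$ is the dominant piece. A moment calculation along the lines of Proposition \ref{Prop::var} gives $\mathrm{var}(H_n^{(2)})\asymp b/n^{2}$ and $\mathrm{var}(H_n^{(c)})=O(b^{c-1}/n^{c})$ for $c=3,4$, so after rescaling by $S_{n,b}$ the higher-order terms are $o_p(1)$. Denoting the symmetrized second-order canonical kernel by $g_b$, I would use the standard martingale representation
\[
H_n^{(2)}=\binom{n}{2}^{-1}\sum_{j<k}g_b(\vesub{W}{j},\vesub{W}{k})=\sum_{k=2}^{n}D_{n,k},\qquad D_{n,k}=\binom{n}{2}^{-1}\sum_{j=1}^{k-1}g_b(\vesub{W}{j},\vesub{W}{k}),
\]
which is a martingale difference array with respect to $\mathcal{F}_k=\sigma(\vesub{W}{1},\ldots,\vesub{W}{k})$, since $E[g_b(\vesub{W}{j},\vesub{W}{k})\mid\mathcal{F}_{k-1}]=0$ by the degeneracy of $g_b$.

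Third, I would apply the martingale CLT (Hall and Heyde, Theorem 3.2; cf.\ Hall 1984 for bandwidth-dependent U-statistics) with normalizing variance $\sigma_n^2=\mathrm{var}(T_{n,b})$. Two conditions must be verified: (i) the conditional variance condition $\sigma_n^{-2}\sum_{k=2}^{n}E[D_{n,k}^{2}\mid\mathcal{F}_{k-1}]\stackrel{p}{\to}1$, which follows from a second-moment computation mirroring the derivation of $\mathrm{var}(T_{n,b})$ in Proposition \ref{Prop::var} and the independence of the $\vesub{W}{i}$; and (ii) a Lyapunov condition $\sigma_n^{-(2+\delta')}\sum_{k=2}^{n}E|D_{n,k}|^{2+\delta'}\to 0$ for some $\delta'\in(0,\delta]$. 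The main obstacle will be this Lyapunov bound: because $\kappa_b$ concentrates near the diagonal, $E|g_b(\vesub{W}{1},\vesub{W}{2})|^{2+\delta'}$ is only of order $b$ rather than $b^{2+\delta'}$, and the summation produces a factor of order $(nb)^{-\delta'/2}$ that vanishes precisely under the bandwidth regime $b\to 0$ and $nb\to\infty$, provided $E|\zeta(\vesub{X}{1},\vesub{X}{2})|^{2+\delta}<\infty$, $E|\eta(\vesub{Y}{1},\vesub{Y}{2})|^{2+\delta}<\infty$, and $\int k^{2+\delta}(z)\,dz<\infty$ from Assumption \ref{ass:kerz}. Once verified, the martingale CLT yields $T_{n,b}/\sigma_n\to N(0,1)$, and Proposition \ref{Prop::var} gives $S_{n,b}^{2}/\sigma_n^{2}\stackrel{p}{\to}1$, so Slutsky's lemma delivers $\rolin_{n,b}=T_{n,b}/S_{n,b}\to N(0,1)$.
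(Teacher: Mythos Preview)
Your overall architecture matches the paper's: establish first-order degeneracy, extract the second-order Hoeffding piece, run a martingale CLT on it, verify a Lyapunov bound of order $(nb)^{-\delta'/2}$ for condition (ii), and close with Proposition~\ref{Prop::var} and Slutsky. The Lyapunov step is handled correctly.

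There is, however, a genuine gap in your treatment of condition (i). You write that the conditional variance stability ``follows from a second-moment computation mirroring the derivation of $\mathrm{var}(T_{n,b})$ in Proposition~\ref{Prop::var}''. That computation only identifies the \emph{mean} of $\sum_k E[D_{n,k}^2\mid\mathcal{F}_{k-1}]$; it does not control its fluctuations. Expanding $E[D_{n,k}^2\mid\mathcal{F}_{k-1}]$ produces, besides a diagonal sum handled by a law of large numbers, a cross term $\sum_{j\neq l<k} G_b(\vesub{W}{j},\vesub{W}{l})$ with
\[
G_b(\vesub{w}{1},\vesub{w}{2})=E\bigl[g_b(\vesub{w}{1},\vesub{W}{3})\,g_b(\vesub{w}{2},\vesub{W}{3})\bigr].
\]
Showing this cross term is $o_p(\sigma_n^2)$ requires bounding $E[G_b^2(\vesub{W}{1},\vesub{W}{2})]$, which is a fourth-moment object involving $\kappa_b(Z_1,Z_3)\kappa_b(Z_2,Z_3)\kappa_b(Z_1,Z_4)\kappa_b(Z_2,Z_4)$. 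The paper carries out this integral explicitly and obtains $E[G_b^2]=O(b^3)$, so that $E[G_b^2]/(E[g_b^2])^2=O(b)\to 0$. This is precisely the step where $b\to 0$ (and not merely $nb\to\infty$) is used, and it is the mechanism by which the random-lifter converts the usual Wiener-chaos limit into a Gaussian one. Without it, your condition (i) is unproven.

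A minor point: your order claim $\mathrm{var}(H_n^{(c)})=O(b^{c-1}/n^c)$ is not what the paper shows; there one has $\sigma_{bc}^2=O(b)$ for $c=3,4$, giving $\mathrm{var}(H_n^{(c)})=O(b/n^c)$. Either bound suffices to make the higher-order pieces negligible, so this does not affect the argument.
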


The asymptotic normality of our random-lifter is a generalization of the martingale central limit theorem (MCLT). The theorem is stated as follows:
    Let $(R_i, \mathcal{F}_i)$ be a martingale difference sequence (MDS), if
    \begin{itemize}
        \item [(i)] $\sum_{i = 2}^n E[R_i^2 \mid \mathcal{F}_{i - 1}] \rightarrow \sigma^2$,
        \item [(ii)] $\forall \epsilon > 0$, $\frac{1}{\sigma^2}\sum_{i = 2}^n E[R_i^2 I\{R_i > \sigma\epsilon \}] \rightarrow 0$,
    \end{itemize}
    then $\sum_{ i = 1}^n R_i \rightarrow N(0, \sigma^2)$. 
     The first condition for applying the MCLT stipulates that the quadratic variation of the martingale must be bounded and converge to a specified constant. The second condition is analogous to the conditional Lindeberg condition.

    The primary limitation of the traditional HSIC method in achieving normality lies in its failure to satisfy the first condition of MCLT, which requires that the quadratic variation of the martingale converge to a certain constant. To address this issue, the random-lifter approach introduces a modification to the kernel based on a neighborhood of $Z$. This adjustment incorporates a random weight into the original kernel function.
    The random weight has the order $b$ and will approach $0$ when $n \rightarrow \infty$. This difference compresses the original kernel function. Consequently, under this modification, the quadratic variation of the martingale becomes bounded. Moreover, the condition $nb \rightarrow \infty$ ensures that the Lindeberg condition is still satisfied, even with the inclusion of the random-lifter term.

    In general, to reach asymptotic normality, we only need to impose weaker moment conditions on the kernel function $\zeta, \eta$ while ensuring that the order of the bandwidth $b$ falls between a constant and $\frac{1}{n}$.

To determine the convergence rate of $T_{n, b}/S_{n, b}$, we define the following functionals:
\begin{align*}
    g_x(\vesub{X}{1}, \vesub{X}{2}, \vesub{X}{3}, \vesub{X}{4}) &= \Tilde{\zeta}(\vesub{X}{1}, \vesub{X}{3})\Tilde{\zeta}(\vesub{X}{2}, \vesub{X}{3})\Tilde{\zeta}(\vesub{X}{1}, \vesub{X}{4})\Tilde{\zeta}(\vesub{X}{2}, \vesub{X}{4}), \\
    g_y(\vesub{Y}{1}, \vesub{Y}{2}, \vesub{Y}{3}, \vesub{Y}{4}) &= \eta(\vesub{Y}{1}, \vesub{Y}{3})\eta(\vesub{Y}{2}, \vesub{Y}{4})\eta(\vesub{Y}{1}, \vesub{Y}{4})\eta(\vesub{Y}{2}, \vesub{Y}{4}),
\end{align*}
 where 
 $$\Tilde{\zeta}(\vesub{X}{1}, \vesub{X}{2}) = \zeta(\vesub{X}{1}, \vesub{X}{2}) - E\big[\zeta(\vesub{X}{1}, \vesub{X}{3}) \mid \vesub{X}{1}\big] - E\big[\zeta(\vesub{X}{2}, \vesub{X}{4}) \mid \vesub{X}{2}\big] + E\big[\zeta(\vesub{X}{3}, \vesub{X}{4})\big].$$
  In Theorem~\ref{Thm::randomH0}, we treat $T_{n, b}/S_{n, b}$ as a martingale, then we can use Berry-Esseen bound to find an upper bound of $\sup _{t \in \mathbf{R}}  \left|\pr\left(T_{n, b}/S_{n, b}\le t\right)-\Phi(t)\right|$.
\begin{theorem}\label{random:rate} If $E\left[\zeta(\vesub{X}{1}, \vesub{X}{2}) \right]^{2 + 2\delta} < \infty$, $ E\left[\eta(\vesub{Y}{1}, \vesub{Y}{2}) \right]^{2 + 2\delta} < \infty$ for some $ \delta \in (0,1)$, $b \rightarrow 0$, $nb \rightarrow \infty$, then, under the null hypothesis,
\begin{align*}
      &\sup _{t \in \mathbf{R}}  \left|\pr\left(\rolin_{n,b}\le t\right)-\Phi(t)\right| \\ \lesssim & \Big\{\frac{1}{(nb)^{\delta}} \frac{E [\Tilde{\zeta}(\vesub{X}{1}, \vesub{X}{2}) \eta(\vesub{Y}{1}, \vesub{Y}{2})]^{2+2\delta} }{ \big[ E [\Tilde{\zeta}(\vesub{X}{1}, \vesub{X}{2}) \eta(\vesub{Y}{1}, \vesub{Y}{2})]^{2} \big]^{1+\delta}}\\
      & +b^{\frac{1+\delta}{2}} \frac{\left\{E \left[g_x(\vesub{X}{1}, \vesub{X}{2}, \vesub{X}{3}, \vesub{X}{4}) g_y(\vesub{Y}{1}, \vesub{Y}{2}, \vesub{Y}{3}, \vesub{Y}{4})\right]  \right\} ^{\frac{1+\delta}{2}}}{\big[ E [\Tilde{\zeta}(\vesub{X}{1}, \vesub{X}{2}) \eta(\vesub{Y}{1}, \vesub{Y}{2})]^{2} \big]^{1+\delta}}  \Big\}^{\frac{1}{3+2\delta}}.
  \end{align*}
\end{theorem}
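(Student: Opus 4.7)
The plan is to recast $\rolin_{n,b}$ as a (nearly) standardized martingale and then invoke a martingale Berry--Esseen inequality of Heyde--Brown/Haeusler type, whose characteristic exponent $1/(3+2\delta)$ already matches the exponent appearing in the stated bound.

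First I would carry out the Hoeffding decomposition of the degree-$4$ U-statistic $T_{n,b}$ built from the symmetric kernel $\bar h_b$. Under $H_0$, $\ve{X}$ is independent of $(\ve{Y},Z)$, and a direct computation shows that the first-order Hoeffding projection vanishes, so $T_{n,b}$ is completely degenerate with leading second-order component of the form $c\,\tilde\zeta(\vesub{X}{1},\vesub{X}{2})\bigl[\eta(\vesub{Y}{1},\vesub{Y}{2})\kappa_b(Z_1,Z_2)-\text{centering}\bigr]$ for an explicit constant $c$, where $\tilde\zeta$ is the doubly-centered version used before the theorem. Writing $D_j = \frac{c'}{n(n-1)}\sum_{i<j} h_{b,2}(\vesub{W}{i},\vesub{W}{j})$ produces a martingale-difference sequence with respect to $\mathcal{F}_j=\sigma(\vesub{W}{1},\ldots,\vesub{W}{j})$, because the degeneracy of $h_{b,2}$ forces $E[D_j\mid\mathcal{F}_{j-1}]=0$. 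Up to strictly smaller-order degenerate remainders from the Hoeffding expansion, $T_{n,b}/S_{n,b}$ equals $\sigma_n^{-1}\sum_{j=2}^n D_j$ asymptotically, where $\sigma_n^2$ is the variance of $T_{n,b}$ supplied by Proposition~\ref{Prop::var} and $S_{n,b}^2/\sigma_n^2$ is consistent for $1$ by the same proposition.

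Next I apply Haeusler's martingale Berry--Esseen inequality in its $(2+2\delta)$-moment form, yielding
\[
\sup_t \Bigl|\pr\Bigl(\sigma_n^{-1}\sum_{j=2}^n D_j\le t\Bigr)-\Phi(t)\Bigr| \lesssim (L_n + Q_n)^{1/(3+2\delta)},
\]
where $L_n=\sigma_n^{-(2+2\delta)}\sum_j E|D_j|^{2+2\delta}$ is the Lyapunov term and $Q_n=E\bigl|\sigma_n^{-2}\sum_j E[D_j^2\mid\mathcal{F}_{j-1}]-1\bigr|^{1+\delta}$ is the conditional-variance fluctuation. The two summands on the right-hand side of the theorem correspond exactly to these two quantities. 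For $L_n$, I combine the assumed $(2+2\delta)$-moments on $\zeta$ and $\eta$ with the scaling $E|\kappa_b(Z_1,Z_2)|^{2+2\delta}\asymp b\int k^{2+2\delta}(u)\,du$ (a change of variables under Assumption~\ref{ass:kerz}(ii)) and with $\sigma_n^2\asymp b/n^2$ from Proposition~\ref{Prop::var}; Rosenthal's inequality applied to the conditionally-i.i.d.\ sum defining $D_j$, followed by an exponent count, collapses to the factor $(nb)^{-\delta}$ multiplying the normalized $(2+2\delta)$-moment ratio of $\tilde\zeta\,\eta$. For $Q_n$, the variance of $\sum_j E[D_j^2\mid\mathcal{F}_{j-1}]$ reduces to expectations of pairwise products of centered conditional kernels; the four indices that remain after the degeneracy constraints produce precisely the functionals $g_x$ and $g_y$ defined before the theorem, while the accompanying $\kappa_b$ integrals supply one extra power of $b$. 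Dividing by $\sigma_n^{2(1+\delta)}$ then yields the $b^{(1+\delta)/2}$ factor in the second bracket.

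The main obstacle will not be invoking Haeusler's inequality itself but the $b$-bookkeeping in the bounds for $L_n$ and $Q_n$: all moments of $\kappa_b$ are of the same order $b$, yet normalization by $\sigma_n^2\asymp b/n^2$, Rosenthal decoupling, and the number of surviving index tuples combine these into different effective $b$-powers in the Lyapunov and variance-fluctuation terms, and these exponents must be tracked carefully to reproduce $(nb)^{-\delta}$ and $b^{(1+\delta)/2}$ exactly. In parallel, one must verify that the higher-order degenerate remainders discarded in the Hoeffding step contribute quantities absorbed into both $L_n$ and $Q_n$; each additional order of degeneracy saves a factor of $n^{-1}$ in variance, but a separate moment calculation is needed to confirm that neither bound is inflated.
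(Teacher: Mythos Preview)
Your proposal is correct and follows essentially the same route as the paper: Hoeffding decomposition to the degenerate second-order kernel $h_{b,2}$, martingale differences $D_j$, Haeusler's Berry--Esseen inequality producing the exponent $1/(3+2\delta)$, Rosenthal for the Lyapunov term $L_n$, and a four-index computation for the conditional-variance fluctuation $Q_n$ that produces $g_x,g_y$. One point to make explicit: the passage from the Berry--Esseen bound on $\sigma_n^{-1}\sum D_j$ to the Studentized statistic $T_{n,b}/S_{n,b}$ is done in the paper by the decomposition $\sup_t|\pr(T_{n,b}/S_{n,b}\le t)-\Phi(t)|\le 2\sup_t|\pr(nT_{n,b}/(6\sqrt{2}\sigma_{b2})\le t)-\Phi(t)|+\sup_t|\Phi(t)-\Phi(t\sqrt{1\pm r})|+2\pr(|n^2S_{n,b}^2/(72\sigma_{b2}^2)-1|\ge r)$, followed by Markov's inequality on the last probability and the choice $r=b^{(1+\delta)/(2(3+2\delta))}$; your sentence ``$S_{n,b}^2/\sigma_n^2$ is consistent'' is the right idea but the quantitative rate from Proposition~\ref{Prop::var} is what closes the argument.
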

The results in Theorem~\ref{random:rate} give a non-asymptotic Berry-Esseen bound of our statistic.
Unlike the high-dimensional setting mentioned in \citep{Gao2021}, we can treat the three terms in the bound 
\begin{align*}
    E [\Tilde{\zeta}(\vesub{X}{1}, \vesub{X}{2}) \eta(\vesub{Y}{1}, \vesub{Y}{2})]^{2+2\delta}, \big[ E [\Tilde{\zeta}(\vesub{X}{1}, \vesub{X}{2}) \eta(\vesub{Y}{1}, \vesub{Y}{2})]^{2} \big]^{1+\delta}, \\
    \left\{E \left[g_x(\vesub{X}{1}, \vesub{X}{2}, \vesub{X}{3}, \vesub{X}{4}) g_y(\vesub{Y}{1}, \vesub{Y}{2}, \vesub{Y}{3}, \vesub{Y}{4})\right]  \right\} ^{\frac{1+\delta}{2}}
\end{align*}
as constants or $O(1)$. Then the convergence rate of normal approximation is determined by $(nb)^{-\delta}, b^{(1 + \delta)/2}$. The bandwidth $b$ should satisfy $nb \rightarrow \infty$, which can guarantee $(nb)^{-\delta}, b^{(1 + \delta)/2}$ all tend to zero and ensure that the normal approximation is valid. Let $(nb)^{-\delta} = b^{(1 + \delta)/2}$ and optimize with $\delta$. We can obtain the optimal bandwidth as $b = n^{-1/2}$.

We then delve into the asymptotic properties of the test statistic under the alternative hypothesis. Drawing on the H-decomposition of the U-statistic, we reveal the asymptotic distribution of the proposed test statistic, as summarized below:

\begin{theorem}[Asymptotic distribution under alternative hypothesis]\label{random:H1}
  Under the alternative hypothesis, we have
  \begin{align*}
      \frac{\sqrt{n}(T_{n, b} - T_{\zeta, \eta, \kappa_b}(\ve{X}, \ve{Y}))}{4 \sigma_{b1}} \to N(0, 1),
  \end{align*}
  where $\sigma_{b1}^2 = \mathrm{var}(E\big[\Bar{h}_b(\vesub{W}{1}, \vesub{W}{2}, \vesub{W}{3}, \vesub{W}{4})|\vesub{W}{1} \big])$.

\end{theorem}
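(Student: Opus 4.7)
The plan is to recognize $T_{n,b}$ as a non-degenerate U-statistic of order 4 under the alternative hypothesis and to apply Hoeffding's H-decomposition, keeping only the leading linear term and then invoking the classical CLT for sums of i.i.d.\ random variables. This contrasts with the null-hypothesis case (Theorem~\ref{Thm::randomH0}), where the U-statistic is degenerate and a martingale-CLT argument is required.

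First, I would apply Hoeffding's H-decomposition,
\begin{align*}
T_{n,b} - T_{\zeta,\eta,\kappa_b}(\ve{X},\ve{Y}) = \sum_{c=1}^{4}\binom{4}{c}\, U_n^{(c)},
\end{align*}
where each $U_n^{(c)}$ is a completely degenerate U-statistic of order $c$ built from the canonical projection $h_c^{\ast}$ of $\Bar{h}_b$. The first-order piece is simply an i.i.d.\ sum,
\begin{align*}
4\, U_n^{(1)} = \frac{4}{n}\sum_{i=1}^n \big\{h_1(\vesub{W}{i}) - T_{\zeta,\eta,\kappa_b}(\ve{X},\ve{Y})\big\}, \quad h_1(w) = E\big[\Bar{h}_b(\vesub{W}{1},\vesub{W}{2},\vesub{W}{3},\vesub{W}{4})\mid \vesub{W}{1}=w\big],
\end{align*}
so that $\mathrm{var}(h_1(\vesub{W}{1})) = \sigma_{b1}^2$ by definition.

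Second, I would apply the Lyapunov/Lindeberg--Feller CLT to this i.i.d.\ centered sum. Using the moment conditions $E[\zeta(\vesub{X}{1},\vesub{X}{2})]^{2+\delta}<\infty$, $E[\eta(\vesub{Y}{1},\vesub{Y}{2})]^{2+\delta}<\infty$, together with the independence of $Z$ from $(\ve{X},\ve{Y})$ and Assumption~\ref{ass:kerz}, the $(2+\delta)$-th absolute moment of $h_1$ is finite, yielding
\begin{align*}
\frac{1}{\sqrt{n}\,\sigma_{b1}}\sum_{i=1}^n \big\{h_1(\vesub{W}{i})- T_{\zeta,\eta,\kappa_b}(\ve{X},\ve{Y})\big\} \;\to\; N(0,1).
\end{align*}

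Third, I would show the higher-order terms are negligible. By orthogonality of the H-projections, $\mathrm{var}(U_n^{(c)})= O(n^{-c}\mathrm{var}(h_c^{\ast}))$ for $c\ge 2$, so it suffices to verify that $n^{-(c-1)}\mathrm{var}(h_c^{\ast})/\sigma_{b1}^2 \to 0$ for $c=2,3,4$. Here I would exploit the product structure $\eta\kappa_b$ and reuse the bandwidth expansions of integrals of $\kappa_b$ against the density of $Z$ already developed for Proposition~\ref{Prop::var}, to track the leading order in $b$ of both $\sigma_{b1}^2$ and $\mathrm{var}(h_c^{\ast})$. Combining with Slutsky's lemma then yields the stated conclusion.

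The main obstacle will be this last step. Under the alternative, $T_{\zeta,\eta,\kappa_b}(\ve{X},\ve{Y})=\mathrm{HSIC}(\omega,\zeta,\eta)\,E[\kappa_b(Z_1,Z_2)]$ is of order $b$, so $h_1$ vanishes at rate $b$ and $\sigma_{b1}^2$ is correspondingly small; meanwhile, the projections $h_c^{\ast}$ for $c\ge 2$ retain $\kappa_b$-factors with indices that are \emph{not} averaged out, so their variances decay only as powers of $b$ governed by $\int k^j$-integrals. A delicate, bandwidth-dependent comparison, using $b\to 0$ together with $nb\to\infty$, is required to confirm that $n^{c-1}\sigma_{b1}^2$ dominates $\mathrm{var}(h_c^{\ast})$ for each $c=2,3,4$. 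Once this accounting is in place, the argument reduces to the standard non-degenerate U-statistic CLT.
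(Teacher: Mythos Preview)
Your proposal is correct and follows essentially the same approach as the paper: verify that under $H_1$ the first-order projection $h_{b1}(\vesub{W}{1})=E[\Bar h_b\mid \vesub{W}{1}]$ is nonzero, apply the CLT to the i.i.d.\ sum $\sum_i\{h_{b1}(\vesub{W}{i})-T_{\zeta,\eta,\kappa_b}\}$, and absorb the higher-order H-decomposition pieces as lower-order remainders. The paper's proof is in fact terser than yours---it writes out $h_{b1}(\vesub{W}{1})$ explicitly, observes it is of order $b$ with finite second moment, invokes ``the U-statistic convergence theorem'' for the linear part, and closes with ``the desired result follows,'' leaving the bandwidth-dependent remainder bookkeeping you flag in your third step entirely implicit.
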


Under the assumption of employing the same kernels $\zeta$ and $\eta$ for both HSIC and our proposed method, we denote the respective powers of HSIC and our method as $K_{HSIC}$ and $K_{ROLIN}$. We can get the following result about the test powers.
\begin{theorem}[Power functions]\label{random:H12}
    Under the alternative hypothesis, we have
    \begin{align*}
        & K_{ROLIN} =  \Phi\left(\frac{ \sqrt{\tilde{H}_1 + \tilde{H}_2}  }{\sqrt{\frac{A_2}{A_1^2}  \cdot \tilde{H}_1 +   \tilde{H}_2}} \Phi^{-1}(K_{HSIC}) + O_p(\frac{1}{\sqrt{nb}}) \right),
    \end{align*}
    where $\tilde{H}_1$ and $\tilde{H}_2$ are some quantities that we omit the exact form here due to their complex expressions, but we will provide them in the Appendix.
\end{theorem}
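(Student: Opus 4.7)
The plan is to combine Theorem \ref{random:H1} (and its HSIC analogue for non-degenerate U-statistics) with a variance decomposition driven by the independence of the random lifter $Z$ from $(\ve{X},\ve{Y})$. First, the asymptotic normality of both statistics under $H_1$ yields, up to negligible error,
\begin{equation*}
    K_{HSIC} \approx \Phi\!\Bigl(\frac{\sqrt{n}\,\operatorname{HSIC}(\omega,\zeta,\eta)}{4\sigma_1}\Bigr),\qquad K_{ROLIN} \approx \Phi\!\Bigl(\frac{\sqrt{n}\,T_{\zeta,\eta,\kappa_b}(\ve{X},\ve{Y})}{4\sigma_{b1}}\Bigr),
\end{equation*}
where the critical-value contributions vanish: the Wiener-chaos quantile defining the HSIC cutoff is $O(1/n)$, and the analogous term $z_\alpha\sqrt{n}\,S_{n,b}/(4\sigma_{b1})$ for $\rolin_{n,b}$ is $O_p(1/\sqrt{nb})$ since $S_{n,b}\asymp\sqrt{b}/n$ by Proposition \ref{Prop::var} while $\sigma_{b1}\asymp b$, as verified below.

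Next, I would unpack $\sigma_{b1}^2$ via the Hoeffding projection. Let $\psi(\vesub{X}{1},\vesub{Y}{1}) := E[\Bar{h}_{\mathrm{HSIC}}(\vesub{W}{1},\ldots,\vesub{W}{4})\mid \vesub{X}{1},\vesub{Y}{1}]$ be the HSIC first-order projection, and split it as $\psi=\psi_1+\psi_2$, where $\psi_1$ collects those summands of $\Bar{h}_{\mathrm{HSIC}}$ whose $\eta$-factor carries index~$1$ (i.e., whose $\kappa_b$-counterpart in $\Bar{h}_b$ touches $Z_1$) and $\psi_2$ the rest. Because $Z\perp(\ve{X},\ve{Y})$, the random-lifter first-order projection reduces to
\begin{equation*}
    E[\Bar{h}_b(\vesub{W}{1},\ldots,\vesub{W}{4})\mid \vesub{W}{1}] = b\,g(Z_1)\,\psi_1(\vesub{X}{1},\vesub{Y}{1}) + b\,A_1\,\psi_2(\vesub{X}{1},\vesub{Y}{1}) + o(b),
\end{equation*}
since each $\kappa_b$-factor not touching $Z_1$ integrates to $E[\kappa_b(Z_i,Z_j)]\to bA_1$ while one that touches $Z_1$ integrates to $E[\kappa_b(Z_1,Z_j)\mid Z_1]\to bg(Z_1)$. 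Subtracting the mean $bA_1\operatorname{HSIC}(\omega,\zeta,\eta) = bA_1 E[\psi_1+\psi_2]$, this centered projection equals $b[(g(Z_1)-A_1)\psi_1 + A_1(\psi-E[\psi])]$, and the independence of $Z_1$ from $(\vesub{X}{1},\vesub{Y}{1})$ forces the cross-covariance to vanish exactly, leaving
\begin{equation*}
    \sigma_{b1}^2 = b^2\bigl[(A_2-A_1^2)\,E[\psi_1^2] + A_1^2\,\mathrm{var}(\psi)\bigr] + o(b^2) = b^2 A_1^2\!\left[\tfrac{A_2}{A_1^2}\tilde H_1 + \tilde H_2\right] + o(b^2),
\end{equation*}
with $\tilde H_1 := E[\psi_1^2]$ and $\tilde H_2 := \mathrm{var}(\psi)-E[\psi_1^2]$, so that $\sigma_1^2 = \mathrm{var}(\psi) = \tilde H_1 + \tilde H_2$.

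Finally, substituting $T_{\zeta,\eta,\kappa_b}(\ve{X},\ve{Y}) = bA_1\operatorname{HSIC}(\omega,\zeta,\eta)$ (immediate from $Z\perp(\ve{X},\ve{Y})$) gives
\begin{equation*}
    \frac{\sqrt{n}\,T_{\zeta,\eta,\kappa_b}(\ve{X},\ve{Y})}{4\sigma_{b1}} = \frac{\sqrt{n}\,\operatorname{HSIC}(\omega,\zeta,\eta)}{4\sqrt{\tfrac{A_2}{A_1^2}\tilde H_1+\tilde H_2}} = \frac{\sqrt{\tilde H_1+\tilde H_2}}{\sqrt{\tfrac{A_2}{A_1^2}\tilde H_1+\tilde H_2}}\,\Phi^{-1}(K_{HSIC}),
\end{equation*}
which yields the claimed identity after absorbing the $O_p(1/\sqrt{nb})$ remainder from the critical value and from the error in $S_{n,b}$ relative to $\sigma_{b1}/\sqrt{n}$.

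The main obstacle I anticipate is the Hoeffding bookkeeping: one must carefully enumerate which of the sixteen summands in $\Bar{h}_b$ produces a $bg(Z_1)$ versus a $bA_1$ factor, and then argue that the cross-covariance between $(g(Z_1)-A_1)\psi_1$ and $\psi$ is exactly zero rather than merely $o(1)$ --- this relies on the product factorization $E[(g(Z_1)-A_1)\,\Psi(\vesub{X}{1},\vesub{Y}{1})]=0$ and is precisely what drives the clean final ratio. A secondary subtlety is sharpening Proposition \ref{Prop::var} so that $S_{n,b}$ remains a consistent proxy for the dominant part of $\sigma_{b1}/\sqrt{n}$ under the alternative up to the advertised $O_p(1/\sqrt{nb})$ precision, which I expect to follow from a delta-method argument applied to the U-statistic expansion of $S_{n,b}^2$.
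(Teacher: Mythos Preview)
Your proposal is correct and mirrors the paper's own proof: both combine Theorem \ref{random:H1} with the identity $T_{\zeta,\eta,\kappa_b}=bA_1\cdot\operatorname{HSIC}$ and a decomposition of $\sigma_{b1}^2$ according to whether each $\kappa_b$-factor in the first Hoeffding projection touches $Z_1$ (yielding the $A_2$ weight on $\tilde H_1$) or not (yielding $A_1^2$ on $\tilde H_2$), with the $S_{n,b}$-induced remainder handled as $O_p(1/\sqrt{nb})$ via Proposition \ref{Prop::var}. The paper expands $\sigma_{b1}^2$ term-by-term (Lemmas \ref{A3::lem::exp}--\ref{A3::lem::var}) whereas you pass through the orthogonal split $(g(Z_1)-A_1)\psi_1+A_1(\psi-E\psi)$, but this is only a cosmetic repackaging of the same computation; the one minor slip is that in general $E[\kappa_b(Z_1,Z_2)\mid Z_1]=bF_1(Z_1)g(Z_1)$ rather than $bg(Z_1)$, which matters only when $Z$ does not have full support on $\mathbf{R}$.
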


\begin{remark}
    The constant $A_2 / A_1^2$ reflects the impact of different choices of the kernel $\kappa_b$ or $Z$ on the random-lifter method. The key properties of $A_1, A_2, A_3$ are described in LEMMA 1.1. In order to give a more in-depth understanding of these constants, we consider the case where the domain of $Z$ is $\mathbf{R}$, as we mentioned in Remark 1. For example, we can choose $Z$ to be a Gaussian distribution. Then, we know from Assumptions (ii) and (iii) of $\kappa_b$ that the choice of $\kappa_b$ does not affect the value
    \begin{align*}
        A_1 = \int g^2(z) dz = \frac{1}{2\sqrt{\pi}} \approx 0.2821, \quad A_2 = \int g^3(z) dz  = \frac{1}{2\sqrt{3}\pi} \approx 0.0919.
    \end{align*}
 Now, we can obtain $A_2 / A_1^2 = \frac{2}{\sqrt{3}} \approx 1.1547$. If we choose the t-distribution with degree $3$ as an example, the value of the constant $A_2 / A_1^2$ is $1.2624$. We see that a normal distribution is a better choice than a t-distribution.

    If $Z$ is a bounded random variable, the computation of $A_1, A_2$ becomes more complicated. 
    For example, we choose $Z$ as a beta distribution with shape parameters $\alpha = 2, \beta = 1$ and $\kappa_b$ as the Laplace kernel. $A_2/A_1^2$ ranges from 1.08 to 1.09, which is better than the above situation. For the simulation studies in our paper, we also choose the beta distribution and Laplace kernel.
    
\end{remark}

Moving on to the alternative hypothesis related to the sample size $n$, we determine the minimax rate of the proposed test method. We begin by introducing assumptions concerning the kernels:

\begin{assumption}\label{ass1}
 $\zeta$, $\eta$ are shift-invariant, i.e.  $\zeta(\vesub{X}{1}, \vesub{X}{2}) = \zeta(\vesub{X}{1} - \vesub{X}{2})$, $\eta(\vesub{Y}{1}, \vesub{Y}{2}) = \eta(\vesub{Y}{1} - \vesub{Y}{2})$, and $\int \zeta\left(\vesub{X}{1}, \vesub{X}{2}\right) d\vesub{X}{1} = 1, \int \eta\left(\vesub{Y}{1}, \vesub{Y}{2}\right) d \vesub{Y}{1} = 1$.
\end{assumption}

With this assumption, we establish the following result:

\begin{theorem}[Minimax Rate I]\label{thm::minimax1}
  Under the alternative hypothesis and assumption \ref{ass1}, for $0 < \beta < 1$, there exists a constant $B(\beta)$ depending only on $\beta$, such that if the condition
$$
T_{\zeta, \eta, \kappa_b}(\ve{X}, \ve{Y}) \geq B(\beta)  \left(\frac{b}{\sqrt{n}} + \frac{\sqrt{b\left\|\zeta\right\|_{\infty}\left\|\eta\right\|_{\infty}}}{n}\right)
$$
holds for sufficiently large $n$, then
$\pr_{H_1}\left(\rolin_{n,b} \leq \Phi(\alpha)\right) \leq \beta$.
\end{theorem}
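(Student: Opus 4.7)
The plan is to bound the type II error via a standard Chebyshev-plus-Markov decomposition, combining control of the mean with concentration of both the U-statistic $T_{n,b}$ and the variance estimator $S_{n,b}^2$ under $H_1$. First I would rewrite
\begin{align*}
\pr_{H_1}(\rolin_{n,b}\le\Phi(\alpha))
= \pr_{H_1}\!\left(T_{n,b}\le\Phi(\alpha)S_{n,b}\right),
\end{align*}
and introduce the threshold $\tau = T_{\zeta,\eta,\kappa_b}(\ve{X},\ve{Y})/2$. A union bound then gives
\begin{align*}
\pr_{H_1}(\rolin_{n,b}\le\Phi(\alpha))
\le \pr_{H_1}\!\left(T_{n,b}-T_{\zeta,\eta,\kappa_b}(\ve{X},\ve{Y})\le -\tau\right)
 + \pr_{H_1}\!\left(\Phi(\alpha)S_{n,b}\ge \tau\right).
\end{align*}

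Next I would handle the two pieces separately. Chebyshev's inequality on the first term yields an upper bound of $4\,\mathrm{Var}_{H_1}(T_{n,b})/T_{\zeta,\eta,\kappa_b}^2$, while Markov's inequality on the second gives $\Phi(\alpha)^2 E_{H_1}[S_{n,b}^2]/\tau^2$. Therefore the proof reduces to providing upper bounds of the right order on $\mathrm{Var}_{H_1}(T_{n,b})$ and $E_{H_1}[S_{n,b}^2]$ — specifically, of order $b^2/n + b\|\zeta\|_\infty\|\eta\|_\infty/n^2$ for the former and $b\|\zeta\|_\infty\|\eta\|_\infty/n^2$ for the latter. With such bounds in hand, choosing $B(\beta)$ large enough so that each probability is at most $\beta/2$ completes the argument and produces exactly the threshold $b/\sqrt{n}+\sqrt{b\|\zeta\|_\infty\|\eta\|_\infty}/n$ stated in the theorem.

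To bound $\mathrm{Var}_{H_1}(T_{n,b})$, I would exploit the H-decomposition of the degree-$4$ U-statistic $T_{n,b}=\CC{n}{4}^{-1}\sum \bar h_b(\vesub{W}{i_1},\ldots,\vesub{W}{i_4})$. The first-order Hoeffding projection contributes $16\sigma_{b1}^2/n$, and Theorem \ref{random:H1} together with the scaling $E[\kappa_b]=O(b)$ supplied by Assumption \ref{ass:kerz} implies $\sigma_{b1}^2=O(b^2)$. The higher-order projections are degenerate and, using the shift-invariance and unit-integral part of Assumption \ref{ass1} to tame $\kappa_b$ inside integrals against $\zeta$ and $\eta$, their variances can be controlled by $\|\zeta\|_\infty\|\eta\|_\infty$ times $b/n^k$ contributions, with the leading degenerate term of order $b\|\zeta\|_\infty\|\eta\|_\infty/n^2$ (cf. the null-variance expression in Proposition \ref{Prop::var}). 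Summing the projection-variance contributions yields the claimed bound. For $E_{H_1}[S_{n,b}^2]$, I would expand the matrix $M=((H\zeta H)\circ(H\varrho H))^{\cdot 2}$ in its explicit U-statistic form, show that under $H_1$ the extra non-degenerate mass introduced is dominated by the $H_0$-type leading term because it is weighted by the factor $2(n-4)(n-5)/[n^2(n-1)^2(n-2)(n-3)]=O(1/n^4)$, and bound the resulting moments in sup-norm, giving the same $b\|\zeta\|_\infty\|\eta\|_\infty/n^2$ scaling established for $\mathrm{Var}_{H_0}(T_{n,b})$.

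The main obstacle will be the H-decomposition variance accounting under $H_1$: the alternative-hypothesis variance contains the non-degenerate term $16\sigma_{b1}^2/n$, and pulling through the $b^2$ factor cleanly — rather than a $b$ factor — requires using Assumption~\ref{ass:kerz} carefully so that the change-of-variable $u=(z_1-z_2)/b$ produces the Jacobian $b$ only once rather than twice after conditioning on a single argument of $\bar h_b$. A parallel subtlety arises in bounding $E_{H_1}[S_{n,b}^2]$, since $S_{n,b}^2$ was engineered to estimate the null variance, so I must show that the $H_1$-induced cross-terms contribute at most the same order and do not inflate the bound. Once these variance computations are pinned down with the correct dependence on $b$, $n$, $\|\zeta\|_\infty$ and $\|\eta\|_\infty$, choosing $B(\beta) = C\max(\Phi(\alpha),1)/\sqrt{\beta}$ for a universal constant $C$ delivers the desired $\beta$-bound on the type II error.
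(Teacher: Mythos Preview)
Your proposal is correct and follows essentially the same route as the paper. Both arguments rest on Chebyshev applied to $T_{n,b}$ together with the H-decomposition variance bound $\var_{H_1}(T_{n,b})\le B_0 b^2\bigl(1/n+\|\zeta\|_\infty\|\eta\|_\infty/(n^2b)\bigr)$, which the paper isolates as a separate lemma (Lemma~\ref{lem::var}) and proves exactly as you outline: bound $\sigma_{b1}^2=O(b^2)$ term by term using Assumption~\ref{ass1}'s unit-integral property, and bound $\sigma_{b4}^2=O(b\|\zeta\|_\infty\|\eta\|_\infty)$ using $E[\kappa_b^2]=O(b)$.

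The only structural difference is in handling $S_{n,b}$. The paper does not split via a union bound; instead it invokes Proposition~\ref{Prop::var} to assert directly that $S_{n,b}^2\le B_1 b/n^2$ for large $n$, then absorbs $\Phi(\alpha)S_{n,b}$ into the threshold so that $\{T_{n,b}\le\Phi(\alpha)S_{n,b}\}\subseteq\{T_{n,b}\le E(T_{n,b})-\sqrt{\var(T_{n,b})/\beta}\}$ and a single Chebyshev step suffices. Your union-bound-plus-Markov treatment of $S_{n,b}$ is arguably more careful, since the paper's ``deterministic'' bound on the random quantity $S_{n,b}^2$ is somewhat informal; the trade-off is that you then need the $E_{H_1}[S_{n,b}^2]$ computation you flag as a subtlety, whereas the paper sidesteps it. Either way the substantive work is the same variance lemma.
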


In this inequality, $T_{\zeta, \eta, \kappa_b}(\ve{X}, \ve{Y})$ is equipped with the random-lifter kernel $\kappa_b$, so its order is at least $O(b)$. 
If we divide $b$ on both sides, the right hand side in Theorem~\ref{thm::minimax1} becomes 
\begin{align*}
    B(\beta)  \left(\frac{1}{\sqrt{n}} + \frac{\sqrt{\left\|\zeta\right\|_{\infty}\left\|\eta\right\|_{\infty}}}{n\sqrt{b}}\right).
\end{align*}

The first term is consistent with the result of HSIC \cite{Melisande2022}. The second term involves an additional $b^{-1/2}$, reflecting the influence of the random lifter. However, explicitly determining the order of $T_{\zeta, \eta, \kappa_b}(\ve{X}, \ve{Y})$ in practical implementations can be challenging.

To obtain a more general upper bound of the separation rate, we assume the joint probability density function (PDF) of $\ve{X}, \ve{Y}$ exists, denoted as $f(\ve{X}, \ve{Y})$ with marginal PDFs $f_x(\ve{X})$ and $f_y(\ve{Y})$, respectively. $\|f\|_{\infty},\left\|f_x\right\|_{\infty},\left\|f_y\right\|_{\infty}$ are all finite. We further consider the $L_2$ norm of the difference as $\psi(x, y) = f(x, y) - f_x(x)f_y(y)$. 

To obtain a more general upper bound on the separation rate, we assume the joint probability density function (PDF) of $\ve{X}, \ve{Y}$ exists, denoted as $f(\ve{X}, \ve{Y})$ with marginal PDFs $f_x(\ve{X})$, $f_y(\ve{Y})$ respectively. These PDFs satisfy $\|f\|_{\infty},\left\|f_x\right\|_{\infty}$, and$\left\|f_y\right\|_{\infty}$ , all finite. Additionally, we consider the $L_2$ norm of the difference, defined as $\psi(x, y) = f(x, y) - f_x(x)f_y(y)$.

Based on these assumptions, we derive the following theorem, which provides a sufficient condition on $\|\psi(x, y)\|_2$ for bounding the Type II error of the test by $\beta$:

\begin{theorem}[Minimax Rate II]\label{thm::minimax2}
  Under the alternative hypothesis and assumption \ref{ass1}, for $0 < \beta < 1$, there exists a constant $B(\beta) > 0$ depending only on $\beta$, such that if the condition
    $$\|\psi\|_2^2 \geq \|\psi - \psi*\left(\zeta \otimes \eta\right)\|_2^2 + \frac{B(\beta)}{n\sqrt{b}} \sqrt{\|\zeta \|_{\infty} \|\eta\|_{\infty}}$$
is satisfied for sufficiently large $n$, then
$\pr_{H_1}\left(\rolin_{n,b} \leq \Phi(\alpha)\right) \leq \beta$.
\end{theorem}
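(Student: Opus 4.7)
The strategy is to deduce Theorem~\ref{thm::minimax2} from Theorem~\ref{thm::minimax1} by translating its lower bound on $T_{\zeta,\eta,\kappa_b}(\ve{X},\ve{Y})$ into a lower bound on the more interpretable quantity $\|\psi\|_2^2-\|\psi-\psi*(\zeta\otimes\eta)\|_2^2$. Two ingredients are needed: a multiplicative decomposition of $T_{\zeta,\eta,\kappa_b}$ that extracts the bandwidth $b$, and a convolution identity relating $\operatorname{HSIC}(\omega,\zeta,\eta)$ to $\psi$.

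For the first ingredient, the factorization $T_{\zeta,\eta,\kappa_b}(\ve{X},\ve{Y})=E[\kappa_b(Z_1,Z_2)]\cdot\operatorname{HSIC}(\omega,\zeta,\eta)$ derived in Section~2.1, combined with the change of variables $u=(z_1-z_2)/b$ under Assumption~\ref{ass:kerz}, yields $E[\kappa_b(Z_1,Z_2)]=bA_1(1+o(1))$ as $b\to 0$; hence $T_{\zeta,\eta,\kappa_b}(\ve{X},\ve{Y})$ equals $bA_1\operatorname{HSIC}(\omega,\zeta,\eta)$ to leading order. For the second, Assumption~\ref{ass1} (shift invariance together with $\int\zeta=\int\eta=1$) allows the three summands in the HSIC definition to be consolidated into $\operatorname{HSIC}(\omega,\zeta,\eta)=\langle\psi,K*\psi\rangle$ with $K=\zeta\otimes\eta$. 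Expanding $\|\psi-K*\psi\|_2^2$ gives
\begin{align*}
\operatorname{HSIC}(\omega,\zeta,\eta)=\tfrac{1}{2}\bigl(\|\psi\|_2^2+\|K*\psi\|_2^2-\|\psi-K*\psi\|_2^2\bigr),
\end{align*}
from which $\operatorname{HSIC}(\omega,\zeta,\eta)\geq\tfrac{1}{2}\bigl(\|\psi\|_2^2-\|\psi-K*\psi\|_2^2\bigr)$ follows by discarding the non-negative term $\|K*\psi\|_2^2$. The same inequality is visible in Fourier: positive definiteness of $K$ gives $\hat K\geq 0$ and the normalization gives $\hat K(0)=1$, so $\hat K(2-\hat K)\geq \hat K$ whenever $\hat K\leq 1$, and Parseval converts this pointwise bound into the claimed norm inequality.

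Combining these two ingredients, the hypothesis of the theorem delivers
\begin{align*}
T_{\zeta,\eta,\kappa_b}(\ve{X},\ve{Y})\geq \tfrac{A_1 B(\beta)}{2n}\sqrt{b\,\|\zeta\|_\infty\|\eta\|_\infty}\,(1+o(1)),
\end{align*}
which matches the second summand on the right-hand side of Theorem~\ref{thm::minimax1} up to a multiplicative constant that can be absorbed by enlarging $B(\beta)$. The conclusion $\pr_{H_1}(\rolin_{n,b}\leq\Phi(\alpha))\leq\beta$ would then follow from Theorem~\ref{thm::minimax1} once the first summand $b/\sqrt{n}$ there is dealt with.

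The main obstacle is precisely this handling of the $b/\sqrt{n}$ term: since $nb\to\infty$ we have $1/(n\sqrt{b})<1/\sqrt{n}$, so a direct invocation of Theorem~\ref{thm::minimax1} would nominally demand an HSIC signal of the stronger order $1/\sqrt{n}$, not achievable from our hypothesis. Overcoming this requires showing that in the minimax regime targeted here, the alternative-hypothesis variance $\sigma_{b1}^2/n$ appearing in Theorem~\ref{random:H1}---which is what the $b/\sqrt{n}$ summand ultimately reflects---is itself dominated by the null variance of order $b/n^2$ from Proposition~\ref{Prop::var}. This amounts to revisiting the $H$-decomposition of $\bar h_b$ and quantifying how the smallness of $\operatorname{HSIC}$, and hence of $\langle\psi,K*\psi\rangle$, propagates into the smallness of $\sigma_{b1}^2$, so that the $b/\sqrt{n}$ threshold of Theorem~\ref{thm::minimax1} can be sharpened enough to be absorbed. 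This sharpening, rather than the relatively routine convolution manipulations above, is where I expect the technical effort to concentrate.
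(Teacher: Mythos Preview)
Your diagnosis of the obstacle is accurate, but the resolution you sketch is not the one that works, and the step where you discard $\|K*\psi\|_2^2$ is precisely what blocks progress. The $b/\sqrt{n}$ summand in Theorem~\ref{thm::minimax1} comes from the crude bound $\sigma_{b1}^2=O(b^2)$ of Lemma~\ref{lem::var}; the paper replaces this by the sharper estimate $\sigma_{b1}^2\le B_0'\,b^2\,\|K*\psi\|_2^2$ (Lemma~\ref{lem::var2}), obtained by rewriting the first $H$-projection of $\bar h_b$ as conditional expectations of $\Psi=(\psi*K)$. The point is not that $\sigma_{b1}^2$ is small because $\langle\psi,K*\psi\rangle$ is small---it is that $\sigma_{b1}^2$ is controlled by the very quantity $\|K*\psi\|_2^2$ you threw away.

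With this in hand the argument is a cancellation, not a domination. One keeps the full identity $E[T_{n,b}]=\tfrac{A_1b}{2}\bigl(\|\psi\|_2^2+\|K*\psi\|_2^2-\|\psi-K*\psi\|_2^2\bigr)$ from Lemma~\ref{lem::expect}, and then applies AM--GM in the form $2\sqrt{(B_1'b^2/(n\beta))\|K*\psi\|_2^2}\le \tfrac{A_1b}{2}\|K*\psi\|_2^2+O(b/n)$ so that the $\|K*\psi\|_2^2$ contribution to $\sqrt{\var(T_{n,b})/\beta}$ is absorbed by the matching term in the mean. After this cancellation only the degenerate-variance piece $\sqrt{b\|\zeta\|_\infty\|\eta\|_\infty}/n$ survives, which is exactly the condition stated in Theorem~\ref{thm::minimax2}. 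Your proposed route of invoking Theorem~\ref{thm::minimax1} as a black box cannot reach this rate: its $b/\sqrt{n}$ term was derived from a variance bound that is independent of $\psi$, and no signal-strength argument will remove it without reopening the variance computation along the lines of Lemma~\ref{lem::var2}.
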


In Theorem \ref{thm::minimax2}, the right-hand side of the inequality contains a bias term $\|\psi - \psi*\left(\zeta \otimes \eta\right)\|_2^2$, which arises from the variance of $T_{n, b}$. This bias term results from using two different criteria, the $L_2$ distance and the random-lifter dependence measure, to evaluate the disparity between the joint PDF and the product of the marginal PDFs. The presence of this bias term is due to the use of the random lifter, which introduces a trade-off between the deviation and the variance.

The crucial point is to verify the following conditions:
$$E(T_{n, b}) > \sqrt{\frac{\var(T_{n, b})}{\beta}} + S_{n, b} \Phi(\alpha).$$
Unlike the conclusion in \cite{Melisande2022,kim2022}, we do not need to consider the threshold of permutation since the threshold we use is just the quantile of the standard normal distribution.
    
\begin{remark}
    We introduce specific definitions for the functions $\zeta$ and $\eta$, which are central to our analysis of the optimal minimax rate. These functions are defined as follows:

    For $\vesub{X}{i} = (x_i^{(1)}, \dots, x_i^{(p)}) \in \mathbf{R}^p$ and $\vesub{Y}{i} = (y_i^{(1)}, \dots, y_i^{(p)}) \in \mathbf{R}^q$, where $i = 1, 2$, we define:
    
    \begin{align*}
        & \zeta(\vesub{X}{1}, \vesub{X}{2})=\frac{1}{(2\pi)^{p/2}\theta^x_1 \ldots \theta^x_p} \exp\left( -\frac{1}{2} \sum_{i = 1}^p \left(\frac{x_1^{(i)} - x_2^{(i)}}{\theta^x_i}\right)\right), \\
        & \eta(\vesub{Y}{1}, \vesub{Y}{2})=\frac{1}{(2\pi)^{q/2}\theta^y_1 \ldots \theta^y_q} \exp\left( -\frac{1}{2} \sum_{i = 1}^q \left(\frac{y_1^{(i)} - y_2^{(i)}}{\theta^y_i}\right)\right).
    \end{align*}
 It is important to note that the success of our analysis hinges on certain conditions being met. Specifically, we assume that:
    $$\max \left\{\prod_{i=1}^p \theta_i^x, \prod_{j=1}^q \theta_j^y \right\}<1, \quad  n \sqrt{\theta^x_1 \ldots \theta^x_p \theta^y_1 \ldots \theta^y_q}>1.$$
    Based on the results of Gaussian MMD \citep{kim2022} and HSIC \citep{Melisande2022}, the condition in Theorem~\ref{thm::minimax2} can be modified as follows.
With these assumptions in place and using the results of Gaussian MMD \citep{kim2022} and HSIC \citep{Melisande2022}, we can adapt the condition in Theorem~\ref{thm::minimax2} as:
    
    $$\|\psi\|_2^2 \geq \|\psi - \psi*\left(\zeta \otimes \eta\right)\|_2^2  + \frac{B_{\beta}}{n\sqrt{b \theta^x_1 \ldots \theta^x_p \theta^y_1 \ldots \theta^y_q}}.$$

   Moreover, by imposing restrictions on the function $\psi$ within specific spaces, we  can control the bias term $\|\psi - \psi*\left(\zeta \otimes \eta\right)\|_2^2$. For instance, if we consider the Sobolev ball with a regularity parameter $\delta$ in the range $(0, 2)$ and a radius $R > 0$, defined as:
    $$\mathcal{S}_d^\delta(R) = \left\{s: \mathbf{R}^d \rightarrow \mathbf{R}; s \in \mathbf{L}_1(\mathbf{R}^d) \cap \mathbf{L}_2(\mathbf{R}^d), \int_{\mathbf{R}^d} \| u \|^{2\delta} |\hat{s}(u)|^2 du \leq (2 \pi)^d R^2 \right\},$$
    where $\| \cdot \|$  represents the Euclidean norm in $\mathbf{R}^d$, and $\hat{s}$ is the Fourier transform of $s$, defined on $\mathbf{R}^d$ by $\hat{s}(u) = \int_{\mathbf{R}^d} s(x) \exp(i x^{\top}u) dx$. Based on Lemma 3 of \citet{Melisande2022}, we can derive an upper bound for the bias term:
    \begin{align*}
        \|\psi - \psi*\left(\zeta \otimes \eta\right)\|_2^2 \leq B(p, q, \delta, R) \left[\sum_{i = 1}^p (\theta_{i}^x)^{2 \delta} + \sum_{j = 1}^q (\theta_{j}^y)^{2 \delta}\right]
    \end{align*}
 with a constant $B(p, q, \delta, R)$. Let $b = n^{-\gamma}$, and by selecting optimal bandwidths for the Gaussian kernel, given by
    $\theta_i^x = \theta_j^y = n^{-\frac{2 - \gamma}{4 \delta + p + q}}$,
    we can express the condition in Theorem~\ref{thm::minimax2} as:
    $$\|\psi\|_2 \geq B(p, q, \delta, R, \alpha, \beta) n^{-\frac{2 \delta - \delta \gamma}{4 \delta + p + q}}.$$

  It is worth noting that the optimal minimax rate in the Sobolev ball is $n^{-2 \delta/(4 \delta + p + q)}$, a rate that has been shown to be attainable by HSIC in previous studies \cite{rigollet2007linear,kim2022,Melisande2022}. The random-lifter independence test, however, achieves a rate of $n^{-\delta \gamma/(4 \delta + p + q)}$, which introduces a gap of at most half of the optimal minimax rate.  This gap is a consequence of the techniques involved in employing the random-lifter and introducing the bandwidth parameter $b$, and it does not exceed half of the optimal minimax rate due to the choice of $\gamma$ in the range $(0, 1)$.
\end{remark}

\subsection{Random-lifter independence test algorithm}
In this subsection, we elucidate the practical computational steps involved in our proposed method. Given the observed realizations, we compute the positive kernel matrices associated with the samples $\ve{X}$ and the product of $\ve{Y}$ and $Z$, denoted by $\zeta_{n \times n}$ and $\varrho_{n \times n}$, respectively. 

The procedure of our algorithm is summarized as Algorithm~\ref{alg:cap}.
To obtain the numerator of the proposed test statistic, we provide a more efficient estimator via the U-centered centered function shown in Algorithm~\ref{alg:cap2}. The V-centered function in Algorithm~\ref{alg:cap3} is used to compute the variance term. 
The entire test procedure for the random-lifter method has a computational complexity of order $O(n^2)$. This implies that the method can perform the test using a standard normal quantile within the same complexity as one-step traditional permutation test methods, resulting in a significant improvement in computational efficiency.

\begin{algorithm}[h!]
\renewcommand{\algorithmicrequire}{\textbf{Input:}}
\renewcommand{\algorithmicensure}{\textbf{Output:}}
\footnotesize
\caption{\quad The random-lifter test statistic}
\label{alg:cap}
\begin{algorithmic}[1]
    \REQUIRE  $\{(\vesub{X}{i}, \vesub{Y}{i}, Z_i)\}_{i = 1}^n$, positive-definite functions $\zeta, \eta$ and $\kappa$;
    \STATE  Set $Z \sim N(0,1)$;
    \FOR{$i, j = 1$ \TO $n$} 
        \STATE $\zeta[i, j] \gets \zeta(\mathbf{X}_i, \mathbf{X}_j)$;
        \STATE $\varrho[i, j] \gets \eta(\mathbf{Y}_i, \mathbf{Y}_j)\kappa_b(Z_i, Z_j)$;
    \ENDFOR
    \STATE $U_{\zeta} \gets Ucenter(\zeta)$;
    \STATE $U_{\varrho} \gets Ucenter(\varrho)$;
    \STATE $T_{n, b} \gets \frac{1}{n}\sum_{i \neq j} U_{\zeta}[i, j] * U_{\varrho}[i, j]$;
    \STATE $V_{\zeta} \gets Vcenter(\zeta)$
    \STATE  $V_{\varrho} \gets Vcenter(\varrho)$ 
     \STATE $S_{n,b}^2 \gets \frac{2(n-4)(n-5)}{n^2(n-1)^2(n-2)(n-3)} \sum_{i \neq j} (V_{\zeta}[i, j] * V_{\varrho}[i, j])^2$;
    \ENSURE $T_{n, b} / S_{n,b}$.
\end{algorithmic}
\end{algorithm}


\begin{algorithm}[h!]
\renewcommand{\algorithmicrequire}{\textbf{Input:}}
\renewcommand{\algorithmicensure}{\textbf{Output:}}
\footnotesize
\caption{\quad U-centered function}
\label{alg:cap2}
\begin{algorithmic}[1]
    \REQUIRE  $K_{ij} = \zeta(\mathbf{X}_i, \mathbf{X}_j)$;
    \FOR{$i = 1$ \TO $n$} 
        \STATE $K_{i\cdot} \gets \frac{1}{n - 2}\sum_{k \neq i} K_{ik}$;
        \STATE $K_{\cdot i } \gets \frac{1}{n - 2}\sum_{l \neq i} K_{li}$;
    \ENDFOR
    \STATE $K_{\cdot\cdot} \gets \frac{1}{(n - 1)(n - 2)}\sum_{k \neq l} K_{kl}$;
    \ENSURE the U-centered matrix $(U_{ij}) = K_{ij} - K_{i\cdot} - K_{\cdot j } + K_{\cdot\cdot}$.
\end{algorithmic}
\end{algorithm}


\begin{algorithm}[h!]
\renewcommand{\algorithmicrequire}{\textbf{Input:}}
\renewcommand{\algorithmicensure}{\textbf{Output:}}
\footnotesize
\caption{\quad V-centered function}
\label{alg:cap3}
\begin{algorithmic}[1]
    \REQUIRE  Matrix $H_{ij} = \zeta(\mathbf{X}_i, \mathbf{X}_j)$;
    \FOR{$i = 1$ \TO $n$} 
        \STATE $H_{i\cdot} \gets \frac{1}{n}\sum_{s = 1}^n H_{is}$;
        \STATE $H_{\cdot i} \gets \frac{1}{n}\sum_{s = 1}^n H_{si}$;
    \ENDFOR
    \STATE $H_{\cdot\cdot} \gets \frac{1}{n^2}\sum_{i, j} H_{ij}$;
    \ENSURE the V-centered matrix $(V_{ij}) = H_{ij} - H_{i\cdot} - H_{\cdot j } + H_{\cdot\cdot}$.
\end{algorithmic}
\end{algorithm}

\section{Numerical Studies}

In this section, we will present the results of simulation studies and real-data analysis to validate the effectiveness of the Random-Lifter method.

\subsection{Bandwidth selection}
The bandwidth parameter $b$ is crucial in the random-lifter method, making the selection of an appropriate bandwidth a critical issue. According to Theorem~\ref{Thm::randomH0}, the optimal bandwidth order is $1/\sqrt{n}$. However, as Theorems~\ref{random:H12} and \ref{thm::minimax2} suggest, a bandwidth closer to the order of 1 enhances both the power and the separation rate. This presents a trade-off in the choice of bandwidth between achieving a normal approximation and maximizing power.

In practical applications, we observe that while a normal approximation with $b=1/\sqrt{n}$ is satisfactory, the resultant power is often suboptimal. 
We have improved the findings from Theorem~\ref{Thm::randomH0} in light of recent developments in high-dimensional studies \citep{gao2023two, Gao2021, jiang2023}. Our findings suggest an optimal bandwidth selection of $b = \sqrt{pq/n}$. This formula is particularly effective in scenarios characterized by high dimensions and weak serial correlations, such as those involving $m$-dependent structures. Additionally, we have observed that this bandwidth performs well even in low-dimensional settings, where the factor $\sqrt{pq}$ serves as a constant adjustment to the conventional $1/\sqrt{n}$.

As a practical matter, we suggest using $b = \sqrt{pq/n}$ or empirical selection methods like Scott's rule \citep{scott1979optimal} and Silverman's rule \citep{silverman2018density}, which usually suggest a bandwidth order of $n^{-0.2}$. These approaches balance the trade-offs between accuracy in normal approximation and power effectively.

\subsection{Simulation Studies}

We present the numerical performance of the proposed random-lifter method (\textbf{Rolin}) for linear and nonlinear cases and compare it with two other tests, including Hilbert Schmidt Independence Criterion and its gamma approximation\cite{gretton2005kernel} (\textbf{HSIC} and \textbf{HSIC.gamma}), respectively, and
cross HSIC \cite{shekhar2023permutation}(\textbf{Cross}).
Rolin employs a Gaussian kernel for the two random objects under consideration, with the tuning parameter set to the lower quantile but taking one-half of the median if it is 0. The bandwidth of the random-lifter kernel is determined as $\sqrt{pq/n}$, where $n$ represents the sample size, and $p$ and $q$ denote the dimension of $\ve{X}$ and $\ve{Y}$, respectively.
The P-value is calculated using the Z-score table of the standard normal distribution. For the remaining two tests, the default settings utilize a Gaussian kernel with a bandwidth determined by the median heuristic.  The P-value for Cross is also determined from the Z-score table, while a permutation procedure calculates the P-value for the HSIC, and the number of replications is set to $B = 399$. 

In the first 12 cases, we fix the dimension at $p=5$, and for the remaining cases, we set $p=10$ to explore if varying dimensions affect the power performance. Each dimension is independently and identically generated according to the specified model. The initial sample size is set to $n = 50$ and is doubled each time until the sample size reaches 400. We conducted 1000 repetitions for each experiment, resulting in a total of 16 cases studied. By comparing these 16 cases, we evaluate the Type I error and power of the proposed method. Specifically, the first four cases are utilized to assess the Type I error, while the remaining cases are dedicated to comparing the power performance.

\begin{enumerate}
     \item[\BLUE{Case 1:}] Standard normal distribution, i.e. $X, Y \sim N(0, 1)$.
     \item[\BLUE{Case 2:}] (Conditional) standard normal distribution, i.e. $X, Y_1 \sim N(0, 1)$, and $\epsilon$ follows a uniform distribution $U[0, 1]$. Then
    $$Y = Y_1 \epsilon.$$
     \item[\BLUE{Case 3:}] Gamma distribution for which both the location and scale parameters are set to be $1$, i.e., $X, Y \sim \Gamma(1, 1)$.  
     
    \item[\BLUE{Case 4:}] (Conditional) Gamma distribution, i.e. $X, Y_1 \sim \Gamma(1, 1)$, and $\epsilon \sim N(0, 1)$. Then
    $$Y = Y_1 \epsilon.$$
\end{enumerate}

For the power analysis, we investigated 12 cases. In the first four models, we consider a linear association. In the remaining models, we explore various forms of nonlinear associations, including logarithmic, reciprocal, conditional linear, and polynomial relationships.

\begin{enumerate}

\item[\BLUE{Case 5:}] Consider that $X, \epsilon \sim U[0, 1]$, we set
   $$Y = 0.2 X + 0.8 \epsilon.$$
   
\item[\BLUE{Case 6:}] Consider that $X, \epsilon \sim \mathrm{Weibull}(1, 1)$. In this case, the Weibull distribution has both the shape and scale parameters set at $1$. We obtain
   $$Y = 0.2 X + 0.8 \epsilon.$$
   
\item[\BLUE{Case 7:}] Consider that $\log(X), \log(\epsilon) \sim N(0, 1)$. Let $a = (1, 0, \cdots, 0)^{\top}$ and $b = (0, 1, \cdots, 1)^{\top}$. We set
   $$Y = aX + b\epsilon.$$

\item[\BLUE{Case 8:}] Consider that $X, \epsilon \sim \mathrm{Weibull}(0.8, 1)$, where the shape and the scale parameter are set to be $0.8$ and $1$, respectively.  We set
   $$Y = aX + b\epsilon.$$
Here $a$ and $b$ are defined as above.

\item[\BLUE{Case 9:}] Consider two independent variables, $X,\epsilon \sim N(0,1)$.  We set
   $$Y = \log{(1+X^2)} + \epsilon.$$
    
\item[\BLUE{Case 10:}] Here $X \sim t_2$ and $\epsilon \sim \mathrm{Weibull}(5/3, 1)$ are independent, and
   $$Y = \frac{\epsilon}{1 + X^2}.$$

\item[\BLUE{Case 11:}] Here $X \sim N(0,1)$ and $\epsilon \sim t_{2}$ are independent, and we take
$$Y = X\epsilon + \epsilon.$$

\item[\BLUE{Case 12:}] Consider that $X \sim N(0,1)$ and $\epsilon \sim t_2$ are independent. Then
$$Y = X^2 + \epsilon.$$

\item[\BLUE{Case 13:}] Consider two independent variables: $X \sim U[0,1]$ and $\epsilon \sim t_2$.  We set
   $$Y = \log{(1+X^2)} + \epsilon.$$

\item[\BLUE{Case 14:}] Consider two independent variables: $X \sim N(0,1)$ and $\epsilon \sim \chi^2_{5/3}$. We take
   $$Y = \frac{\epsilon}{1 + X^2}.$$

\item[\BLUE{Case 15:}] Here $X \sim \exp{(1)}$ and $\epsilon \sim t_{8/3}$ are independent, and we take
$$Y = X\epsilon + \epsilon.$$

\item[\BLUE{Case 16:}] Consider that $X \sim U[0,1]$ and $\epsilon \sim N(0,1)$ are independent. Then
$$Y = X^2 + \epsilon.$$

\end{enumerate}

\begin{figure}[h!]
    \centerline{\includegraphics[width = 1.0\textwidth, height = 0.59\textwidth]{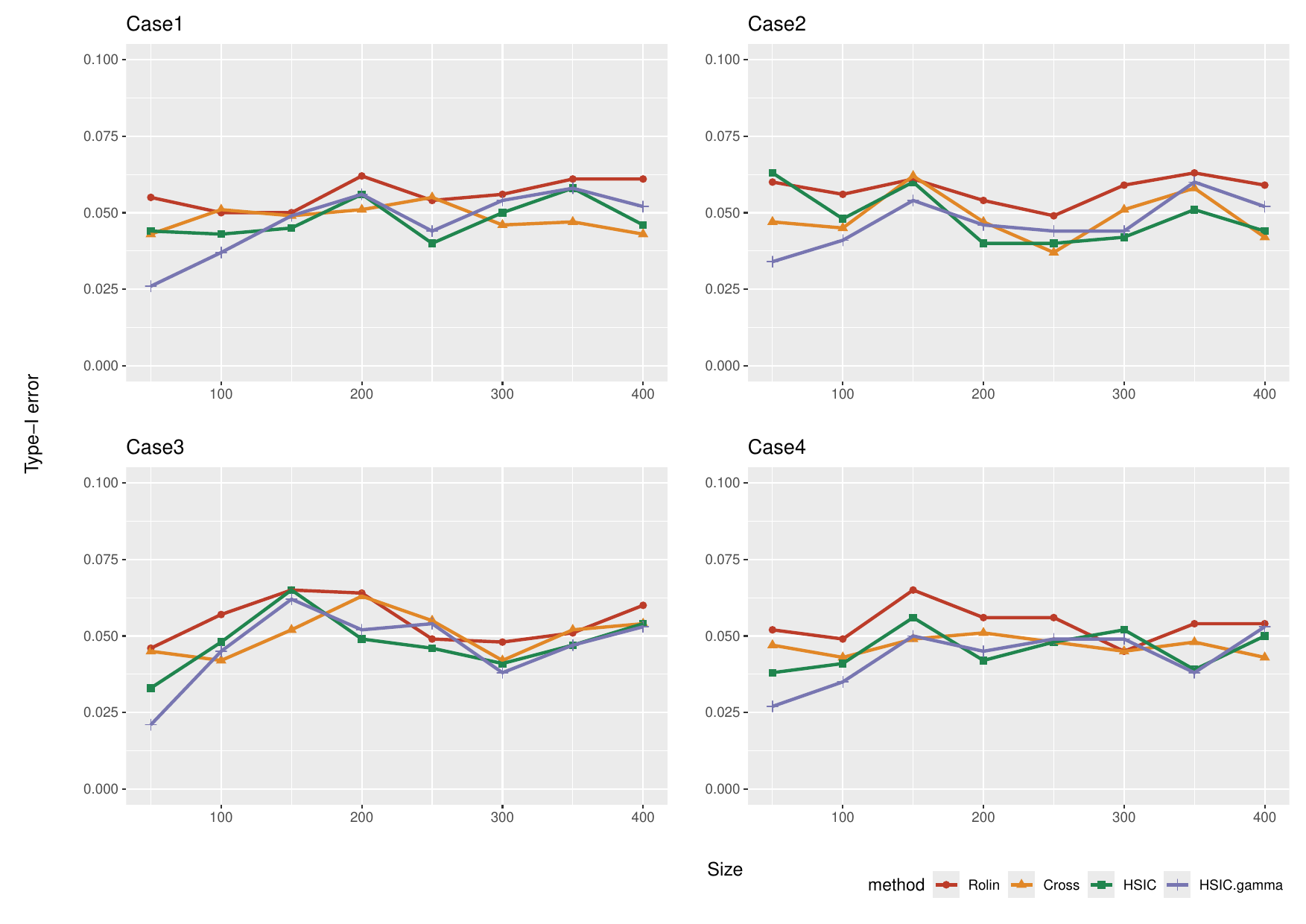}}
    \caption{Type I error in Models 1-4.}
    \label{figp4.1}
\end{figure}

\begin{figure}[h!]
    \centerline{\includegraphics[width = 1.0\textwidth, height = 0.6\textwidth]{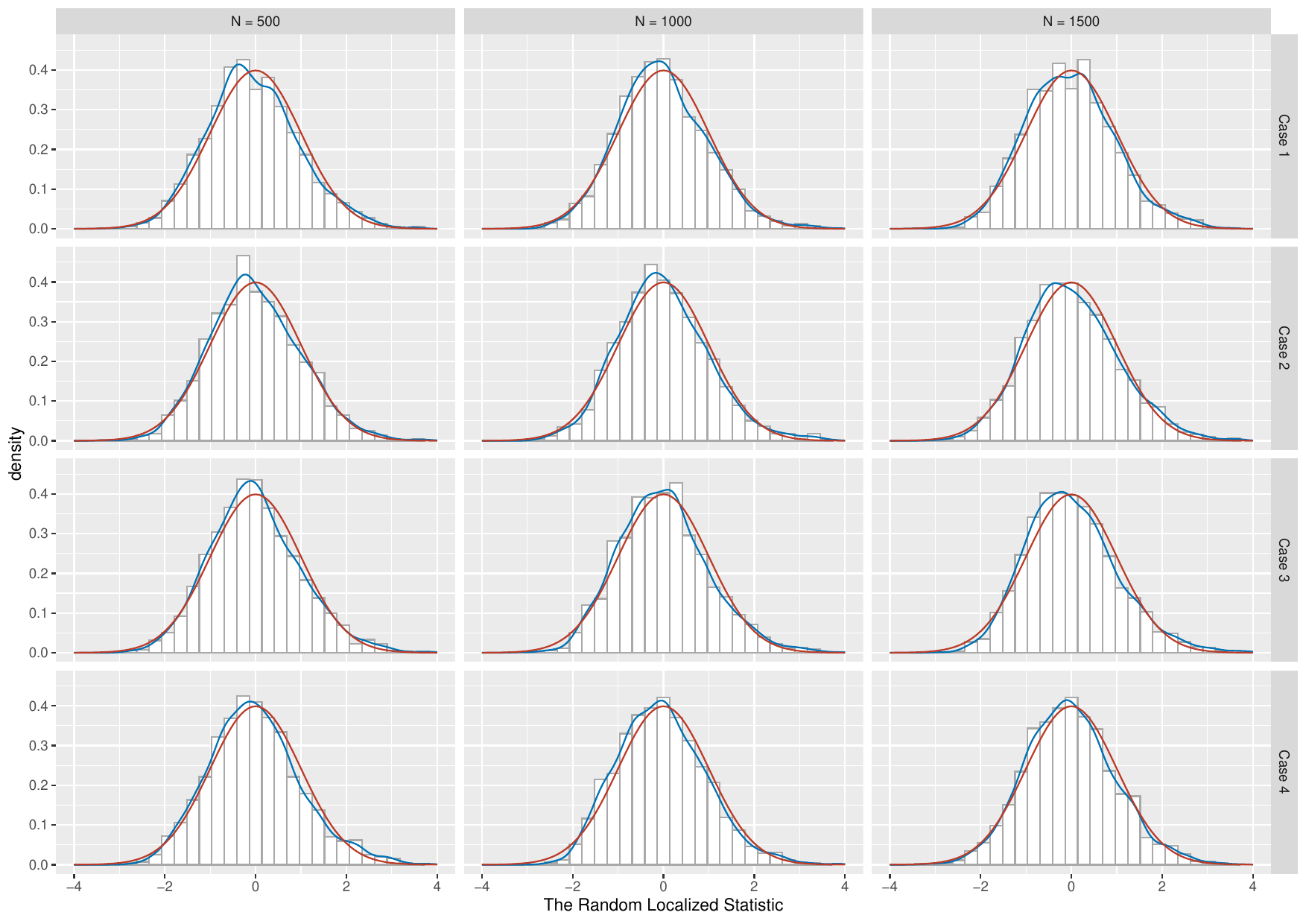}}
    \caption{The histogram and density plot in Models 1-4, where the red curve represents the density of the standard normal distribution, and the blue curve is derived from the proposed method.}
    \label{figp4.5}
\end{figure}

Figure~\ref{figp4.1} presents the Type-I error in the aforementioned first four cases. As can be seen, the proposed method and the other two tests are well-controlled for Type-I errors.
We also depict the histogram and density plot with size $n = 500, 1000, 1500$ in Figure~\ref{figp4.5}. It can be seen that the density profiles of the proposed method are close to the standard normal distribution in all the cases considered. It is worth mentioning that at $n$ of 500, the density already has a good normal approximation performance.

\begin{figure}[h!]
    \centerline{\includegraphics[width = 1.0\textwidth, height = 0.61\textwidth]{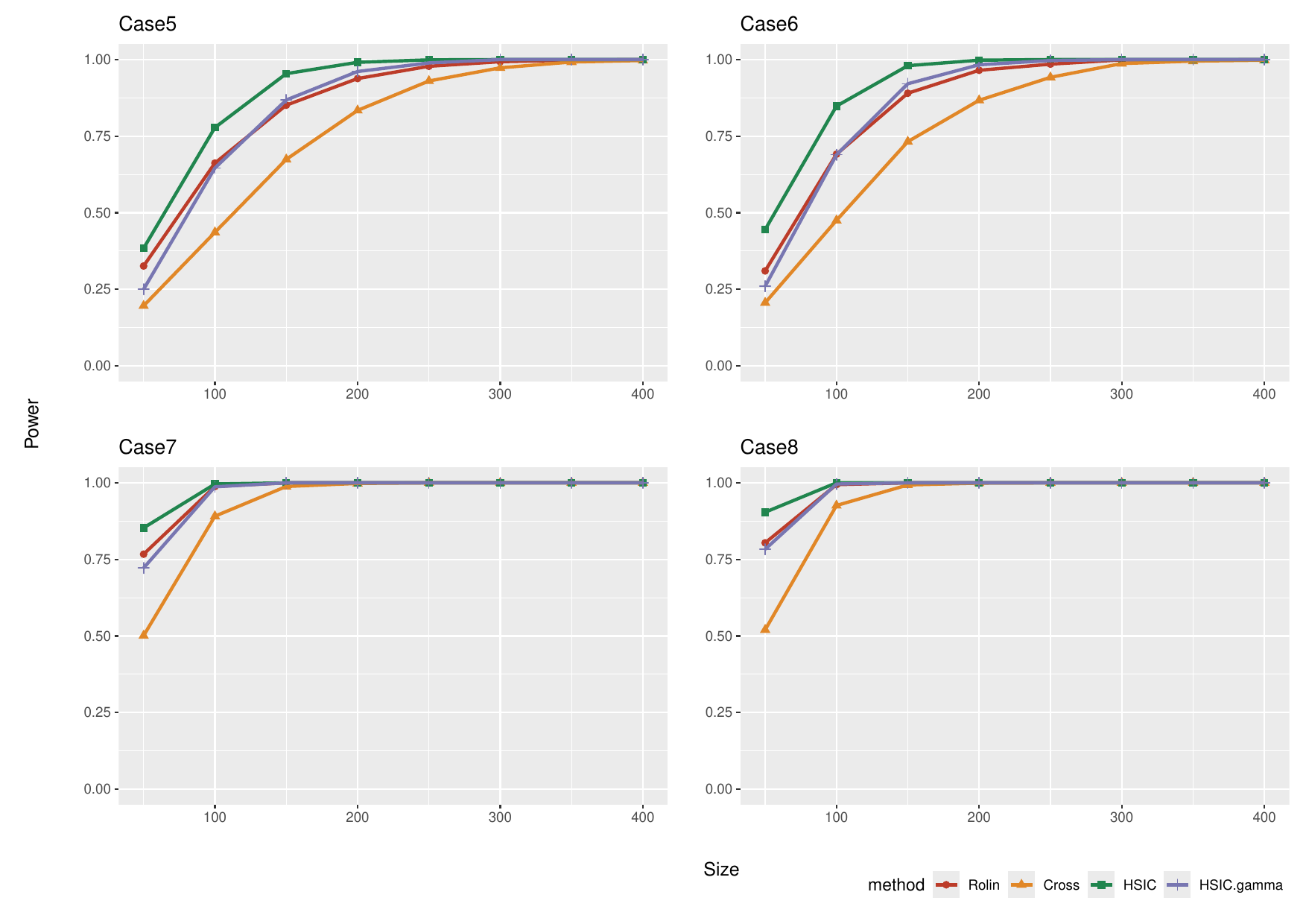}}
    \caption{The power performance in the linear case (Models 5-8).}
    \label{figp4.2}
\end{figure}

Figure~\ref{figp4.2} compares the power of the proposed method with the other two tests under the linear case.
From the figure depicted herein, it becomes apparent that, for a small sample size, our proposed test exhibits a somewhat reduced power compared to HISC. This outcome arises due to adopting a random-lifter method, which effectively curtails sample utilization. 
For the gamma approximation, it exhibits competitive performance under large sample sizes. However, with small sample sizes, its power slightly lags behind the proposed method. This discrepancy arises because when the dimension is large relative to the sample size, the approximation becomes less accurate, leading to inconsistent power results.
Nonetheless, in comparison to an alternative regularization approach (\textbf{Cross}), our method manifests superior power characteristics. Notably, our test demonstrates commendable power as the sample size is augmented to 200.

\begin{figure}[h!]
    \centerline{\includegraphics[width = 1.0\textwidth, height = 0.59\textwidth]{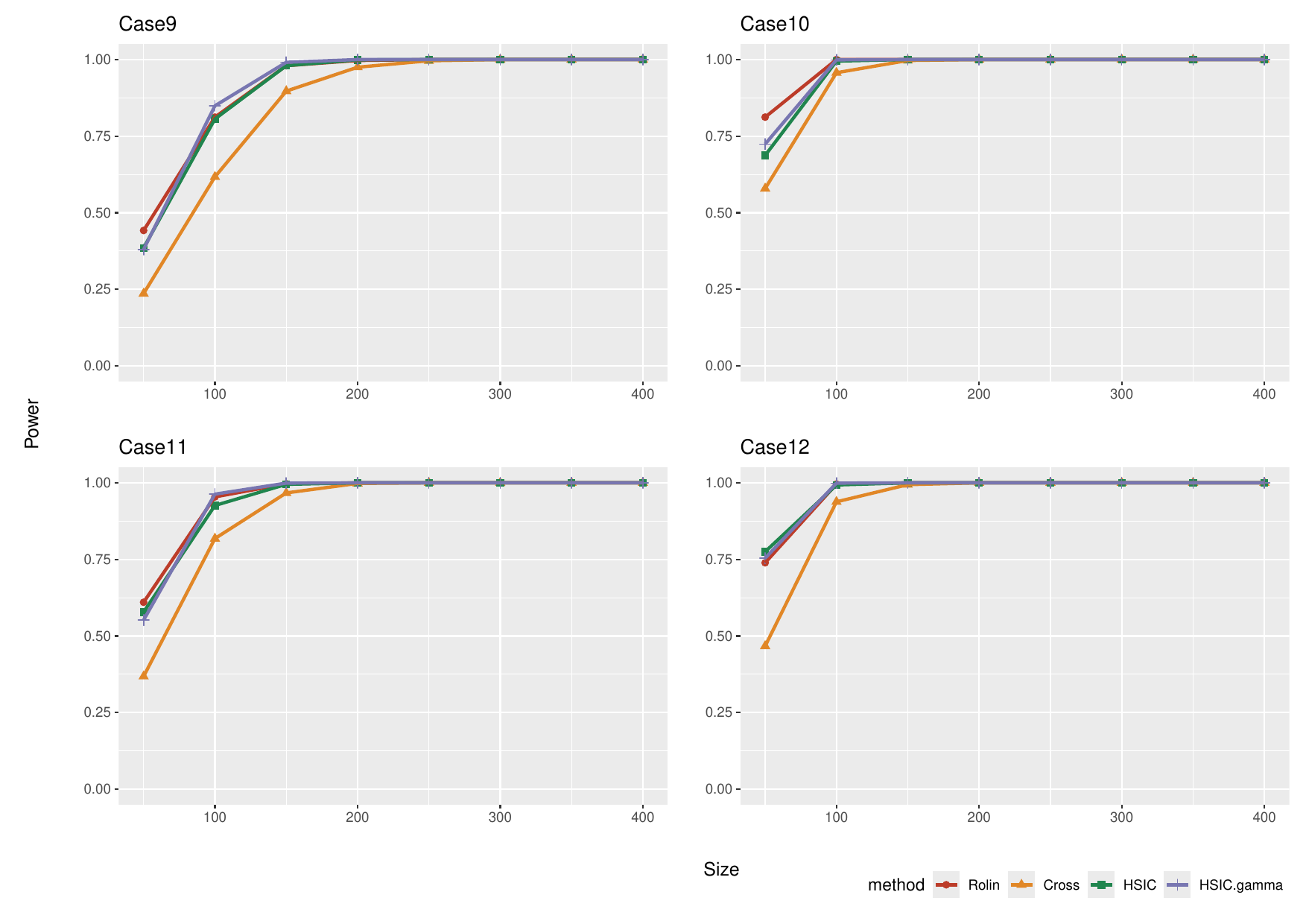}}
    \caption{The power performance in the nonlinear case (Models 9-12).}
    \label{figp4.3}
\end{figure}

Cases 9 to 16 feature four nonlinear scenarios under two different dimension settings. Figure~\ref{figp4.3} and \ref{figp4.4} summarize the results. Specifically, our method exhibits a slightly lower or higher power than HSIC when the small sample sizes are small. However, as the sample size slightly increases (n = 200), the power of our proposed approach becomes comparable to that of HSIC. In contrast, when compared to the alternative method (\textbf{Cross}), our method consistently demonstrates favorable performance characteristics. 
Intriguingly, when the first four cases where $p=5$ and the subsequent four cases where $p=10$ are examined, the performance gap between our proposed method and the gamma-approximation method widens in the latter set, especially when the sample size is small. This observation indicates that the gamma-approximation method may experience significant power loss as dimensionality increases.

\begin{figure}[h!]
    \centerline{\includegraphics[width = 1.0\textwidth, height = 0.59\textwidth]{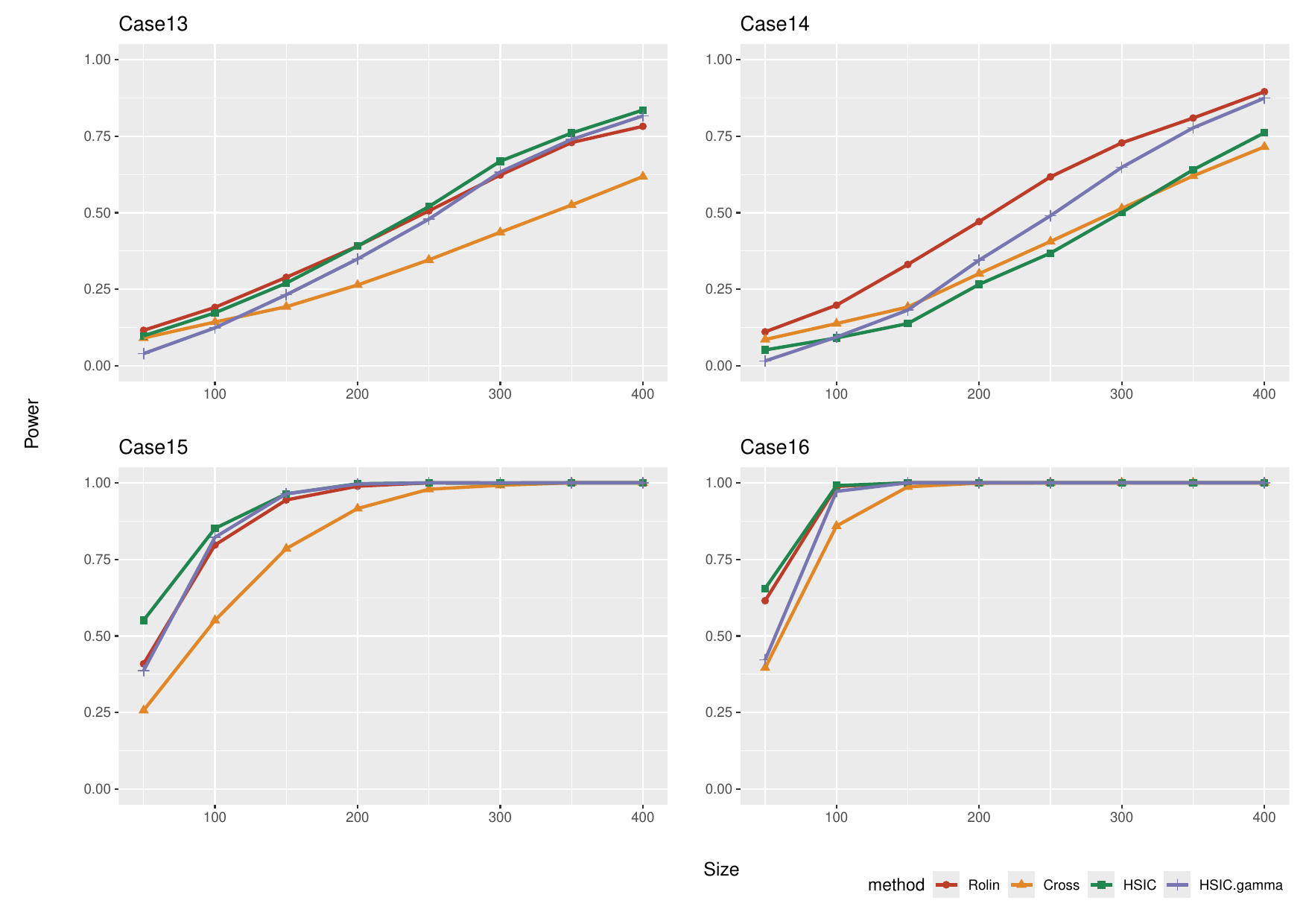}}
    \caption{The power performance in the nonlinear case (Models 13-16).}
    \label{figp4.4}
\end{figure}

For Cases 4-16, we can see that in most cases, the power of Rolin is slightly worse than the original HSIC, because Rolin adds weight to one kernel matrix, and the weight becomes very small for points with longer distances between $Z$, resulting in a decrease in sample utilization. Overall, the empirical results confirm that Rolin has asymptotic normality, its Type I error is under control,  and its power is comparable to or better than the competitors that require much more intensive computation.

To assess the runtime of the proposed method, we consider when every component of $X, Y$ is standard normal with $p = 5$, and the sample size varies as $n \in \{ 200, 400, 600, 800,$ $1000, 1200, 1400, 1600, 1800, 2000 \}$. The results are summarised in Figures ~\ref{figp4.6} and \ref{figp4.6.1}.

\begin{figure}[htb]
    \centerline{\includegraphics[width = 1.0\textwidth, height = 0.52\textwidth]{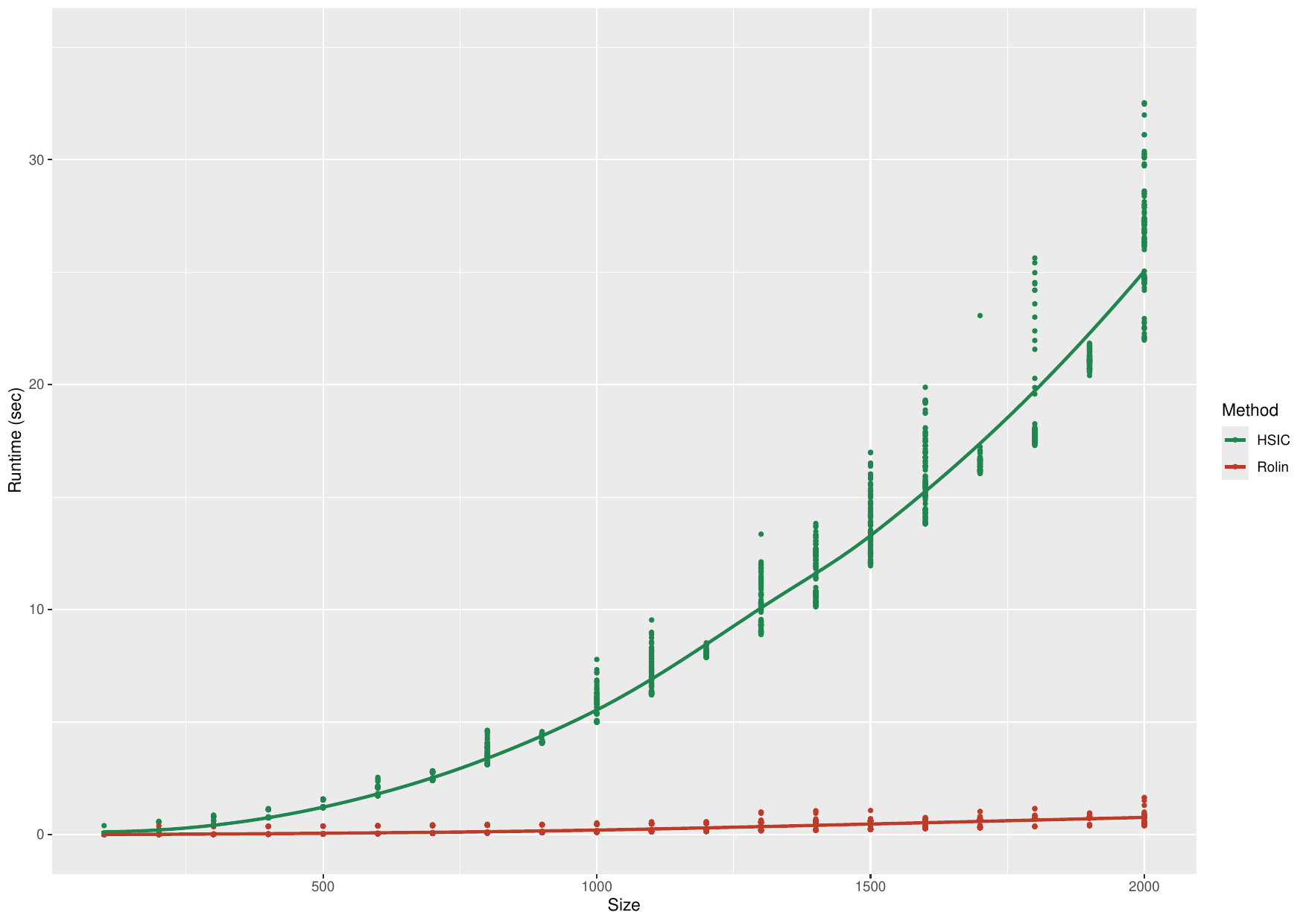}}
    \caption{The runtime performance of Rolin and HSIC.}
    \label{figp4.6}
\end{figure}

\begin{figure}[htb]
    \centerline{\includegraphics[width = 1.0\textwidth, height = 0.52\textwidth]{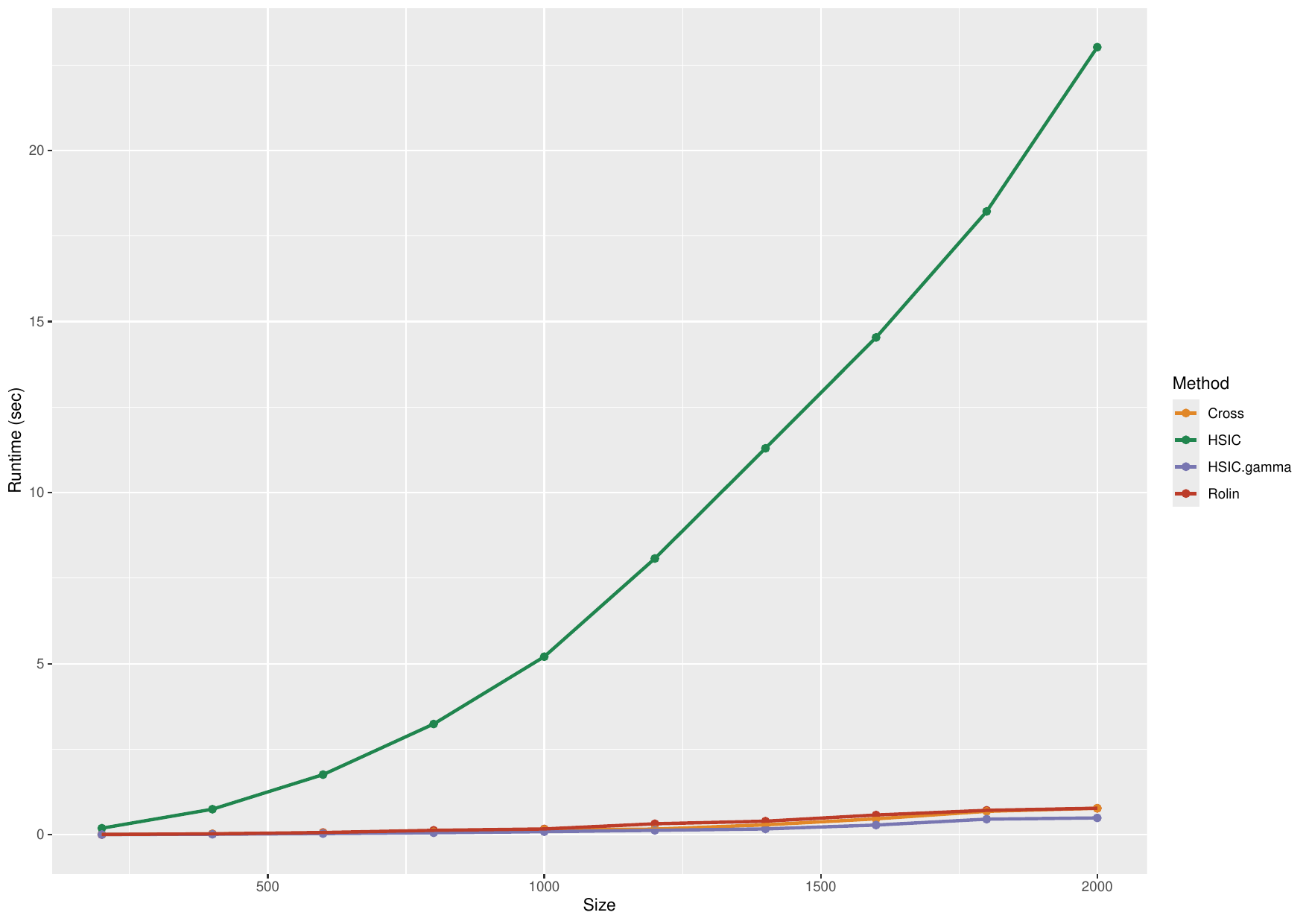}}
    \caption{The runtime performance of Rolin, Cross, HSIC.gamma and HSIC.}
    \label{figp4.6.1}
\end{figure}

Figure \ref{figp4.6} presents a graphical representation of the runtime for 100 iterations of Rolin alongside HSIC across varying sample sizes. The results prominently illustrate the substantial enhancement in computational efficiency achieved by Rolin. Notably, as the sample size $n$ increases, the curve corresponding to Rolin remains relatively flat, while the HSIC curve exhibits a pronounced acceleration. 
Additionally, Figure \ref{figp4.6.1} displays the averaged runtime over 100 iterations at each sample size, revealing that Rolin demonstrates comparable performance to the other two regularization methods (\textbf{Cross} and \textbf{HSIC.gamma}) in terms of runtime efficiency.

\subsection{Real data analysis}
In this subsection, we demonstrate the efficacy of the random-lifter independence test by applying it to the single-cell gene expression data from \citet{moignard2015decoding}. 

This dataset consists of 3,934 cells derived from five distinct cell populations across four crucial embryonic stages, spanning from embryonic day 7.0 (E7.0) to day 8.5 (E8.5). It includes gene expression data for 33 transcription factors and nine marker genes, essential for tracking the developmental trajectories of these cells. These genes were quantitatively measured using single-cell quantitative real-time PCR (qRT-PCR), a method that offers precise gene expression measurements through fluorescence signals. These signals directly correspond to the amount of genetic material expressed, providing critical insights into cellular development.

At the specific stage of E8.25, our analysis concentrates on two primary cell populations: 
the GFP$^{+}$ cells (four-somite, 4SG) and Flk1$^{+}$GFP$^{-}$ cells (4SFG$^{-}$), containing 983 and 770 cells, respectively.
Our main goal is to carefully look at and compare the gene-gene expression patterns of these two different cell states.

To get a glance at the correlation between the genes for these two cells, we have generated separate heatmaps as illustrated in  Figure~\ref{figp5}, where the color represents the coefficient values between two corresponding covariates.

We observe two distinct groups of genes. The first group, which includes genes like "Mitf" and "Tbx3" displays a nearly white pattern in the 4SFG$^{-}$ cell heatmap, suggesting a weak or non-existent correlation. However, in the 4SG cell, this group exhibits a clear and significant correlation, as indicated by the presence of colored patterns.
The second group, encompassing genes such as "Gfi1", "HbbbH1", "HoxB2", and "Nfe2" shows a reverse pattern. These genes demonstrate a pronounced and significant correlation in the 4SFG$^{-}$ cell heatmap but exhibit a nearly white correlation effect in the 4SG cell heatmap.

\begin{figure}[!t]
    \centering
    \subfloat[Heat map of the genes in 4SFG$^{-}$ cell.] {
       \includegraphics[width = 1.0\textwidth, height = 0.55\textwidth]{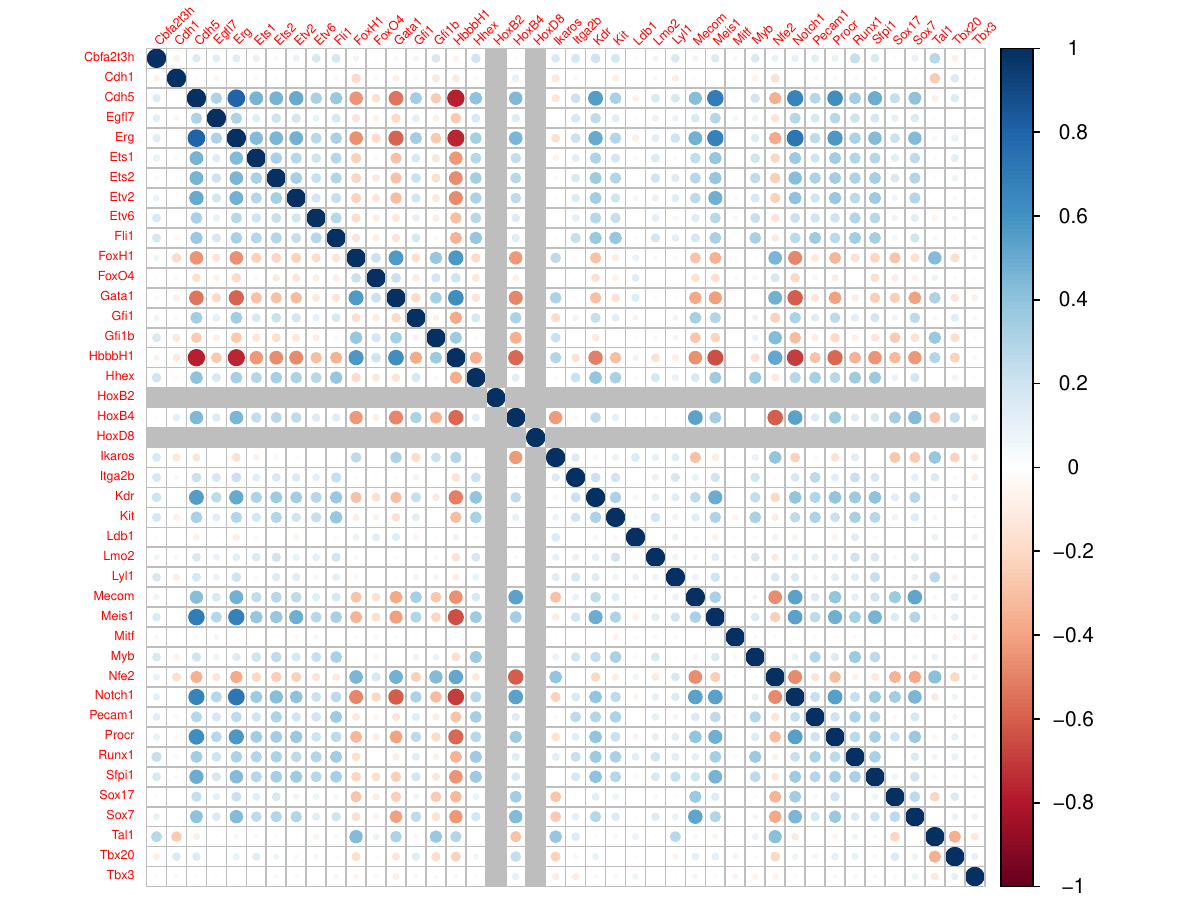}}\\
    \subfloat[Heat map of the genes in 4SG cell.] {
       \includegraphics[width = 1.0\textwidth, height = 0.55\textwidth]{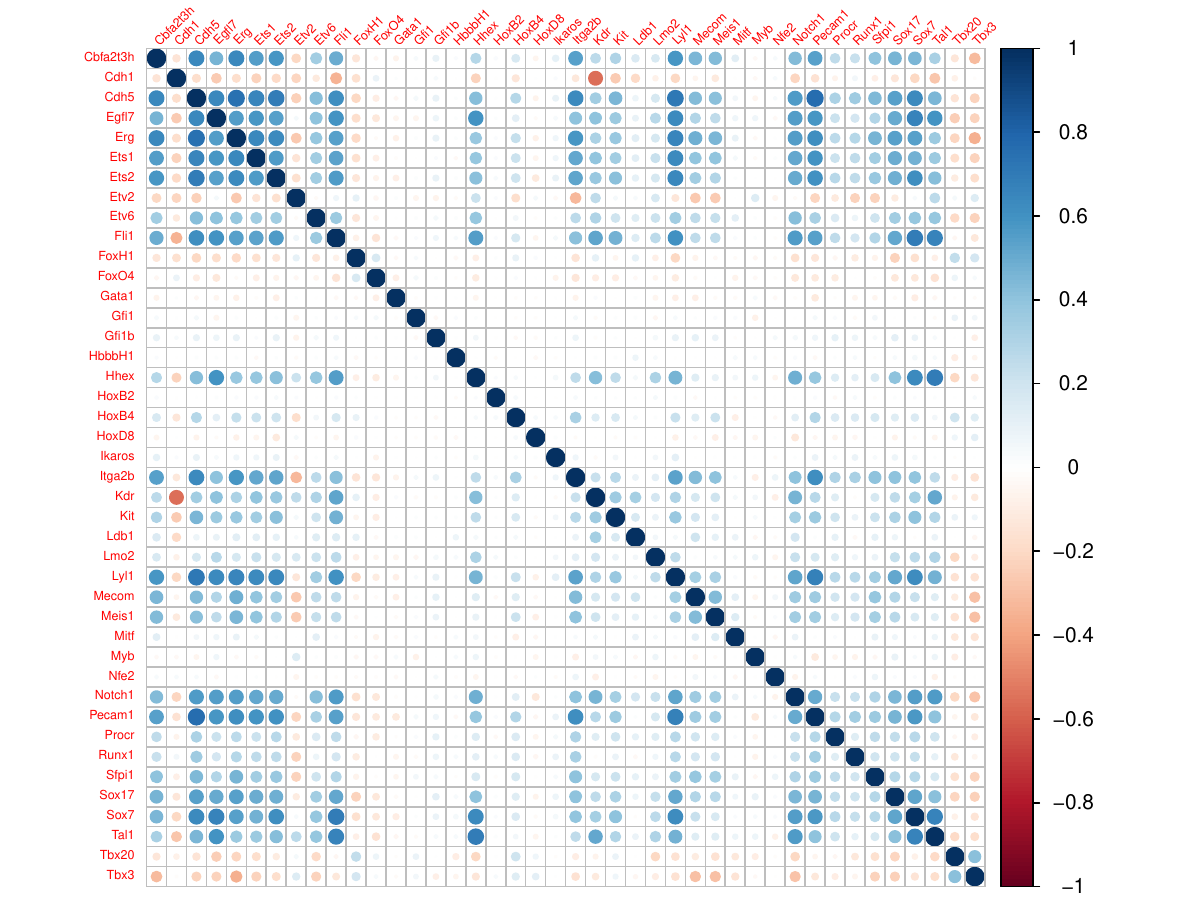}}
    \caption{Heat map of the genes in the cell.}
    \label{figp5}
\end{figure}

\begin{table}[h!]\footnotesize\tabcolsep 11pt
\begin{center}
\caption{The p values of the proposed method and the other two nonparametric methods for testing the independence of two types of genes from the remaining genes under 4SFG$^{-}$ and 4SG cells. 
}
\label{tablelabel}
\vspace{-2mm}
\end{center}
\begin{center}
\begin{tabular}{lccccc}\toprule
Cell & Test Genes & Rolin & Cross &HSIC.gamma & HSIC \\[5pt]
\hline
4SFG$^{-}$ & Group 1 - Remaining genes & 0.4698 & 0.1184 & 0.1101 &0.1825 \\
& Group 2 - Remaining genes & $< 0.0001$ & $< 0.0001$ & $< 0.0001$ & 0.0025 \\
\hline
4SG & Group 1 - Remaining genes & $< 0.0001$ & $< 0.0001$ & $< 0.0001$ &0.0025 \\
& Group 2 - Remaining genes &0.4521 & 0.3305 & 0.3887 &0.3500  \\
\bottomrule
\end{tabular}
\end{center}
 \end{table}

The independence test is performed to examine the relationship between two groups of variables and the remaining variables separately under each of the two cells, and the p-value results are summarized in Table \ref{tablelabel}. 




The study of transcription factor correlations in 4SFG$^{-}$ and 4SG cells shows that these cell types interact with genes differently. In 4SFG$^{-}$ cells, non-significant p-values indicate that we cannot reject the null hypothesis, suggesting that the first group may be independent of the remaining genes in their expression patterns. This implies that the current evidence does not support a statistically significant association between the first group of genes and the remaining genes. Significant p-values, on the other hand, demonstrate that the second group of variables in 4SFG$-$ cells co-expresses with the other genes more frequently than one would anticipate by chance.

Interestingly, this pattern is reversed in 4SG cells. Here, the first group of variables exhibits significant co-expression with the remaining genes, while the second group shows no significant relationships.

These results indicate that the two groups of genes display different co-expression patterns with the remaining genes in the two types of cells. This result is also consistent with the other two non-parametric independence test methods. 

Further functional analyses and experimental validations are crucial to fully understanding the biological relevance of the observed correlations. These steps will help confirm whether the correlation holds biological significance and uncover the underlying mechanisms driving these relationships. We suggest conducting pathway or enrichment analyses to explore these potential connections more deeply. Such analyses can provide insights into whether the co-expressed genes are linked to specific biological pathways or functional categories, thereby enhancing our understanding of their roles and interactions.

\section{Discussion}

This research provides a paradigm shift in hypothesis testing for random objects with complex geometric structures, primarily through the introduction of the Gaussian random fields and random-lifter method technique. The independence measure assesses the disparity between two Gaussian random fields that are defined on the space of the original random variables. The random-lifter method enhances the initial Gaussian random field by incorporating an additional one-dimensional Gaussian random field.
Independence and the zero property of the measure depend on the positive-definiteness of the covariance functions.

To conduct the independence test based on the samples, we introduce a studentized test statistic.  \BLUE{The} cornerstone of our innovation lies in its strategic utilization of the central limit theorem applicable to degenerate U-statistics. Since the random-lifter method compresses the original covariance function, it leads to a phase change in the asymptotic distribution under the appropriate design of the additional one-dimensional Gaussian random field.  

To gain further insights into the proposed method, we examine its theoretical performance compared to the Hilbert-Schmidt Independence Criterion (HSIC \citep{gretton2005kernel}) and show that the power differential between our method and HSIC predominantly hinges on a constant factor, which tends to approach 1 in many scenarios. Furthermore, we establish a more comprehensive upper bound for the minimax properties.
Empirical validation is demonstrated through an array of simulations and their application to a real-world dataset.
Specifically, the results highlight that while our method rivals the current kernel methods in terms of power, but surpasses it in terms of computational efficiency.

For future research, it would be valuable to conduct a more detailed analysis of the bandwidth of the random-lifter method. One potential direction is to explore techniques for learning an optimal bandwidth specific to the data. Additionally, it would be worthwhile to consider a broader range of bandwidth values, including scenarios where the bandwidth is held constant, approaches infinity, or decreases more rapidly toward zero.

Exploring the minimax properties of the method in more general settings would also be beneficial, as this would involve assessing the method's performance under a variety of conditions beyond its current scope. Furthermore, extending the random-lifter method to high-dimensional scenarios and investigating its application to multiple testing represent significant challenges that merit further study. We leave these topics for future research.

\begin{appendix}
   \section{Technical Proofs}

In the following section, we will present the main proofs of the theorems discussed in the main text. To start, we will establish a crucial result concerning the kernel $\kappa_b$, which plays a fundamental role in our analysis.
\begin{lemma}\label{A1::lem::kerz}
    Let $Z$ be a random variable defined on the set $\mathcal{D}$. We can derive several key properties for the random-lifter terms as follows:
    \begin{align*}
        E[\kappa_b(Z_1, Z_2) \mid Z_1 ] &= b F_1(Z_{1}) g(Z_1) + O_p(b^2); \\
        E [\kappa_b(Z_1, Z_2)] &=   b \int F_1(z) g^2(z) \mathcal{I}\{z \in \mathcal{D}\} d z  + O_p(b^2); \\
    E [\kappa^2_b(Z_1, Z_2)] &=   b \int F_2(z) g^2(z) \mathcal{I}\{z \in \mathcal{D}\} d z  + O_p(b^2); \\
     E [E[\kappa_b(Z_1, Z_2) \mid Z_1 ]^2] &=   b^2 \int F_1^2(z) g^3(z) \mathcal{I}\{z \in \mathcal{D}\} d z  + O_p(b^3)
    \end{align*}
    where $F_i(z) = \int k^i(u) \mathcal{I}\{u \in \mathcal{D}(z)\} d u$, $\mathcal{D}(z) = \{u | z + bu \in \mathcal{D}\}.$
\end{lemma}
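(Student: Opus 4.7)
The plan is to reduce all four identities to a single change of variables $u = (z_2 - z_1)/b$, combined with a Taylor expansion of the density $g$ of $Z$ around $z_1$. Symmetry of $k$ from Assumption~\ref{ass:kerz}(iii) ensures that $k(-u) = k(u)$, so the Jacobian and sign from the substitution are trivial; the indicator $\mathcal{I}\{u \in \mathcal{D}(z_1)\}$ is exactly what emerges from enforcing $z_1 + bu \in \mathcal{D}$ after the substitution.

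First I would establish the conditional identity. Writing
\begin{align*}
E[\kappa_b(Z_1, Z_2) \mid Z_1 = z_1] = \int_{\mathcal{D}} k\!\left(\tfrac{z_1 - z_2}{b}\right) g(z_2)\, dz_2,
\end{align*}
and applying the substitution gives $b \int k(u)\, g(z_1 + bu)\, \mathcal{I}\{u \in \mathcal{D}(z_1)\}\, du$. A first-order Taylor expansion $g(z_1 + bu) = g(z_1) + O(b)$, valid uniformly on the effective support of $k$ under mild regularity of $g$, isolates the leading term $b F_1(z_1) g(z_1)$ with an $O(b^2)$ remainder because $\int k(u)\, du = 1$ and the linear correction $\int u\, k(u)\, du$ vanishes by symmetry in the interior of $\mathcal{D}$.

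The remaining three identities then follow quickly. For part (2) I would apply the tower property and integrate the conditional identity against $g$, producing $b \int F_1(z) g^2(z)\, \mathcal{I}\{z \in \mathcal{D}\}\, dz + O(b^2)$. Part (3) uses the same substitution applied to $\kappa_b^2$, yielding $b \int k^2(u) g(z_1 + bu)\, \mathcal{I}\{u \in \mathcal{D}(z_1)\}\, du = b F_2(z_1) g(z_1) + O(b^2)$; here Assumption~\ref{ass:kerz}(ii) with $\delta > 0$ guarantees $\int k^2 < \infty$, so the constants are finite. Taking an outer expectation gives the stated form. Part (4) is purely algebraic: squaring the representation from part (1) produces a leading $b^2 F_1^2(Z_1) g^2(Z_1)$ term, whose outer expectation equals $b^2 \int F_1^2(z) g^3(z)\, dz$, while cross and remainder-squared contributions are collected into $O(b^3)$.

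The main obstacle I anticipate is controlling the Taylor remainder uniformly near the boundary of $\mathcal{D}$. When $z_1$ lies within distance $O(b)$ of $\partial \mathcal{D}$, the truncation $\mathcal{I}\{u \in \mathcal{D}(z_1)\}$ breaks the symmetry that would otherwise annihilate the linear term in the expansion. The remedy is to observe that this boundary layer has $g$-measure $O(b)$, so a crude pointwise bound on the truncated portion contributes only $O(b^2)$ to the conditional expectation (and $O(b^3)$ after squaring and taking expectation), matching the stated error orders. Making this rigorous requires either piecewise smoothness of $g$ on $\mathcal{D}$ or a short separate argument for the boundary region; once that is in place, the four identities assemble by direct computation.
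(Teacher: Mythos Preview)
Your approach is essentially the same as the paper's: both proofs substitute $u=(z_2-z_1)/b$ (the paper writes $Z_2=Z_1+bZ_{12}$), Taylor-expand $g(z_1+bu)$ about $z_1$, read off the leading $bF_1(z_1)g(z_1)$ term, and then handle the remaining three identities by the tower property, the analogous computation with $k^2$, and squaring. One minor simplification relative to your plan: the paper does not invoke symmetry of $k$ or any boundary-layer argument at this stage, because the linear Taylor term already carries a prefactor $b^2$ and is absorbed directly into $O_p(b^2)$; the symmetry observation is only used later (Remark~\ref{rem::A}) to sharpen the error to $O_p(b^3)$ when $\mathcal{D}=\mathbf{R}$, so your boundary discussion is correct but not needed for the lemma as stated.
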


\begin{proof}[Proof of Lemma {\rm \ref{A1::lem::kerz}}]
Let's denote the variable transformation of $Z_2$ as $Z_2 = Z_1 + b Z_{12}$. Using Taylor expansion, we can derive the conditional expectation of $\kappa_b(Z_1, Z_2)$ given $Z_1$ as follows:
   \begin{align*}
       & E[\kappa_b(Z_1, Z_2) \mid Z_1 ] \\
    = & \int \kappa_b(Z_1, Z_2) g(Z_2) \mathcal{I}\{Z_2 \in \mathcal{D}\} d Z_2 \\
    = & b \int k(Z_{12}) g(Z_{1} + b Z_{12})  \mathcal{I}\{Z_{1} + b Z_{12} \in \mathcal{D}\} d Z_{12} \\
    = & b \int k(Z_{12}) [g(Z_{1}) + b Z_{12} g^{\prime}(Z_1) + O_p(b^2)] \mathcal{I}\{Z_{12} \in \mathcal{D}(Z_1) \} d Z_{12} \\
    = & b g(Z_{1}) \int k(Z_{12})  \mathcal{I}\{Z_{12} \in \mathcal{D}(Z_1) \} d Z_{12} + O_p(b^2) \\
    = & b F_1(Z_1) g(Z_{1})  + O_p(b^2).
    \end{align*}
    Taking the expectation with respect to $Z_1$, we obtain:
     \begin{align*}
    &E [\kappa_b(Z_1, Z_2)] \\
    =& E\big[ E [\kappa_b(Z_1, Z_2)\mid Z_1] \big] \\
    =& b E [F(Z_1) g(Z_{1})]   + O_p(b^2)\\
    =& b \int F_1(Z_1) g^2(Z_1) \mathcal{I}\{Z_1 \in \mathcal{D}\} d Z_1+ O_p(b^2).
    \end{align*}
    For the expectation of the squared term, we have:
    \begin{align*}
        &E [E[\kappa_b(Z_1, Z_2) \mid Z_1 ]^2] \\
        = & E[b F_1^2(Z_1) g^2(Z_{1})] + O_p(b^3) \\
        = &  b^2 \int F_1^2(Z_1) g^3(Z_{1})\mathcal{I}\{Z_1 \in \mathcal{D}\} d Z_1 + O_p(b^3)
    \end{align*}
    Finally, for the expectation of $\kappa^2_b(Z_1, Z_2)$, we have:
     \begin{align*}
       & E [\kappa^2_b(Z_1, Z_2)] \\
       = & \int \int \kappa^2_b(Z_1, Z_2) g(Z_2) \mathcal{I}\{Z_2 \in \mathcal{D}\} d Z_2 g(Z_1) \mathcal{I}\{Z_1 \in \mathcal{D}\} d Z_1\\
    = & b \int \int k^2(Z_{12}) [g(Z_{1}) + b Z_{12} g^{\prime}(Z_1) + O_p(b^2)] \mathcal{I}\{Z_{12} \in \mathcal{D}(Z_1) \} d Z_{12} g(Z_1) \mathcal{I}\{Z_1 \in \mathcal{D}\} d Z_1\\
       = & b \int F_2(Z_1) g^2(Z_1) \mathcal{I}\{Z_1 \in \mathcal{D}\} d Z_1  + O_p(b^2); 
    \end{align*}
    This concludes the proof.
    
\end{proof}

\begin{remark}\label{rem::A}
    If $Z$ is defined on $\mathbf{R}$, the expressions in Lemma~\ref{A1::lem::kerz} can be simplified as follows:
    \begin{align*}
        E[\kappa_b(Z_1, Z_2) \mid Z_1 ] &= b g(Z_1) + O_p(b^3); \\
        E [\kappa_b(Z_1, Z_2)] &=   b \int  g^2(z) d z  + O_p(b^3); \\
    E [\kappa^2_b(Z_1, Z_2)] &=   b \int k^2(u) du \int g^2(z) d z  + O_p(b^3); \\
     E [E[\kappa_b(Z_1, Z_2) \mid Z_1 ]^2] &=   b^2 \int g^3(z) d z  + O_p(b^3)
    \end{align*}

   The key difference here lies in the omission of the function $F_i(z)$ from the results and a change in the order of the residual term from $O_p(b^2)$ to $O_p(b^3)$, attributed to the symmetry of $k(u)$ at $u = 0$.
\end{remark}

Now, let's delve into the implications of the kernel $\kappa_b(\cdot, \cdot)$ within different intervals. Imagine two intervals of equal length but at different positions, namely $[0, c]$ and $[c_0, c_0 + c]$. We have the relationship between their respective density functions as follows: $g_1(z) = g_2(z + c_0)$, where $z$ belongs to the interval $[0, c]$.

We can define two sets, $\mathcal{D}_1(z)$ and $\mathcal{D}_2(z)$, as:
\begin{align*}
    \mathcal{D}_1(z) = \left\{u \mid \frac{- z}{b} \leq u \leq \frac{c -z}{b} \right\} \text{ and } \mathcal{D}_2(z) = \left\{u \mid \frac{c_0 - z}{b} \leq u \leq \frac{c_0 + c -z}{b} \right\}.
\end{align*}
And it follows that:
\begin{align*}
    F_{2i}(z + c_0) = & \int k^i(u) \mathcal{I}\{u \in \mathcal{D}_2(z + c_0)\} d u \\
    = & \int k^i(u) \mathcal{I}\{u \in \mathcal{D}_1(z)\} d u \\
    = & F_{1i}(z). 
\end{align*}
This shows that all four terms in Lemma~\ref{A1::lem::kerz} are equal for both intervals $[0, c]$, and $[c_0, c_0 + c]$.  Consequently, when considering random variables $z$ defined within bounded intervals $[c_0, c_1]$, we only need to focus on the case where $c \geq 0$, specifically the interval $[0, c]$.

Next, we examine the scale transformation of the random variable $Z$. Assuming that the density $g$ is defined on the interval $[0, c]$, we can construct a new random variable, denoted as $\tilde{Z}$, defined on the interval $[0, 1]$. The density of $\tilde{Z}$ is adjusted such that $\tilde{g}(\tilde{z}) = c g(c \tilde{z})$, where $c \tilde{z} = z$, and $\tilde{z} \in [0, 1]$.

With this transformation, we can relate the corresponding functions:

\begin{align*}
    \tilde{F}_i(\tilde{z}) = & \int k^i(\tilde{u}) \mathcal{I}\left\{\frac{-\tilde{z}}{b} \leq \tilde{u} \leq \frac{1 - \tilde{z}}{b} \right\} d\tilde{u} \\
    = & \int k^i(\tilde{u}) \mathcal{I}\left\{\frac{-z}{ b} \leq c\tilde{u} \leq \frac{c - z}{ b} \right\} d\tilde{u} \\
     = & \frac{1}{c} \int k^i\left(\frac{u}{c}\right) \mathcal{I}\left\{\frac{-z}{b} \leq u \leq \frac{c - z}{b} \right\} du.
\end{align*}
Since $F_i(z) = \int k^i(u) \mathcal{I}\{u \in \mathcal{D}(z)\} du$, we can determine two constant sets, denoted as ${B_1^i}$ and ${B_2^i}$, such that:
\begin{align*}
    \frac{B_1^i}{c} F_i(z) \leq  \tilde{F}_i(\tilde{z}) \leq \frac{B_2^i}{c} F_i(z).
\end{align*}
Now, considering the first term in  Lemma~\ref{A1::lem::kerz}, we find that:
\begin{align*}
     E[\kappa_b(\tilde{Z}_1, \tilde{Z}_2) \mid \tilde{Z}_1 ] &= b \tilde{F}_1(\tilde{Z}_{1}) \tilde{g}(\tilde{Z}_1) + O_p(b^2) \\
     & = c b \tilde{F}_1(\tilde{Z}_{1}) g(Z_1) + O_p(b^2).
\end{align*}
Consequently, we can express this relationship as:
\begin{align*}
    B_1^i E[\kappa_b(Z_1, Z_2) \mid Z_1 ] \leq E[\kappa_b(\tilde{Z}_1, \tilde{Z}_2) \mid \tilde{Z}_1 ] \leq B_2^i E[\kappa_b(Z_1, Z_2) \mid Z_1 ].
\end{align*}

This demonstrates that for any random variable $Z$ defined on the interval $[0, c]$, we can introduce a random variable $\tilde{Z}$ defined on $[0, 1]$ through a scale transformation. By adjusting the constants $B_1^i$ and $B_2^i$, we can control the four terms in Lemma~\ref{A1::lem::kerz}. Since the bandwidth parameter $b$ is a tuning parameter, any differences between $Z$ and $\tilde{Z}$ can be attributed to the bandwidth $b$. Therefore, for all scenarios involving bounded intervals, our analysis can focus on random variables defined within the interval $[0, 1]$.

Next, we present the key proofs of the theorems from the main text. To begin, we establish a fundamental result concerning the test statistic:
\begin{lemma}
    Under the null hypothesis, $T_{n, b}$  constitutes a first-order degenerate U-statistic.
    \label{A1::lemma1}
\end{lemma}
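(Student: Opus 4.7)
The plan is to verify the defining condition for first-order degeneracy of a symmetric U-statistic of order $4$: namely, that $E[\bar{h}_b(\vesub{W}{1},\vesub{W}{2},\vesub{W}{3},\vesub{W}{4}) \mid \vesub{W}{1}] = E[\bar{h}_b(\vesub{W}{1},\vesub{W}{2},\vesub{W}{3},\vesub{W}{4})]$ almost surely, together with non-vanishing of the second-order projection. Under $H_0$ the unconditional expectation equals $T_{\zeta,\eta,\kappa_b}(\ve{X},\ve{Y}) = \operatorname{HSIC}(\omega,\zeta,\eta)\,E[\kappa_b(Z_1,Z_2)] = 0$ by the factorization identity established immediately after the definition of $T_{\zeta,\eta,\kappa_b}$, so the task reduces to proving $E[\bar{h}_b \mid \vesub{W}{1}] = 0$ a.s.

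The key structural observation I will exploit is that the unsymmetrized kernel splits as a product of an $X$-part and a $(\ve{Y},Z)$-part. Writing $h_b = \tfrac{1}{4}\, A \cdot B$, where $A(\vesub{X}{1},\vesub{X}{2},\vesub{X}{3},\vesub{X}{4}) = \zeta(\vesub{X}{1},\vesub{X}{2}) + \zeta(\vesub{X}{3},\vesub{X}{4}) - \zeta(\vesub{X}{1},\vesub{X}{3}) - \zeta(\vesub{X}{2},\vesub{X}{4})$ and $B$ is the analogous expression built from the kernel $\eta\kappa_b$ applied to $(\vesub{Y}{i},Z_i)$, the assumption $H_0$ makes $\ve{X}$ independent of $(\ve{Y},Z)$, so conditioning on $\vesub{W}{1}=(\vesub{X}{1},\vesub{Y}{1},Z_1)$ gives $E[h_b \mid \vesub{W}{1}] = \tfrac{1}{4}\, E[A \mid \vesub{X}{1}]\, E[B \mid \vesub{Y}{1},Z_1]$. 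A one-line computation, using $E[\zeta(\vesub{X}{i},\vesub{X}{j}) \mid \vesub{X}{1}] = m_\zeta(\vesub{X}{1})$ when $\vesub{X}{1}$ is one of the two arguments and $= E[\zeta(\vesub{X}{1},\vesub{X}{2})]$ otherwise, shows that the four $\zeta$-terms in $A$ cancel in pairs, yielding $E[A \mid \vesub{X}{1}] = 0$ and hence $E[h_b \mid \vesub{W}{1}] = 0$.

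It remains to propagate this vanishing through the symmetrization. Since $\bar{h}_b$ is an average of three permutations of $h_b$, one checks case-by-case that in each permutation the $X$-factor still has the $+\,+\,-\,-$ sign pattern in which every index appears exactly once in a $+$ term and once in a $-$ term, so the same pairwise cancellation yields $E[\cdot \mid \vesub{X}{1}]=0$; summing and dividing by three gives $E[\bar{h}_b \mid \vesub{W}{1}]=0$. To confirm that the degeneracy is of first order and not higher, I would then verify that $E[\bar{h}_b \mid \vesub{W}{1}, \vesub{W}{2}]$ is non-constant by a parallel calculation: conditioning on two coordinates leaves a genuine $\zeta(\vesub{X}{1},\vesub{X}{2})\eta(\vesub{Y}{1},\vesub{Y}{2})\kappa_b(Z_1,Z_2)$ contribution that does not cancel, so the second-order projection is nontrivial. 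The main (and essentially only) subtlety is the bookkeeping of the three symmetrized permutations; no control over the random-lifter kernel $\kappa_b$ itself enters, because the first-order degeneracy is driven entirely by the built-in antisymmetry of the $X$-factor $A$.
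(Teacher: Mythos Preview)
Your proposal is correct and follows essentially the same approach as the paper: both proofs show $E[h_b(\cdot)\mid \vesub{W}{1}]=0$ for each of the three permutations entering $\bar h_b$ and then sum. Your product factorization $h_b=\tfrac14\,A\cdot B$ together with $E[A\mid \vesub{X}{1}]=0$ is a slightly cleaner packaging of the same cancellation the paper writes out directly; the paper simply collapses the cross terms and observes the difference of two identical conditional expectations vanishes. One minor remark: the paper's proof of this lemma stops at $E[\bar h_b\mid \vesub{W}{1}]=0$ and does not separately verify that the second-order projection is nontrivial; that computation is deferred to the derivation of $\sigma_{b2}^2$ in the proof of Proposition~\ref{Prop::var}, so your extra check is not needed for the lemma as stated.
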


\begin{proof}[Proof of Lemma {\rm \ref{A1::lemma1}}]
We can readily verify that:
    \begin{align*}
       &E \big[ h_b(\vesub{W}{1}, \vesub{W}{2}, \vesub{W}{3}, \vesub{W}{4}) | \vesub{W}{1} \big] \\
        =& 2 E \big[\zeta(\vesub{X}{1}, \vesub{X}{2})\eta(\vesub{Y}{1}, \vesub{Y}{2})\kappa_b(Z_1, Z_2) | \vesub{W}{1} \big] - 2 E \big[\zeta(\vesub{X}{1}, \vesub{X}{3})\eta(\vesub{Y}{1}, \vesub{Y}{3})\kappa_b(Z_1, Z_3) | \vesub{W}{1} \big] \\
       = & 0.
    \end{align*}
  Similarly, we can demonstrate that:
    \begin{align*}
        E& \big[ h_b(\vesub{W}{3}, \vesub{W}{2}, \vesub{W}{1}, \vesub{W}{4}) | \vesub{W}{1} \big] = 0,\\
        E& \big[ h_b(\vesub{W}{1}, \vesub{W}{2}, \vesub{W}{4}, \vesub{W}{3}) | \vesub{W}{1} \big] = 0.
    \end{align*}
   By summing up these three terms, we obtain:
    \begin{align*}
        E\big[\Bar{h}_b(\vesub{W}{1}, \vesub{W}{2}, \vesub{W}{3}, \vesub{W}{4}) | \vesub{W}{1} \big] = 0.
    \end{align*}
Thus, we have completed the proof.
\end{proof}

Next, we consider the consistency of the variance estimate. Since $\kappa_{b}(Z_1, Z_2)$ is also related to $n$, the consistency does not always hold. We introduce the following lemma:
\begin{lemma}\label{A1::lemvarest}
    Under the null hypothesis, assume that $E \big[\eta(\vesub{Y}{1}, \vesub{Y}{2}) \big]^{2 + \delta} < \infty $  hold for $\delta > 0$ and $b \rightarrow 0$, $nb \rightarrow \infty$, we have
    \begin{align*}
        E\left[\left|\frac{2}{n(n-1)b} \sum_{1 \leq i < j \leq n} \eta^2(\vesub{Y}{i}, \vesub{Y}{j})\kappa_b^2(Z_i, Z_j) - \frac{1}{b}E[\eta^2(\vesub{Y}{1}, \vesub{Y}{2})\kappa_b^2(Z_1, Z_2)]\right|^{1 + \frac{\delta}{2}}\right] \rightarrow 0.
    \end{align*}
    
\end{lemma}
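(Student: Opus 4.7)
The plan is to recast the target quantity as $U_n-E[U_n]$, where $U_n$ is the symmetric U-statistic
$$U_n := \binom{n}{2}^{-1}\sum_{i<j}\xi_{ij},\qquad \xi_{ij}:=\tfrac{1}{b}\eta^2(\vesub{Y}{i},\vesub{Y}{j})\kappa_b^2(Z_i,Z_j),$$
and to bound it through a Hoeffding decomposition, monitoring how the prefactor $1/b$ is cancelled by the $O(b)$ scaling coming from Lemma~\ref{A1::lem::kerz}.

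First, I would do the moment bookkeeping for $\xi_{12}$. Under the null, $(\vesub{Y}{i})$ and $(Z_i)$ are independent, so moments of $\xi_{12}$ factorize; applying Lemma~\ref{A1::lem::kerz} (extended to $\kappa_b^r$ by the same Taylor argument, using Assumption~\ref{ass:kerz}(ii) for $r=2+\delta$) gives $E[\kappa_b^r(Z_1,Z_2)]=O(b)$ and $E[\kappa_b^r(Z_1,Z_2)\mid Z_1]=O(b)$. Combined with $E[\eta^{2+\delta}(\vesub{Y}{1},\vesub{Y}{2})]<\infty$, this yields $E[\xi_{12}]=O(1)$ and $E[|\xi_{12}|^{1+\delta/2}]=O(b^{-\delta/2})$. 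Crucially, the one-sample projection
$$h_1(\vesub{Y}{1},Z_1):=E[\xi_{12}\mid\vesub{Y}{1},Z_1]=\tfrac{1}{b}\,E[\eta^2(\vesub{Y}{1},\vesub{Y}{2})\mid\vesub{Y}{1}]\,E[\kappa_b^2(Z_1,Z_2)\mid Z_1]$$
has its $1/b$ exactly absorbed by $E[\kappa_b^2(Z_1,Z_2)\mid Z_1]=O(b)$, so conditional Jensen produces $E[|h_1|^{1+\delta/2}]\le C\,E[\eta^{2+\delta}]<\infty$ uniformly in $b$.

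Next I would invoke the Hoeffding decomposition $U_n-E[U_n]=2\bar H_1+H_2$, with $\bar H_1:=n^{-1}\sum_i h_1^*(\vesub{Y}{i},Z_i)$ (where $h_1^*:=h_1-E[h_1]$) the centred linear part and $H_2$ a completely degenerate second-order U-statistic in $\xi_{ij}-h_1(\vesub{Y}{i},Z_i)-h_1(\vesub{Y}{j},Z_j)+E[\xi_{12}]$. Since $\bar H_1$ is an average of i.i.d., centred summands having a $(1+\delta/2)$-moment bounded uniformly in $b$, the Marcinkiewicz--Zygmund inequality yields $E[|\bar H_1|^{1+\delta/2}]=O(n^{-\delta/2})\to 0$. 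For the degenerate remainder, a second-order Marcinkiewicz--Zygmund / von Bahr--Esseen-type bound for degenerate U-statistics, combined with $E[|\xi_{12}|^{1+\delta/2}]=O(b^{-\delta/2})$, gives $E[|H_2|^{1+\delta/2}]=O\bigl(n^{-(1+\delta/2)}b^{-\delta/2}\bigr)=O\bigl(n^{-1}(nb)^{-\delta/2}\bigr)$, which vanishes under $nb\to\infty$. Assembling the two estimates via $|a+b|^{1+\delta/2}\lesssim|a|^{1+\delta/2}+|b|^{1+\delta/2}$ delivers the claim.

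The main obstacle is that $\xi_{12}$ itself has $L^{1+\delta/2}$-norm blowing up like $b^{-\delta/2}$, so any inequality that applies $|\cdot|^{1+\delta/2}$ pairwise (e.g., plain Jensen on $\binom{n}{2}^{-1}\sum|\xi_{ij}-E\xi_{12}|^{1+\delta/2}$) fails to produce any decay. The whole argument relies on funnelling this explosive mass into the completely degenerate part $H_2$, where the double-averaging rate overwhelms the $b^{-\delta/2}$ blow-up under $nb\to\infty$, while leaving the linear projection $\bar H_1$ with $b$-uniform moments thanks to the exact cancellation $(1/b)\cdot E[\kappa_b^2\mid Z_1]=O(1)$.
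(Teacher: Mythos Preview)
Your approach is correct in spirit but takes a genuinely different route from the paper. The paper does not decompose at all: it simply views the whole expression as a second-order U-statistic and applies Theorem~1 of \citet{chen1980} in one stroke, obtaining
\[
E\big[|U_n-E U_n|^{1+\delta/2}\big]\;\le\;\frac{C}{n^{\delta/2}}\,E\big[|h-\theta|^{1+\delta/2}\big]
\;\le\;\frac{C'}{n^{\delta/2}b^{1+\delta/2}}\,E\big[\eta^{2+\delta}\big]\,E\big[\kappa_b^{2+\delta}\big]
=O\!\left((nb)^{-\delta/2}\right),
\]
using only $E[\kappa_b^{2+\delta}(Z_1,Z_2)]=O(b)$ from Lemma~\ref{A1::lem::kerz}. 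Your Hoeffding decomposition is more refined: by isolating the projection $h_1$ and exploiting the \emph{pointwise} cancellation $(1/b)\cdot E[\kappa_b^2\mid Z_1]=O(1)$, you extract a linear part with $b$-uniform $L^{1+\delta/2}$-moments and hence rate $n^{-\delta/2}$, which is strictly sharper than the paper's $(nb)^{-\delta/2}$. One small correction: your stated rate $O(n^{-(1+\delta/2)}b^{-\delta/2})$ for the degenerate piece $H_2$ is too optimistic under only an $L^{1+\delta/2}$ assumption on the kernel---iterated von~Bahr--Esseen (or decoupling plus Marcinkiewicz--Zygmund) gives $E[|H_2|^{1+\delta/2}]=O(n^{-\delta}b^{-\delta/2})=O\big(n^{-\delta/2}(nb)^{-\delta/2}\big)$, which still tends to zero under $nb\to\infty$, so the conclusion is unaffected. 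In short: the paper's argument is a one-liner but yields the cruder rate $(nb)^{-\delta/2}$; yours is longer but recovers the better rate $n^{-\delta/2}$ by recognising that the blow-up from the $1/b$ factor lives entirely in the degenerate remainder.
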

\begin{proof}[Proof of Lemma {\rm \ref{A1::lemvarest}}]
    We can view the estimate of the variance as a new U statistic of two order. Based on Theorem 1 in \citep{chen1980}, we have
    \begin{align*}
       & E\left[\left|\frac{2}{n(n-1)b} \sum_{1 \leq i < j \leq n} \eta^2(\vesub{Y}{i}, \vesub{Y}{j})\kappa_b^2(Z_i, Z_j) - \frac{1}{b}E[\eta^2(\vesub{Y}{1}, \vesub{Y}{2})\kappa_b^2(Z_1, Z_2)]\right|^{1 + \frac{\delta}{2}}\right] \\
       \leq & \frac{B_1}{b^{1 + \frac{\delta}{2}}} E\left[|\eta^2(\vesub{Y}{1}, \vesub{Y}{2})\kappa_b^2(Z_1, Z_2) - E[\eta^2(\vesub{Y}{1}, \vesub{Y}{2})\kappa_b^2(Z_1, Z_2)]|^{1 +\frac{\delta}{2}} \right] \cdot n^{- \frac{\delta}{2}} \\
        \leq & \frac{B_2}{n^{\frac{\delta}{2}}b^{1 + \frac{\delta}{2}}} \left\{E\left[|\eta^2(\vesub{Y}{1}, \vesub{Y}{2})\kappa_b^2(Z_1, Z_2)|^{1 + \frac{\delta}{2}} \right] + \left[E[\eta^2(\vesub{Y}{1}, \vesub{Y}{2})\kappa_b^2(Z_1, Z_2)]\right]^{1 + \frac{\delta}{2}}  \right\} \\
        \leq & \frac{B_3}{n^{\frac{\delta}{2}}b^{1 + \frac{\delta}{2}}} \left\{E\left[\eta^{2 + \delta}(\vesub{Y}{1}, \vesub{Y}{2})\right] \cdot E\left[\kappa_b^{2+ \delta}(Z_1, Z_2) \right]   \right\}.
    \end{align*}
    where $B_1, B_2, B_3$ are constants. The last inequality follows from Jensen's inequality. Based on similar principles as outlined in Lemma~\ref{A1::lem::kerz}, we can obtain $E[\kappa_b(Z_1, Z_2)]^{2+ \delta} = O_p(b)$. Since $E \big[\eta(\vesub{Y}{1}, \vesub{Y}{2}) \big]^{2 + \delta} < \infty $, we can obtain
    \begin{align*}
        & E\left[\left|\frac{2}{n(n-1)b} \sum_{1 \leq i < j \leq n} \eta^2(\vesub{Y}{i}, \vesub{Y}{j})\kappa_b^2(Z_i, Z_j) - \frac{1}{b}E[\eta^2(\vesub{Y}{1}, \vesub{Y}{2})\kappa_b^2(Z_1, Z_2)]\right|^{1 + \frac{\delta}{2}}\right] \\
        =& O_p\left(\frac{1}{(nb)^{\frac{\delta}{2}}}\right) \rightarrow 0.
    \end{align*}
\end{proof}

\begin{proof}[Proof of Proposition {\rm \ref{Prop::var}}]
We can compute the variance of $T_{n, b}$  based on the properties of U-statistics as follows:
\begin{align*}
    \var(T_{n, b}) = \frac{1}{\CC{n}{4}} \sum_{k = 1}^4 \CC{4}{k} \CC{n-4}{4-k} \sigma_{bk}^2,
\end{align*}
where $\sigma_{bk}^2 = \var\left( E\big[\Bar{h}_b(\vesub{W}{1}, \vesub{W}{2}, \vesub{W}{3}, \vesub{W}{4}) | \vesub{W}{1}, \dots, \vesub{W}{k} \big]  \right)$.

Since $T_{n, b}$ is a first-order degenerate U-statistic, we have $\sigma_{b1} = 0$. To apply the H-decomposition of U-statistics, let's define:
    $$h_{b2}(\vesub{W}{1}, \vesub{W}{2}) = E\big[\Bar{h}_b(\vesub{W}{1}, \vesub{W}{2}, \vesub{W}{3}, \vesub{W}{4}) | \vesub{W}{1}, \vesub{W}{2} \big].$$
Under the null hypothesis, $\ve{X}$ and $\ve{Y}$ are independent.
We observe that:
\begin{align*}
    & E\big[h_b(\vesub{W}{3}, \vesub{W}{2}, \vesub{W}{1}, \vesub{W}{4}) | \vesub{W}{1}, \vesub{W}{2} \big] \\
    = &  \frac{1}{4} \big\{E[\zeta(\vesub{X}{3}, \vesub{X}{2}) \mid \vesub{X}{2}] + E[\zeta(\vesub{X}{1}, \vesub{X}{4}) \mid \vesub{X}{1}] - E[\zeta(\vesub{X}{3}, \vesub{X}{1}) \mid \vesub{X}{1}]- E[\zeta(\vesub{X}{2}, \vesub{X}{4}) \mid \vesub{X}{2}]  \big\}  \\
   & \cdot \big\{ E[\eta(\vesub{Y}{3}, \vesub{Y}{2})\kappa_b(Z_3, Z_2) \mid \vesub{W}{2}] + E[\eta(\vesub{Y}{1}, \vesub{Y}{4})\kappa_b(Z_1, Z_4) \mid \vesub{W}{1}] \\
   & \quad - E[\eta(\vesub{Y}{3}, \vesub{Y}{1})\kappa_b(Z_3, Z_1) \mid \vesub{W}{1}]- E[\eta(\vesub{Y}{2}, \vesub{Y}{4})\kappa_b(Z_2, Z_4) \mid \vesub{W}{2}]  \big\} \\
   = & 0.
\end{align*}
Similarly, we have $E\big[h_b(\vesub{W}{1}, \vesub{W}{2}, \vesub{W}{3}, \vesub{W}{4}) | \vesub{W}{1}, \vesub{W}{2} \big] = E\big[h_b(\vesub{W}{1}, \vesub{W}{2}, \vesub{W}{4}, \vesub{W}{3}) | \vesub{W}{1}, \vesub{W}{2} \big]$.

Since $A_1 = \int F_1(z) g^2(z) d z, A_2 = \int F_1^2(z) g^3(z)  d z, A_3 = \int F_2(z) g^2(z)  d z$, using Lemma~\ref{A1::lem::kerz}, we obtain the expression for $h_{b2}(\vesub{W}{1}, \vesub{W}{2})$:
\begin{align*}
   & h_{b2}(\vesub{W}{1}, \vesub{W}{2})\\ = & \frac{1}{6} \big\{\zeta(\vesub{X}{1}, \vesub{X}{2})  + E[\zeta(\vesub{X}{3}, \vesub{X}{4})] - E[\zeta(\vesub{X}{1}, \vesub{X}{3}) \mid \vesub{X}{1}]- E[\zeta(\vesub{X}{2}, \vesub{X}{4}) \mid \vesub{X}{2}]  \big\}  \\
   & \cdot \big\{ \eta(\vesub{Y}{1}, \vesub{Y}{2})\kappa_b(Z_1, Z_2)  + E[\eta(\vesub{Y}{3}, \vesub{Y}{4})\kappa_b(Z_3, Z_4)] \\
   & \quad - E[\eta(\vesub{Y}{1}, \vesub{Y}{3})\kappa_b(Z_1, Z_3) \mid \vesub{W}{1}]- E[\eta(\vesub{Y}{2}, \vesub{Y}{4})\kappa_b(Z_2, Z_4) \mid \vesub{W}{2}]  \big\} \\
   = &\frac{1}{6} \big\{\zeta(\vesub{X}{1}, \vesub{X}{2})  + E[\zeta(\vesub{X}{3}, \vesub{X}{4})] - E[\zeta(\vesub{X}{1}, \vesub{X}{3}) \mid \vesub{X}{1}]- E[\zeta(\vesub{X}{2}, \vesub{X}{4}) \mid \vesub{X}{2}]  \big\}  \\
   & \cdot \big\{ \eta(\vesub{Y}{1}, \vesub{Y}{2})\kappa_b(Z_1, Z_2)  + A_1 b E[\eta(\vesub{Y}{3}, \vesub{Y}{4})]  - b F_1(Z_1) g(Z_1) E[\eta(\vesub{Y}{1}, \vesub{Y}{3})\mid \vesub{Y}{1}] \\
   & \quad - b F_1(Z_2) g(Z_2) E[\eta(\vesub{Y}{2}, \vesub{Y}{4})\mid \vesub{Y}{2}]  \big\} + O_p(b^3).
\end{align*}

For the variance of $h_{b2}(\vesub{W}{1}, \vesub{W}{2})$, we have
    \begin{align*}
        &\sigma_{b2}^2 \\
        = & E\big[h_{b2}^2(\vesub{W}{1}, \vesub{W}{2}) \big] \\
        = & \frac{1}{36} E \big\{\big[\zeta(\vesub{X}{1}, \vesub{X}{2}) - E[\zeta(\vesub{X}{1}, \vesub{X}{3}) \mid \vesub{X}{1}] - E[\zeta(\vesub{X}{2}, \vesub{X}{4}) \mid \vesub{X}{2}] + E[\zeta(\vesub{X}{3}, \vesub{X}{4})] \big]^2 \\
        &\cdot \big[\eta(\vesub{Y}{1}, \vesub{Y}{2})\kappa_b(Z_1, Z_2) - E[\eta(\vesub{Y}{1}, \vesub{Y}{3})\kappa_b(Z_1, Z_3) \mid \vesub{W}{1}] - E[\eta(\vesub{Y}{2}, \vesub{Y}{4})\kappa_b(Z_2, Z_4) \mid \vesub{W}{2}] \\ 
        & \quad + E[\eta(\vesub{Y}{3}, \vesub{Y}{4})\kappa_b(Z_3, Z_4)] \big]^2  \big\} + O_p(b^3)\\
        = & \frac{b}{36} \big\{E[\zeta^2(\vesub{X}{1}, \vesub{X}{2})] - 2 E\big[\zeta(\vesub{X}{1}, \vesub{X}{2})\zeta(\vesub{X}{1}, \vesub{X}{3}) \big] + \big[E(\zeta(\vesub{X}{1}, \vesub{X}{2}))\big]^2 \big\} \\
        &\cdot \big\{ A_3 E [\eta^2(\vesub{Y}{1}, \vesub{Y}{2})]  - 2A_2 b E\big[\eta(\vesub{Y}{1}, \vesub{Y}{2})\eta(\vesub{Y}{1}, \vesub{Y}{3}) \big] + A_1^2 b \big[ E(\eta(\vesub{Y}{1}, \vesub{Y}{2})) \big]^2  \big\} + O_p(b^3).
    \end{align*}
    Since $\zeta(\cdot, \cdot), \eta(\cdot, \cdot)$ all have finite second moment, $\sigma_{b2}^2$ is an infinitesimal of the same order as $b$. A similar calculation yields $\sigma_{b3}^2 = O_p(b)$ and $ \sigma_{b4}^2 = O_p(b)$.
    
Using the definition of the variance, we can show that:
$$\frac{1}{\CC{n}{4}} \sum_{k = 3}^4 \CC{4}{k} \CC{n-4}{4-k} \sigma_{bk}^2 = O_p\left(\frac{b}{n^3}\right),$$
which implies that the variance is:
\begin{align*}
    \var(T_{n, b}) &= \frac{1}{\CC{n}{4}} \CC{4}{2} \CC{n-4}{2} \sigma_{b2}^2 + O_p\left(\frac{b}{n^3}\right) \\
    &= \frac{72(n-4)(n-5)}{n(n-1)(n-2)(n-3)}\sigma_{b2}^2 + O_p\left(\frac{b}{n^3}\right).
\end{align*}

Now, let's consider the estimate of variance. Note that $\zeta(\vesub{X}{1}, \vesub{X}{2})$ is not related to $n$, 
\begin{align*}
    &(H \zeta H)_{ij} \\=& \frac{(n-1)^2}{n^2} \zeta(\vesub{X}{i}, \vesub{X}{j}) - \frac{(n-1)}{n^2} \left(\sum_{t \neq i} \zeta(\vesub{X}{i}, \vesub{X}{t}) + \sum_{s \neq j} \zeta(\vesub{X}{s}, \vesub{X}{j})\right) + \frac{1}{n^2} \sum_{s \neq i} \sum_{t \neq j} \zeta(\vesub{X}{s}, \vesub{X}{t}),
\end{align*}
 and $ (H \zeta H)_{ij}$ is a consistent estimator of $\zeta(\vesub{X}{i}, \vesub{X}{j}) - E[\zeta(\vesub{X}{i}, \vesub{X}{t}) \mid \vesub{X}{i}] - E[\zeta(\vesub{X}{j}, \vesub{X}{s}) \mid \vesub{X}{j}] + E[\zeta(\vesub{X}{1}, \vesub{X}{2})]$.

 For $\varrho$, we use the conclusion of LEMMA~\ref{A1::lemvarest} and we can get that $ (H \varrho H)_{ij}$ is also a consistent estimator.
 Thus, 
 $$E\big[\big|\frac{1}{n(n-1)}\sum_{i\neq j}M_{ij} - \sigma_{b2}^2\big|^{1 + \frac{\delta}{2}}\big] \rightarrow 0,$$ and $S_{n, b}^2$ is a consistent estimate of $\var(T_{n, b})$.
\end{proof}

For the independence-zero equivalence property, we provide the proof for the Theorem~\ref{Thm::iff}.
\begin{proof}[Proof of Theorem {\rm \ref{Thm::iff}}]

Since there is a one-to-one correspondence between positive definite kernels and RKHSs, we can define the maps $U, Q$ such that
\begin{align*}
    <U(\ve{x}), U(\vesup{x}{\prime})> = & \zeta(\ve{x}, \vesup{x}{\prime}) \\
    <Q(\ve{y}, z), Q(\vesup{y}{\prime}, z^{\prime})> = & \eta(\ve{y}, \vesup{y}{\prime}) \kappa_b(z, z^{\prime})
\end{align*}

By the definition, we have:
$$T_{\zeta, \eta, \kappa_b}(\ve{X}, \ve{Y})=E \big[ U(\ve{X})Q(\ve{Y}, Z) - E(U(\ve{X}))E(Q(\ve{Y}, Z)) \big]^2 \geq 0.$$

Now, let's assume that the probability measure defined by $Z$ is $\lambda$, the probability measure defined by $\ve{W}$ is $\omega^{\prime}$, and the probability measure defined by  $(\ve{Y}, Z)$ is $\nu^{\prime}$. If $\ve{X}$ and $\ve{Y}$ are independent, then $\omega^{\prime} = \mu \otimes \nu^{\prime} = \mu \otimes \nu \otimes \lambda$. This independence leads to the following: 
\begin{align*}
    T_{\zeta, \eta, \kappa_b}(\ve{X}, \ve{Y}) = & \int U(\ve{x}) Q(\ve{y}, z) (\omega^{\prime} - \mu \otimes \nu^{\prime})(d\ve{x}, d(\ve{y}, z)) = 0.
\end{align*}

Conversely, if $\cov^2_{\zeta, \eta, \kappa_b}(\ve{X}, \ve{Y})=0$, then 
\begin{align*}
    E \left[\int U(\ve{x}) \otimes Q(\ve{y}, z) (\omega^{\prime} - \mu \otimes \nu^{\prime})(d\ve{w}) \right]^2  = 0.
\end{align*}

Since $\zeta(\cdot, \cdot)$ is a positive-definite function,  
$$ \sum _{i=1}^{n}\sum _{j=1}^{n}c_{i}c_{j}\zeta\left(\vesub{x}{i}, \vesub{x}{j}\right)\geq 0$$
holds for any $\vesub{x}{1},\dots ,\vesub{x}{n}\in {\mathcal {X}}$, given 
$ n\in \mathbb {N} ,c_{1},\dots ,c_{n}\in \mathbb {R} $, using the property of positive-definite functions.

By approximating $\mu_1, \mu_2$ with probability measures of finite support, we can establish: $$\int \zeta\left(\ve{x}, \vesup{x}{\prime}\right) \left(\mu_1-\mu_2\right)^2\left(d\ve{x}, d\vesup{x}{\prime}\right)  \geq 0,$$ and the equality holds only when $\mu_1 = \mu_2$. 

Similarly, $\zeta(\ve{y}, \vesup{y}{\prime}), \kappa_b(z, z^{\prime})$ are also positive-definite functions, then for two probability measures $\nu_1, \nu_2$ and $\lambda_1, \lambda_2$ on $\mathcal{Y}$ and $\mathbf{R}$ respectively, we have 
\begin{align*}
    \int \zeta\left(\ve{y}, \vesup{y}{\prime}\right) \left(\nu_1-\nu_2\right)^2\left(d\ve{y}, d\vesup{y}{\prime}\right)   \geq 0, \\
    \int \kappa_b\left(z, z^{\prime}\right) \left(\lambda_1-\lambda_2\right)^2\left(dz, dz^{\prime}\right)   \geq 0,
\end{align*}
and the equality holds only when $\nu_1 = \nu_2$ and $\lambda_1 = \lambda_2$. 

Now, for the product $\zeta\left(\ve{y}, \vesup{y}{\prime}\right) \kappa\left(z, z^{\prime}\right)$, we can show:
\begin{align*}
    & \int \zeta\left(\ve{y}, \vesup{y}{\prime}\right) \kappa\left(z, z^{\prime}\right)\left(\nu_1 \otimes \lambda_1 -\nu_2 \otimes \lambda_2\right)^2\left(d(\ve{y}, z), d(\vesup{y}{\prime}, z^{\prime})\right)  \\
    = & \int \zeta\left(\ve{y}, \vesup{y}{\prime}\right)\left(\nu_1-\nu_2\right)^2\left(d\ve{y}, d\vesup{y}{\prime}\right) \cdot  \int \kappa\left(z, z^{\prime}\right)d\left(\lambda_1-\lambda_2\right)^2\left(d z, d z^{\prime}\right)   \geq 0.
\end{align*}
Thus, we have demonstrated that all these functions are non-negative and that equality holds only when the corresponding measures are equal.

Now, let's define the maps $\beta_U: \mu \rightarrow \int U(\ve{x}) \mu(d\ve{x})$, $\beta_Q: \nu^{\prime} \rightarrow \int Q(\ve{y}, z) \nu^{\prime}(d(\ve{y}, z))$. 
If $\beta_U(\mu_1) = \beta_U(\mu_2)$, then we can show that:
\begin{align*}
    E[\beta_U(\mu_1 - \mu_2)]^2 &= \int E[U(\ve{x})U(\vesup{x}{\prime})]  \left(\mu_1-\mu_2\right)^2\left(d\ve{x}, d\vesup{x}{\prime}\right) \\
    &= \int \zeta(\ve{x}, \vesup{x}{\prime})  \left(\mu_1-\mu_2\right)^2\left(d\ve{x}, d\vesup{x}{\prime}\right) = 0,
\end{align*}
which implies $\mu_1 = \mu_2$. Thus, $\beta_U$,$\beta_Q$ are injective on the sets of probability measures on $\mathcal{X}$ and $\mathcal{Y} \times \mathbf{R}$, respectively. 

We further define a map $\beta_{U \otimes Q}(\omega^{\prime}): \omega \rightarrow \int U(\ve{x}) \otimes Q(\ve{y}, z) \omega^{\prime} (d\ve{w})$. We can show:
\begin{align*}
    T_{\zeta, \eta, \kappa_b}(\ve{X}, \ve{Y}) 
    &=E\left[\beta_{U \otimes Q}(\omega^{\prime} - \mu \otimes \nu^{\prime}) \right]^2 = 0.
\end{align*}

Let $\omega^{\prime} \in \mathcal{B}(\mathcal{X} \times \mathcal{Y} \times \mathbf{R})$ 
 Now, assuming that $K$ is a Gaussian filed defined on $\mathcal{X}$, if we have $\beta_{U \otimes Q}(\omega^{\prime})=0$, then we can define a bounded linear map $T_K:  \mathcal{X} \times \mathcal{Y} \times \mathbf{R}  \rightarrow  \mathcal{Y} \times \mathbf{R}$ as
$$
T_K(U, Q)= E(U K) Q.
$$
For a Borel set $B \subseteq  \mathcal{Y} \times \mathbf{R}$, we define
$$
\nu_K(B)=\int E(U(\ve{x}) K) \mathbf{1}_B(\ve{y}, z) \omega^{\prime}(d\ve{w}),
$$
We can show that:
\begin{align*}
    \beta_Q\left(\nu_K\right)&=\int E(U(\ve{x}) K) Q(\ve{y}, z)  \omega^{\prime} (d\ve{w})\\ 
    &=\int T_K(U(\ve{x}), Q(\ve{y}, z))  \omega^{\prime} (d\ve{w})\\
    &=T_K\left(\beta_{U \otimes Q}(\omega^{\prime})\right)\\
    &=0.
\end{align*}
This implies that $\nu_K=0$ is due to the injectivity of $\beta_Q$. We have: 
$$
\int E(U(\ve{x}) K) \mathbf{1}_B(\ve{y}, z) \omega^{\prime}(d\ve{w}) = E \left[K\int U(\ve{x}) \mathbf{1}_B(\ve{y}, z) \omega^{\prime}(d\ve{w}) \right]=0 .
$$
Since this holds for each $K \in \mathcal{X}$, for every Borel set $B \subseteq \mathcal{Y} \times \mathbf{R}$, we get:
$$
\int U(\ve{x}) \mathbf{1}_B(\ve{y}, z) \omega^{\prime}(d\ve{w}) =0 .
$$
For a Borel set $A \subseteq \mathcal{X}$, we define $\mu_B(A)=\omega^{\prime}(A \times B)$.
Since $\beta_U\left(\mu_B\right)=\int U(\ve{x}) \mathbf{1}_B(\ve{y}, z) \omega^{\prime}(d\ve{w})=0$, we can conclude that  $\mu_B=0$ by the injectivity of $\beta_U$. Thus, $\omega^{\prime}(A \times B)=0$ for every pair of Borel sets $A$ and $B$, which implies that $\beta_{U \otimes Q}(\omega^{\prime})$ is injective. As $E \left[\beta_{U \otimes Q}(\omega^{\prime} - \mu \otimes \nu^{\prime}) \right]^2 = 0$, it follows that $\omega^{\prime} - \mu \otimes \nu^{\prime}=0$  and consequently,  $\ve{X}$ and $\ve{Y}$ are independent.

This concludes the proof of Theorem {\rm \ref{Thm::iff}}.
\end{proof}

\begin{proof}[Proof of Theorem {\rm \ref{Thm::randomH0}}]
   
      Let's begin by examining the structure of $T_{n, b}$ through the lens of the H-decomposition. We express $T_{n, b}$ as
    \begin{align*}
        T_{n, b} = & \frac{12}{n(n-1)} \sum_{i < j} h_{b2}(\vesub{W}{i}, \vesub{W}{j}) + R_n^{(2)},
    \end{align*}
   where the residual term, $R_n^{(2)}$ is bounded as
    \begin{align*}
        \var \big(n \sigma^{-1}_{b2} R_n^{(2)} \big) = & n^2 \sum_{k = 3}^4 \CC{4}{k} \frac{1}{\CC{n}{k}} \frac{\sigma_{bk}}{\sigma_{b2}} = O_p\left(\frac{1}{n}\right).
    \end{align*}

   To further understand the behavior and properties of $T_{n, b}$, we delve into its martingale structure. Define
    \begin{align*}
        \Gamma_i = \frac{\sqrt{2}}{(n-1) \sigma_{b2}}\sum_{j = 1}^{i - 1} h_{b2}(\vesub{W}{i}, \vesub{W}{j}).
    \end{align*}
  This definition allows us to see that the sequence ${D_i = \sum_{j = 2}^i \Gamma_j, i = 2, \cdots }$ forms a martingale, with the sequence ${\Gamma_i, i = 2, \cdots }$ itself being a martingale difference sequence.

To confirm that a degenerate U-statistic adheres to the martingale central limit theorem (MCLT), we need to establish two main conditions:
\begin{enumerate}
    \item[(i)] $\sum_i E\left[\Gamma_{i}^2 \mid \mathcal{F}_{i-1}\right] \stackrel{p}{\longrightarrow} 1$,
    \item[(ii)] $\forall \varepsilon>0, \sum_i E\left[\Gamma_{i}^2 I(\Gamma_{i} > \varepsilon)\right] \longrightarrow 0$.
\end{enumerate}  
    Utilizing Lemma B.4 from \citet{fan1996}, we analyze the normality of our statistic through the definition of  
    \begin{align*}
        G_b(\vesub{W}{1}, \vesub{W}{2}) = & E \big[h_{b2}(\vesub{W}{1}, \vesub{W}{3})h_{b2}(\vesub{W}{2}, \vesub{W}{3}) \mid \vesub{W}{1}, \vesub{W}{2} \big].
    \end{align*}
    Our goal is to demonstrate that
        \begin{align}
       & \frac{E\big[G_b^2(\vesub{W}{1}, \vesub{W}{2}) \big] + \frac{1}{n} E\big[h_{b2}^4(\vesub{W}{1}, \vesub{W}{2}) \big]}{\big(E\big[h_{b2}^2(\vesub{W}{1}, \vesub{W}{2}) \big] \big)^2}  \to 0.
       \label{Gb}
    \end{align}
     
      This statistical formulation can be considered as a specific instantiation of the Lindeberg-Levy central limit theorem or the martingale central limit theorem.  Let's dissect the main equation, focusing on ensuring that both critical terms approach zero, fulfilling the necessary conditions for statistical normality.
    The equation in question can be deconstructed into two pivotal terms. The first term, $E\big[G_b^2(\vesub{W}{1}, \vesub{W}{2}) \big] / \big(E\big[h_{b2}^2(\vesub{W}{1}, \vesub{W}{2}) \big] \big)^2$ is used to verify condition (i) in MCLT, and the second term, $\frac{1}{n} E\big[h_{b2}^4(\vesub{W}{1}, \vesub{W}{2}) \big] / \big(E\big[h_{b2}^2(\vesub{W}{1}, \vesub{W}{2}) \big] \big)^2$ is for condition (ii). These trends are crucial for satisfying the two conditions required for our statistical analysis.
    
     We modify the second term and provide a more general condition.
     We delve into the expected magnitude of $\Gamma_{i}$, considering both scenarios where exceeds a threshold $\varepsilon$ and where it doesn't. 
     \begin{align*}
     E[|\Gamma_{i}|^{2 + \delta}] = & E [|\Gamma_{i}|^{2 + \delta}  \mathcal{I}\{\Gamma_{i} > \varepsilon\}] + E [|\Gamma_{i}|^{2 + \delta}  \mathcal{I}\{\Gamma_{i} \leq \varepsilon\}]\\
     \geq & E [|\Gamma_{i}|^{2 + \delta}  \mathcal{I}\{\Gamma_{i} > \varepsilon\}] \\
      \geq & \varepsilon^{\delta} E\left[\Gamma_{i}^2 \mathcal{I}\{\Gamma_{i} > \varepsilon\}\right],
     \end{align*}
     This analysis leads us to conclude that it's crucial to demonstrate that $\sum_{i = 2}^n E[|\Gamma_i|^{2 + \delta}] \rightarrow 0$.

     Employing Rosenthal's inequality, we calculate 
     \begin{align*}
         &E[|\Gamma_i|^{2 + \delta}]\\
         = & \left(\frac{2}{(n-1) \sigma_{b2}} \right)^{2 + \delta} E\left[\sum_{j = 1}^{i - 1} h_{b2}(\vesub{W}{i}, \vesub{W}{j}) \right]^{2 + \delta} \\
         = & \left(\frac{\sqrt{2}}{(n-1) \sigma_{b2}} \right)^{2 + \delta} E\left[ E\left[ |\sum_{j = 1}^{i - 1} h_{b2}(\vesub{W}{i}, \vesub{W}{j})|^{2 + \delta} \mid \vesub{W}{i} \right] \right] \\
         \leq & \left(\frac{\sqrt{2}}{(n-1) \sigma_{b2}} \right)^{2 + \delta} E\left[ B_1\left(\sum_{j = 1}^{i - 1}E\left[ | h_{b2}(\vesub{W}{i}, \vesub{W}{j})|^{2 + \delta} \mid \vesub{W}{i} \right] \right) \right. \\
         & \left.+ B_2 \left(\sum_{j = 1}^{i - 1}E\left[ | h_{b2}(\vesub{W}{i}, \vesub{W}{j})|^2 \mid \vesub{W}{i} \right] \right)^{1 + \frac{\delta}{2}}  \right] \\
         = & \left(\frac{\sqrt{2}}{(n-1) \sigma_{b2}} \right)^{2 + \delta} \left\{ B_1 (i - 1) E\left[ h_{b2}(\vesub{W}{i}, \vesub{W}{j})\right]^{2 + \delta}  + B_2 E\left[(i - 1)E\left[ | h_{b2}(\vesub{W}{i}, \vesub{W}{j})|^2 \mid \vesub{W}{i} \right] \right]^{1 + \frac{\delta}{2}}  \right\} \\
          \leq & \left(\frac{\sqrt{2}}{(n-1) \sigma_{b2}} \right)^{2 + \delta} \left\{ B_1 (i - 1) E\left[ h_{b2}(\vesub{W}{i}, \vesub{W}{j})\right]^{2 + \delta}  + B_2 (i - 1)^{1 + \frac{\delta}{2}} E\left[ h_{b2}(\vesub{W}{i}, \vesub{W}{j})\right]^{2 + \delta}  \right\},
     \end{align*}
     where $B_1, B_2$ are some constants.

     By adding them up, we have
     \begin{align*}
         \sum_{i = 2}^n E[|\Gamma_i|^{2 + \delta}] \leq B_3 \left(\frac{\sqrt{2}}{(n-1) \sigma_{b2}} \right)^{2 + \delta} \left\{ \sum_{i = 2}^n [(i - 1)^{1 + \frac{\delta}{2}} + (i - 1)] E\left[ h_{b2}(\vesub{W}{i}, \vesub{W}{j})\right]^{2 + \delta}\right\},
     \end{align*}
     where $B_3$ is a constant. Given that the order of $\sum_{i = 2}^n [(i - 1)^{1 + \frac{\delta}{2}} + (i - 1)]$ is $n^{2 + \frac{\delta}{2}}$, it becomes crucial to ensure that the ratio of the expectation of $h_{b2}^{2 + \delta}$ over a particular normalization factor approaches zero:
     \begin{align}
         \frac{E\big[h_{b2}^{2 + \delta}(\vesub{W}{1}, \vesub{W}{2}) \big]}{n^{\frac{\delta}{2}}\big(E\big[h_{b2}^2(\vesub{W}{1}, \vesub{W}{2}) \big] \big)^{1 + \frac{\delta}{2}}} \rightarrow 0.
         \label{cond2}
     \end{align}
     
    We further examine the expectation of $h_{b2}^2 (\vesub{W}{1}, \vesub{W}{2})$. Under the null hypothesis, the expectation is intricately composed of the interactions between $\zeta$ and $\eta$, parameters that capture the statistical properties of our model, scaled by $b$, the bandwidth parameter. This detailed examination reveals that:
    \begin{align*}
        & E \big[h_{b2}^2 (\vesub{W}{1}, \vesub{W}{2}) \big] \\
        =&\frac{b}{36} \big\{E[\zeta^2(\vesub{X}{1}, \vesub{X}{2})] - 2 E\big[\zeta(\vesub{X}{1}, \vesub{X}{2})\zeta(\vesub{X}{1}, \vesub{X}{3}) \big] + \big[E(\zeta(\vesub{X}{1}, \vesub{X}{2}))\big]^2 \big\} \\
        &\cdot \big\{ A_3 E [\eta^2(\vesub{Y}{1}, \vesub{Y}{2})]  - 2A_2 b E\big[\eta(\vesub{Y}{1}, \vesub{Y}{2})\eta(\vesub{Y}{1}, \vesub{Y}{3}) \big] + A_1^2 b \big[ E(\eta(\vesub{Y}{1}, \vesub{Y}{2})) \big]^2  \big\} + O_p(b^3),
    \end{align*}
    where $A_2, A_3$ are defined in Proposition~\ref{Prop::var}. It is worth noting that $E[g^3(z)] < \infty$, $F_1(z) < 1$, $\int k^2_b(z) dz < \infty$, ensuring that $A_1, A_2, A_3$ are finite constants. Consequently, we can deduce that $E\big[h_{b2}^2(\vesub{W}{1}, \vesub{W}{2}) \big] = O_p(b)$.
    
    We delve into the expectations of $\eta^{2 + \delta}$ and and its interaction with $\kappa^{2 + \delta}_b$, based on similar principles as outlined in Lemma~\ref{A1::lem::kerz}. The expectation is given by:
     \begin{align*}
        E \big[\eta^{2 + \delta}(\vesub{Y}{1}, \vesub{Y}{2})\kappa^{2 + \delta}_b(Z_1, Z_2) \big] = A_{2 + \delta} b E[\eta^{2 + \delta}(\vesub{Y}{1}, \vesub{Y}{2})],
    \end{align*}
    where $ A_{2 + \delta} = \int F_{2 + \delta}(z) g^2(z) \mathcal{I}\{z \in \mathcal{D}\} d z$. Since $\int k^{2 + \delta}(z) dz < \infty$, ensuring that  $$E \big[\eta^{2 + \delta}(\vesub{Y}{1}, \vesub{Y}{2})\kappa^{2 + \delta}_b(Z_1, Z_2) \big] = O_p(b).$$
    
    Turning our attention to the crucial numerator term in the equation (\ref{cond2}), we analyze  the expectation of the ($2 + \delta$)-th power of $h_{b2} (\vesub{W}{1}, \vesub{W}{2})$ to establish its behavior. The calculation unfolds as follows:
    \begin{align*}
          &   E\big[h_{b2}^{2 + \delta} (\vesub{W}{1}, \vesub{W}{2}) \big] \\
        = & \frac{1}{6^{2 + \delta}} E \big[ \zeta(\vesub{X}{1}, \vesub{X}{2})  + E[\zeta(\vesub{X}{3}, \vesub{X}{4})] - E[\zeta(\vesub{X}{1}, \vesub{X}{3}) \mid \vesub{X}{1}]- E[\zeta(\vesub{X}{2}, \vesub{X}{4}) \mid \vesub{X}{2}]   \big]^{2 + \delta}  \\
       & \cdot E \big[ \eta(\vesub{Y}{1}, \vesub{Y}{2})\kappa_b(Z_1, Z_2)  + A_2 b E[\eta(\vesub{Y}{3}, \vesub{Y}{4})]  - b g(Z_1) F(Z_1) E[\eta(\vesub{Y}{1}, \vesub{Y}{3})\mid \vesub{Y}{1}] \\
       & \quad - b \phi(Z_2) E[\eta(\vesub{Y}{2}, \vesub{Y}{4})\mid \vesub{Y}{2}] + O_p(b^3) \big\}  \big]^{2 + \delta} \\
        = & \frac{1}{6^{2 + \delta}} E \big[ \zeta(\vesub{X}{1}, \vesub{X}{2})  + E[\zeta(\vesub{X}{3}, \vesub{X}{4})] - E[\zeta(\vesub{X}{1}, \vesub{X}{3}) \mid \vesub{X}{1}]- E[\zeta(\vesub{X}{2}, \vesub{X}{4}) \mid \vesub{X}{2}]   \big]^{2 + \delta}  \\
       & \cdot  E \big[\eta(\vesub{Y}{1}, \vesub{Y}{2})\kappa_b(Z_1, Z_2) + O_p(b) \big]^{2 + \delta} \\
       \leq & \frac{2^{\delta}}{6^{2 + \delta}} E \big[ \zeta(\vesub{X}{1}, \vesub{X}{2})  + E[\zeta(\vesub{X}{3}, \vesub{X}{4})] - E[\zeta(\vesub{X}{1}, \vesub{X}{3}) \mid \vesub{X}{1}]- E[\zeta(\vesub{X}{2}, \vesub{X}{4}) \mid \vesub{X}{2}]   \big]^{2 + \delta}  \\
       & \cdot  E \big[\eta^{2 + \delta}(\vesub{Y}{1}, \vesub{Y}{2})\kappa_b^{2 + \delta}(Z_1, Z_2) + O_p(b^{2 + \delta}) \big].
    \end{align*}
    The last inequality is obtained from Jensen's inequality. 

    Furthermore, since $E \big[\zeta^{2 + \delta}(\vesub{X}{1}, \vesub{X}{2}) \big] < \infty$, and $E \big[\eta^{2 + \delta}(\vesub{Y}{1}, \vesub{Y}{2}) \big] < \infty $, we can conclude that $E\big[h_{b2}^{2 + \delta}(\vesub{W}{1}, \vesub{W}{2}) \big] = O_p(b)$. Thus,
    \begin{align*}
         \frac{E\big[h_{b2}^{2 + \delta}(\vesub{W}{1}, \vesub{W}{2}) \big]}{n^{\frac{\delta}{2}}\big(E\big[h_{b2}^2(\vesub{W}{1}, \vesub{W}{2}) \big] \big)^{1 + \frac{\delta}{2}}} = O_p\left(\frac{b}{n^{\frac{\delta}{2}} b^{1 + \frac{\delta}{2}}}\right) = O_p\left(\frac{1}{(nb)^{\frac{\delta}{2}} }\right)\rightarrow 0.
     \end{align*}

    For the first term $E\big[G_b^2(\vesub{W}{1}, \vesub{W}{2}) \big] / \big(E\big[h_{b2}^2(\vesub{W}{1}, \vesub{W}{2}) \big] \big)^2$, we show that it also tends toward zero. 
    Let $\Tilde{\zeta}(\vesub{X}{1}, \vesub{X}{2}) = \zeta(\vesub{X}{1}, \vesub{X}{2})  + E[\zeta(\vesub{X}{3}, \vesub{X}{4})] - E[\zeta(\vesub{X}{1}, \vesub{X}{3}) \mid \vesub{X}{1}]- E[\zeta(\vesub{X}{2}, \vesub{X}{4}) \mid \vesub{X}{2}] $, $Z_{2} = Z_{1} + b Z_{12}$, $Z_{3} = Z_{1} + b Z_{13}$, $Z_{4} = Z_{1} + b Z_{14}$, we obtain
       \begin{align*}     
       & E\big[ E\big[\kappa_b(Z_1, Z_3)\kappa_b(Z_2, Z_3) \mid Z_1, Z_2 \big]^2 \big] \\
        = & E\big[\kappa_b(Z_1, Z_3)\kappa_b(Z_1, Z_4)\kappa_b(Z_2, Z_3)\kappa_b(Z_2, Z_4) \big]  \\
        = & \int \kappa_b(Z_1, Z_3)\kappa_b(Z_1, Z_4)\kappa_b(Z_2, Z_3)\kappa_b(Z_2, Z_4) g(Z_1) g(Z_2) g(Z_3) g(Z_4) \mathcal{I}\{Z_1 \in \mathcal{D}\} \\
        & \mathcal{I}\{Z_2 \in \mathcal{D}\} \mathcal{I}\{Z_3 \in \mathcal{D}\} \mathcal{I}\{Z_4 \in \mathcal{D}\} d Z_1 d Z_2 d Z_3 d Z_4 \\
        = & b^3 \int k\left(Z_{13}\right)  k\left(Z_{14}\right) k\left(Z_{12} + Z_{13}\right) k\left(Z_{12} + Z_{14}\right) g(Z_1) g(Z_{1} + b Z_{12}) g(Z_{1} + b Z_{13})  \\
        & g(Z_{1} + b Z_{14}) \mathcal{I}\{Z_1 \in \mathcal{D}\} \mathcal{I}\{Z_{12} \in \mathcal{D}(Z_1)\} \mathcal{I}\{Z_{13} \in \mathcal{D}(Z_1)\} \mathcal{I}\{Z_{14} \in \mathcal{D}(Z_1)\}  d Z_1 d Z_{12} d Z_{13} d Z_{14} \\
        = & b^3 \int  k\left(Z_{13}\right)  k\left(Z_{14}\right) k\left(Z_{12} + Z_{13}\right) k\left(Z_{12} + Z_{14}\right)g^4(Z_1)  \mathcal{I}\{Z_1 \in \mathcal{D}\} \mathcal{I}\{Z_{12} \in \mathcal{D}(Z_1)\} \\
        &  \mathcal{I}\{Z_{13} \in \mathcal{D}(Z_1)\} \mathcal{I}\{Z_{14} \in \mathcal{D}(Z_1)\}  d Z_{12} d Z_{13} d Z_{14}d Z_1  + O_p(b^6) \\
         = & b^3 B_4 + O_p(b^6),
    \end{align*}
    where $B_4$ is a constant. This equation leads 
   \begin{align*}     
       & E\big[G_b^2(\vesub{W}{1}, \vesub{W}{2}) \big]\\   = & E \big[h_{b2}(\vesub{W}{1}, \vesub{W}{3})h_{b2}(\vesub{W}{2}, \vesub{W}{3})h_{b2}(\vesub{W}{1}, \vesub{W}{4})h_{b2}(\vesub{W}{2}, \vesub{W}{4}) \big] \\
        = & O_p(b^3).
    \end{align*}
    By combining the above results, we have
    \begin{align*}
       & \frac{E\big[G_b^2(\vesub{W}{1}, \vesub{W}{2}) \big] }{\big(E\big[h_{b2}^2(\vesub{W}{1}, \vesub{W}{2}) \big] \big)^2} = O_p\left(\frac{b^3 }{b^2} \right) \to 0.
    \end{align*}
    
    Drawing on Lemma B.4 as cited in \citet{fan1996} and the principles of the martingale limit theorem, we reach a pivotal conclusion regarding the distribution of our test statistic, $nT_{n, b}$. Specifically, we find that:
    \begin{align*}
        \frac{nT_{n, b}}{6\sqrt{2}\sigma_{b2}} \to N(0, 1).
    \end{align*}
    Guided by Proposition \ref{Prop::var},  $E[|S_{n, b}^2 - \var(T_{n, b})|^{1 + \frac{\delta}{2}}] \rightarrow 0$. Consequently, by employing Slutsky's lemma, we conclude that
    \begin{align*}
        \frac{T_{n, b}}{S_{n, b}} \to N(0, 1).
    \end{align*}
This concludes the proof.
\end{proof}

To complete the proof of Theorem \ref{random:rate}, 
We continue to use the notation $G_b(W_1, W_2)$ introduced in Theorem~\ref{Thm::randomH0} and induce some lemmas stating the following.

\begin{lemma}\label{A1::lemma3}
    For $1 \leq i < j \leq n$ and $1 \leq k < l \leq n$, if $(i ,j) \neq (k, l)$, then we have
    $$ E \big[ G(\vesub{W}{i}, \vesub{W}{j})G(\vesub{W}{k}, \vesub{W}{l})\big] = 0.$$
\end{lemma}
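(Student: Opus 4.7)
The plan is to exploit the first-order degeneracy of $h_{b2}$ (already at hand from Lemma \ref{A1::lemma1}, since if $\bar{h}_b$ is first-order degenerate, then so is its projection $h_{b2}=E[\bar{h}_b(\vesub{W}{1},\vesub{W}{2},\vesub{W}{3},\vesub{W}{4})\mid \vesub{W}{1},\vesub{W}{2}]$ by the tower property), and then propagate this degeneracy up to the function $G_b$. Concretely, I will first establish the key auxiliary identity
\begin{align*}
E\bigl[G_b(\vesub{W}{1},\vesub{W}{2})\mid \vesub{W}{1}\bigr]=E\bigl[G_b(\vesub{W}{1},\vesub{W}{2})\mid \vesub{W}{2}\bigr]=0.
\end{align*}
For this I condition first on $(\vesub{W}{2},\vesub{W}{3})$ in
$E[h_{b2}(\vesub{W}{1},\vesub{W}{3})h_{b2}(\vesub{W}{2},\vesub{W}{3})\mid \vesub{W}{2}]$, factor out the $h_{b2}(\vesub{W}{2},\vesub{W}{3})$ term (which is measurable with respect to the conditioning), and use degeneracy $E[h_{b2}(\vesub{W}{1},\vesub{W}{3})\mid \vesub{W}{3}]=0$ on the inner piece. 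Taking a further expectation yields $E[G_b(\vesub{W}{1},\vesub{W}{2})]=0$ as well.

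With that auxiliary fact in hand, the lemma reduces to a clean case analysis on the overlap pattern between $\{i,j\}$ and $\{k,l\}$, noting that since $i<j$, $k<l$ and $(i,j)\neq(k,l)$, the only possibilities are that the two pairs are disjoint or share exactly one index. In the disjoint case, the four random objects $\vesub{W}{i},\vesub{W}{j},\vesub{W}{k},\vesub{W}{l}$ are independent, so the expectation factorises as $E[G_b(\vesub{W}{i},\vesub{W}{j})]\cdot E[G_b(\vesub{W}{k},\vesub{W}{l})]=0\cdot 0=0$. In the single-overlap case, say $j=k$ (the other subcases are symmetric relabellings), I condition on the shared index $\vesub{W}{j}$: since $\vesub{W}{i}$ and $\vesub{W}{l}$ are independent of each other given $\vesub{W}{j}$, the conditional expectation splits into $E[G_b(\vesub{W}{i},\vesub{W}{j})\mid \vesub{W}{j}]\cdot E[G_b(\vesub{W}{j},\vesub{W}{l})\mid \vesub{W}{j}]$, each factor of which vanishes by the auxiliary identity.

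I expect no significant obstacle here; the only point requiring care is the bookkeeping of the internal integration variable $\vesub{W}{3}$ inside the definition of $G_b$, which should be treated as an independent copy distinct from all of $\vesub{W}{i},\vesub{W}{j},\vesub{W}{k},\vesub{W}{l}$, so that Fubini and the tower property can be applied without ambiguity. The core of the argument is nothing more than the orthogonality of degenerate kernels in the H-decomposition, dressed up with one extra layer of conditioning.
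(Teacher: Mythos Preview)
Your proposal is correct and follows essentially the same strategy as the paper: a case split on the overlap of $\{i,j\}$ and $\{k,l\}$, driven by the first-order degeneracy $E[h_{b2}(\vesub{W}{1},\vesub{W}{2})\mid \vesub{W}{1}]=0$. The one organisational difference worth noting is in the single-overlap case. You first package the degeneracy at the level of $G_b$ itself, proving $E[G_b(\vesub{W}{1},\vesub{W}{2})\mid \vesub{W}{1}]=0$, and then condition on the shared external index to factor the product. The paper instead expands both $G_b$ factors back into $h_{b2}$ terms (introducing two independent internal copies, say $\vesub{W}{3}$ and $\vesub{W}{5}$) and conditions on those internal variables, picking off a factor $E[h_{b2}(\vesub{W}{2},\vesub{W}{3})\mid \vesub{W}{3}]=0$ directly. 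Your route is a bit more modular and avoids tracking five copies at once; the paper's is a one-shot computation. Both rest on the same degeneracy fact and neither requires anything the other does not.
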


\begin{proof}[Proof of Lemma \ref{A1::lemma3}]
    Since $E \big[ h_b(\vesub{W}{1}, \vesub{W}{2}, \vesub{W}{3}, \vesub{W}{4}) | \vesub{W}{1} \big] = 0$, then for $i \neq j \neq k \neq l$, it holds that
    \begin{align*}
         E \big[ G(\vesub{W}{i}, \vesub{W}{j})G(\vesub{W}{k}, \vesub{W}{l})\big] = & \big[ E \big[ G(\vesub{W}{1}, \vesub{W}{2})\big] \big]^2\\
         = & \big[ E [ h_{b2} (\vesub{W}{1}, \vesub{W}{3})  h_{b2} (\vesub{W}{2}, \vesub{W}{3})  ] \big]^2 \\
         = & \big[ E [ E[ h_{b2} (\vesub{W}{1}, \vesub{W}{3})  h_{b2} (\vesub{W}{2}, \vesub{W}{3}) \mid \vesub{W}{3}]  \big]^2 \\
         = &  0.
    \end{align*}
    If only one element of $(i, j)$ and $(k, l)$ is the same, Without loss of generality, we assume $i = k$ and $j \neq l$, then 
       \begin{align*}
         &E \big[ G(\vesub{W}{i}, \vesub{W}{j})G(\vesub{W}{k}, \vesub{W}{l})\big]\\
         = & E \big[  h_{b2} (\vesub{W}{1}, \vesub{W}{3})  h_{b2} (\vesub{W}{2}, \vesub{W}{3})  h_{b2} (\vesub{W}{1}, \vesub{W}{5})  h_{b2} (\vesub{W}{4}, \vesub{W}{5})  \big] \\
         = &  E \big[ E[ h_{b2} (\vesub{W}{1}, \vesub{W}{3})h_{b2} (\vesub{W}{1}, \vesub{W}{5}) \mid \vesub{W}{3}, \vesub{W}{5}] E[h_{b2} (\vesub{W}{2}, \vesub{W}{3}) \mid \vesub{W}{3}]    E[h_{b2} (\vesub{W}{4}, \vesub{W}{5}) \mid \vesub{W}{5}]  \big] \\
         = & 0
    \end{align*}
    Thus, the proof is complete.
\end{proof}

\begin{lemma}\label{random:lemma}
    $\forall 0 < \delta < 1$, assume that $E\left[\zeta(\vesub{X}{1}, \vesub{X}{2}) \right]^{2 + 2\delta} < \infty, E\left[\eta(\vesub{Y}{1}, \vesub{Y}{2}) \right]^{2 + 2\delta} < \infty$, then under the null hypothesis is true, then we have
    \begin{align*}
        &\sup_{t \in \mathbf{R}}  \left| \pr \left( \frac{nT_{n, b}}{6 \sqrt{2} \sigma_{b2}}   \le t \right)  - \Phi(t)  \right| \\
        \lesssim & \Big\{\frac{1}{(nb)^{\delta}} \frac{E [\Tilde{\zeta}(\vesub{X}{1}, \vesub{X}{2}) \eta(\vesub{Y}{1}, \vesub{Y}{2})]^{2+2\delta} }{ \big[ E [\Tilde{\zeta}(\vesub{X}{1}, \vesub{X}{2}) \eta(\vesub{Y}{1}, \vesub{Y}{2})]^{2} \big]^{1+\delta}}\\
      & +b^{\frac{1+\delta}{2}} \frac{\left\{E \left[g_x(\vesub{X}{1}, \vesub{X}{2}, \vesub{X}{3}, \vesub{X}{4}) g_y(\vesub{Y}{1}, \vesub{Y}{2}, \vesub{Y}{3}, \vesub{Y}{4})\right]  \right\} ^{\frac{1+\delta}{2}}}{\big[ E |\Tilde{\zeta}(\vesub{X}{1}, \vesub{X}{2}) \eta(\vesub{Y}{1}, \vesub{Y}{2})|^{2} \big]^{1+\delta}}  \Big\}^{\frac{1}{3+2\delta}}.
    \end{align*}
\end{lemma}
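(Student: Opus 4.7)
The plan is to apply a martingale Berry--Esseen inequality of Heyde--Brown/Haeusler type to the martingale decomposition already exposed in the proof of Theorem~\ref{Thm::randomH0}. There we wrote $\tfrac{nT_{n,b}}{6\sqrt{2}\,\sigma_{b2}} = \sum_{i=2}^n \Gamma_i + R_n$, where $\{\Gamma_i,\mathcal{F}_i\}$ is a martingale difference sequence with $\sum_i E[\Gamma_i^2]\to 1$ and $R_n$ is uniformly $o_p(1)$. A standard martingale Berry--Esseen bound then gives, for any $\delta\in(0,1)$,
$$\sup_t \left| \pr\Big(\sum_i \Gamma_i \le t\Big) - \Phi(t)\right| \lesssim \left( L_n + V_n \right)^{1/(3+2\delta)},$$
where $L_n := \sum_{i=2}^n E[|\Gamma_i|^{2+2\delta}]$ is the Lyapunov moment and $V_n := E\bigl[|\sum_i E[\Gamma_i^2\mid \mathcal F_{i-1}] - 1|^{1+\delta}\bigr]$ measures concentration of the quadratic variation. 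It remains to bound $L_n$ and $V_n$ by the two respective terms in the statement.

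For $L_n$, I would rerun the Rosenthal-inequality calculation from the proof of Theorem~\ref{Thm::randomH0} with $\delta$ replaced by $2\delta$, reducing it to $L_n \lesssim E[h_{b2}^{2+2\delta}]/[n^\delta (E[h_{b2}^2])^{1+\delta}]$. Combining Lemma~\ref{A1::lem::kerz} (which yields $E[\kappa_b^{2+2\delta}(Z_1,Z_2)]=O(b)$) with the $(2+2\delta)$-th moment assumptions on $\zeta,\eta$ gives $E[h_{b2}^{2+2\delta}] \lesssim b\, E[\tilde\zeta(\vesub{X}{1},\vesub{X}{2})\eta(\vesub{Y}{1},\vesub{Y}{2})]^{2+2\delta}$, while Proposition~\ref{Prop::var} gives $E[h_{b2}^2] \asymp b\, E[\tilde\zeta\eta]^2$. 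These produce the first claimed term $(nb)^{-\delta}\cdot E[\tilde\zeta\eta]^{2+2\delta}/(E[\tilde\zeta\eta]^2)^{1+\delta}$.

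For $V_n$, I would compute $E[\Gamma_i^2\mid \mathcal F_{i-1}]$ by integrating out the fresh copy $\vesub{W}{i}$, which leaves a diagonal contribution $\psi(\vesub{W}{j}) := E[h_{b2}^2(\vesub{W}{i},\vesub{W}{j})\mid \vesub{W}{j}]$ and off-diagonal contributions $G_b(\vesub{W}{j},\vesub{W}{k})$ for $j\ne k$, each paired with coefficient $(n-\max(j,k))$ after summing over $i$. The mean recovers $1$ up to negligible remainders; the task is to control the variance. Lemma~\ref{A1::lemma3} tells us that $\{G_b(\vesub{W}{j},\vesub{W}{k})\}_{j<k}$ are pairwise uncorrelated with mean zero, which collapses the off-diagonal variance to $\lesssim E[G_b^2]/\sigma_{b2}^4$. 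Substituting $E[G_b^2] \lesssim b^3\, E[g_x(\vesub{X}{1},\ldots,\vesub{X}{4})g_y(\vesub{Y}{1},\ldots,\vesub{Y}{4})]$ from the proof of Theorem~\ref{Thm::randomH0} and $\sigma_{b2}^2 \asymp b\, E[\tilde\zeta\eta]^2$, and raising the resulting $L_2$ deviation to the power $(1+\delta)/2$ via Jensen's inequality, gives the second term $b^{(1+\delta)/2} E[g_xg_y]^{(1+\delta)/2}/(E[\tilde\zeta\eta]^2)^{1+\delta}$. Plugging both bounds into the martingale Berry--Esseen inequality and absorbing $R_n$ finishes the argument.

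Main obstacle: The conditional-variance term $V_n$ is the delicate part. Without the pairwise orthogonality of $G_b(\vesub{W}{j},\vesub{W}{k})$ supplied by Lemma~\ref{A1::lemma3} — which itself depends on the first-order degeneracy of $h_{b2}$ — the variance estimate would gain a factor of order $n^2$ and destroy the rate. A secondary technical point is to verify that the diagonal fluctuation $\sum_j (n-j)\bigl(\psi(\vesub{W}{j})-E[\psi]\bigr)/\bigl(n^2\sigma_{b2}^2\bigr)$ is genuinely of lower order than the off-diagonal contribution; this needs a bound on $\mathrm{var}(\psi)$ obtained by applying Lemma~\ref{A1::lem::kerz} inside $\psi$ to extract the correct power of $b$.
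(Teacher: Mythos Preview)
Your proposal is correct and mirrors the paper's argument: the same martingale Berry--Esseen inequality of Haeusler type, Rosenthal for the Lyapunov sum $L_n$, and the same diagonal/off-diagonal split of the conditional variance, with Lemma~\ref{A1::lemma3} collapsing the off-diagonal second moment and Jensen lifting the resulting $L_2$ bound to $L_{1+\delta}$. One minor framing point: the paper does not try to show the diagonal fluctuation is of lower order than the off-diagonal piece---instead it bounds the diagonal directly (via the von~Bahr--Esseen inequality followed by Jensen) by the same $(nb)^{-\delta}$ expression already produced by $L_n$, so both contributions simply land in the first displayed term of the statement.
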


\begin{proof}[Proof of lemma \ref{random:lemma}]
    Recalling Theorem \ref{Thm::randomH0}, it is clear that as $n \to \infty$, 
    \begin{align*}
        \frac{nT_{n, b}}{6\sqrt{2}\sigma_{b2}} & \to N(0, 1), \\
        nT_{n, b} & = \frac{12}{ (n-1)} \sum h_{b2}(\vesub{W}{i}, \vesub{W}{j}) + n R_n^{(2)},
    \end{align*}
    where $\var \big[n\sigma_{b2}^{-1} R_n^{(2)} \big] = O(\frac{1}{n})$. Let $M_{ni} = \frac{\sqrt{2}}{\sigma_{b2} (n-1)} \sum_{j < i} h_{b2}(\vesub{W}{i}, \vesub{W}{j})$, $\mathcal{F}_{i}$ be the sigma algebra generated by $\vesub{W}{1}, \dots, \vesub{W}{i}$, then $\{\Gamma_{ni}\}$ is a square-integrable martingale-difference sequence (MDS) with filtration $\{\mathcal{F}_{i}\}$ and satisfies the Lindeberg condition. From the Berry-Essen theorem, it follows that
     \begin{align*}
        \sup_{x \in \mathbf{R}} \left| \pr \left(\sum_{i = 1}^n \Gamma_{ni} \le t \right) - \Phi(t) \right| \le B_{\delta} (L_{n, 2\delta} + N_{n, 2\delta})^{\frac{1}{3+2\delta}},
    \end{align*}
    where $B_{\delta}$ is a $\delta$-dependent constant, and
    \begin{align*}
        L_{n, 2\delta} = \sum_{i = 1}^n E  \left[ \Gamma_{ni} \right]^{2+2\delta},\ \ N_{n, 2\delta} = E\left[ 
\left| \sum_{i = 1}^n E (\Gamma_{ni}^2 | \mathcal{F}_{i - 1}) - 1 \right|^{1+\delta} \right].
    \end{align*}
    
 By Rosenthal’s inequality for the sum of independent random variables, there exists a constant $B_{\delta,1} > 0$ such that
 \begin{align*}
       & L_{n, 2\delta} \\ \le & \frac{B_{\delta, 1}}{\sigma_{b2}^{2+2\delta} (n-1)^{2+2\delta} }\sum_{i = 1}^n E\Big\{ \sum_{j < i}E \big[ (h_{b2}(\vesub{W}{j}, \vesub{W}{i}))^{2+2\delta} | \vesub{W}{i} \big]  + \big( \sum_{j < i} E \big[(h_{b2}(\vesub{W}{j}, \vesub{W}{i}))^{2} | \vesub{W}{i} \big] \big)^{1+\delta}
        \Big\}  \\
        \le & \frac{B_{\delta, 1}}{\sigma_{b2}^{2+2\delta}(n-1)^{2+2\delta}} \Big\{ \sum_{i= 1}^n (i - 1) E \big[\big( h_{b2}(\vesub{W}{1}, \vesub{W}{2}) \big)^{2+\delta} \big] + \sum_{i = 1}^n (i - 1)^{1+ \delta}  E \big[\big( h_{b2}(\vesub{W}{1}, \vesub{W}{2}) \big)^{2+2\delta} \big]
        \Big\} \\
        \lesssim & \frac{n^2 + n^{2+2\delta}}{n^{2+2\delta}} \frac{ E \big[\big( h_{b2}(\vesub{W}{1}, \vesub{W}{2}) \big)^{2+2\delta} \big]}{\sigma_{b2}^{2+2\delta}} \\
        \lesssim & \frac{1}{(nb)^{\delta}} \frac{E [\Tilde{\zeta}(\vesub{X}{1}, \vesub{X}{2}) \eta(\vesub{Y}{1}, \vesub{Y}{2})]^{2+2\delta} }{ \big[ E [\Tilde{\zeta}(\vesub{X}{1}, \vesub{X}{2}) \eta(\vesub{Y}{1}, \vesub{Y}{2})]^{2} \big]^{1+\delta}}.
    \end{align*}

    We next consider the second term, $N_{n, 2\delta}$. According to the von-Bahr Essen Inequality derived in Theorem 9.3.a in \citep{lin2011probability}, we have
    \begin{align*}
        N_{n, 2\delta} = & E \left[ \left| \sum_{i = 1}^n E (M_{ni}^2 | \mathcal{F}_{i - 1}) - 1  \right|^{1+\delta} \right] \\
        \le & 2^{\delta} E \left[ \left| \frac{2}{\sigma_{b2}^2(n-1)^2} \sum_{i = 1}^n E \big[\sum_{j < i} (h_{b2} (\vesub{W}{j}, \vesub{W}{i}))^2 |\vesub{W}{1},\cdots, \vesub{W}{i-1}  \big] - 1  \right|^{1+\delta} \right] \\
        & + 2^{\delta} E \left[ \left| \frac{4}{\sigma_{b2}^2(n-1)^2} \sum_{i = 1}^n E \big[\sum_{j < l < i} h_{b2} (\vesub{W}{j}, \vesub{W}{i})  h_{b2} (\vesub{W}{l}, \vesub{W}{i}) |\vesub{W}{1},\cdots, \vesub{W}{i-1}  \big]  \right|^{1+\delta} \right] \\
        \le & \frac{2^{1+\delta}}{(1-B_{\delta, 2})\sigma_{b2}^{2+2\delta}(n-1)^{2+2\delta}} \sum_{j = 1}^n E \Big\{ \big[ \sum_{i = j+1}^{n} E \big[h_{b2}^2(\vesub{W}{j}, \vesub{W}{i}) | \vesub{W}{j} \big] - (n - j)\sigma_{b2}^2 \big]^{1+\delta} \Big\} \\
        & + 2^{\delta} E \left[ \left| \frac{4}{\sigma_{b2}^2(n-1)^2} \sum_{i = 1}^n E \big[\sum_{j < l < i} h_{b2} (\vesub{W}{j}, \vesub{W}{i})  h_{b2} (\vesub{W}{l}, \vesub{W}{i}) |\vesub{W}{1},\cdots, \vesub{W}{i-1}  \big]  \right|^{1+\delta} \right],
    \end{align*}
    where $B_{\delta, 2}$ is a constant less than 1.
 
     By Jensen's inequality, the first term can be bounded by
    \begin{align*}
       & \frac{2^{1+\delta}}{(1-B_{\delta, 2})\sigma_{b2}^{2+2\delta}(n-1)^{2+2\delta}} \sum_{j = 1}^n E \Big\{ \big[ \sum_{i = j + 1}^n E \big[h_{b2}^2(\vesub{W}{j}, \vesub{W}{i}) | \vesub{W}{j} \big] - (n - j)\sigma_{b2}^2 \big]^{1+\delta} \Big\} \\
      \le &\frac{2^{1+2\delta}}{(1-B_{\delta, 2})\sigma_{b2}^{2+2\delta}(n-1)^{2+2\delta}} \sum_{j = 1}^n (n - j)^{1 + \delta}   \Big\{  E\big[\big( E \big[h_{b2}^2(\vesub{W}{j}, \vesub{W}{i}) | \vesub{W}{i} \big]\big)^{1+\delta}\big] + \big(E \big[h_{b2}^2(\vesub{W}{j}, \vesub{W}{i}) \big] \big)^{1+\delta} \Big\} \\
        \le &\frac{2^{2+2\delta}}{(1-B_{\delta, 2})\sigma_{b2}^{2+2\delta}(n-1)^{2+2\delta}} \sum_{j = 1}^n (n-j)^{1 + \delta} E \big[\big( h_{b2}(\vesub{W}{j}, \vesub{W}{i}) \big)^{2+2\delta} \big]  \big]  \\
    \lesssim & \frac{1}{(nb)^{\delta}} \frac{E [\Tilde{\zeta}(\vesub{X}{1}, \vesub{X}{2}) \eta(\vesub{Y}{1}, \vesub{Y}{2})]^{2+2\delta} }{ \big[ E [\Tilde{\zeta}(\vesub{X}{1}, \vesub{X}{2}) \eta(\vesub{Y}{1}, \vesub{Y}{2})]^{2} \big]^{1+\delta}}.
    \end{align*}
    
    By Jensen's inequality and Lemma \ref{A1::lemma3}, the second term can be bounded by
      \begin{align*}
        & 2^{\delta} E \left[ \left| \frac{4}{\sigma_{b2}^2(n-1)^2} \sum_{i = 1}^n E \big[\sum_{j < l < i} h_{b2} (\vesub{W}{j}, \vesub{W}{i})  h_{b2} (\vesub{W}{l}, \vesub{W}{i}) |\vesub{W}{1},\cdots, \vesub{W}{i-1}  \big]  \right|^{1+\delta} \right] \\
        \le & 2^{\delta} \left\{E \left[  \frac{4}{\sigma_{b2}^2(n-1)^2} \sum_{i = 1}^n E \big[\sum_{j < l < i} h_{b2} (\vesub{W}{j}, \vesub{W}{i})  h_{b2} (\vesub{W}{l}, \vesub{W}{i}) |\vesub{W}{1},\cdots, \vesub{W}{i-1}  \big]  \right]^{2}  \right\} ^{\frac{1+\delta}{2}}\\
         = & 2^{\delta} \left\{E \left[  \frac{4}{\sigma_{b2}^2(n-1)^2} \sum_{1 \leq j < l < n} (n-l) E \big[ h_{b2} (\vesub{W}{j}, \ve{W})  h_{b2} (\vesub{W}{l}, \ve{W}) \mid \vesub{W}{j}, \vesub{W}{l} \big]   \right]^{2} \right\} ^{\frac{1+\delta}{2}}\\
        = & \frac{2^{2 + 3\delta}}{\sigma_{b2}^{2+2\delta}(n-1)^{2+2\delta}}  \left\{E \left[  \sum_{1 \leq i < j < n}(n-j)^2 E \big[ (h_{b2} (\vesub{W}{j}, \ve{W})  h_{b2} (\vesub{W}{l}, \ve{W}) )^2 \mid \vesub{W}{j}, \vesub{W}{l}\big] \right] \right\} ^{\frac{1+\delta}{2}}\\
         \lesssim & b^{-(1+\delta)}  \left\{E \left[   h_{b2} (\vesub{W}{1}, \vesub{W}{2})  h_{b2} (\vesub{W}{1}, \vesub{W}{3}) h_{b2} (\vesub{W}{2}, \vesub{W}{4})  h_{b2} (\vesub{W}{3}, \vesub{W}{4})\right] \right\} ^{\frac{1+\delta}{2}}  \\
        \lesssim &  b^{\frac{1+\delta}{2}} \frac{\left\{E \left[g_x(\vesub{X}{1}, \vesub{X}{2}, \vesub{X}{3}, \vesub{X}{4}) g_y(\vesub{Y}{1}, \vesub{Y}{2}, \vesub{Y}{3}, \vesub{Y}{4})\right]  \right\} ^{\frac{1+\delta}{2}}}{\big( E |\Tilde{\zeta}(\vesub{X}{1}, \vesub{X}{2}) \eta(\vesub{Y}{1}, \vesub{Y}{2})|^{2} \big)^{1+\delta}} 
    \end{align*}
    Thus,
    \begin{align*}
        &\sup_{t \in \mathbf{R}}  \left| \pr \left( \frac{nT_{n, b}}{6 \sqrt{2} \sigma_{b2}}   \le t \right)  - \Phi(t)  \right| \\
        \lesssim & \Big\{\frac{1}{(nb)^{\delta}} \frac{E [\Tilde{\zeta}(\vesub{X}{1}, \vesub{X}{2}) \eta(\vesub{Y}{1}, \vesub{Y}{2})]^{2+2\delta} }{ \big[ E [\Tilde{\zeta}(\vesub{X}{1}, \vesub{X}{2}) \eta(\vesub{Y}{1}, \vesub{Y}{2})]^{2} \big]^{1+\delta}}\\
      & +b^{\frac{1+\delta}{2}} \frac{\left\{E \left[g_x(\vesub{X}{1}, \vesub{X}{2}, \vesub{X}{3}, \vesub{X}{4}) g_y(\vesub{Y}{1}, \vesub{Y}{2}, \vesub{Y}{3}, \vesub{Y}{4})\right]  \right\}^{\frac{1+\delta}{2}}}{\big[ E |\Tilde{\zeta}(\vesub{X}{1}, \vesub{X}{2}) \eta(\vesub{Y}{1}, \vesub{Y}{2})|^{2} \big]^{1+\delta}}  \Big\}^{\frac{1}{3+2\delta}}.
    \end{align*}
\end{proof}

\begin{proof}[Proof of Theorem \ref{random:rate}]
Observe that
\begin{align*}
        &\sup_{t \in \mathbf{R}}  \left| \pr\left( \frac{T_{n, b}}{S_{n, b}} \le t \right) - \Phi(t) \right| \\
        \le & 2\sup_{t \in \mathbf{R}} \left| \pr\left( \frac{nT_{n, b}}{6\sqrt{2} \sigma_{b2}}  \le t \right) - \Phi(t) \right| + \sup_{t \in \mathbf{R}} \left| \Phi(t) - \Phi(t\sqrt{1+ r}) \right| \\
        & + \sup_{t \in \mathbf{R}} \left| \Phi(t) - \Phi(t\sqrt{1 - r}) \right|  + 2  \pr \left( \left|\frac{n^2S_{n, b}^2}{72 \sigma_{b2}^2} - 1\right| \ge r \right) 
\end{align*}
The aforementioned inequality holds for $\forall 0 < r < 1$. Borrowing the fact from Lemma 75 \citep{Gao2021} that, there exists a positive constant $B < \infty$ such that
\begin{align*}
        & \sup_{t \in \mathbf{R}} \left| \Phi(t) - \Phi(t\sqrt{1+ r}) \right|  \leq B r, \\
        & \sup_{t \in \mathbf{R}} \left| \Phi(t) - \Phi(t\sqrt{1 - r}) \right| \leq B r.
\end{align*}
In addition, $S_{n, b} \rightarrow 72 \sigma_{b2}^2/n^2$, by Markov's inequality, we have
\begin{align*}
         \pr \left( \left|\frac{n^2 S_{n, b}^2}{72 \sigma_{b2}^2} - 1\right| \ge r \right) \leq& \frac{n^{2+2\delta} E\left|S_{n, b}^2 - \frac{72 \sigma_{b2}^2}{n^2}\right|^{1+\delta}}{r^{1 + \delta} (72\sigma_{b2})^{2+2\delta}} \\
       \lesssim & \frac{1}{(nb)^{\delta}} \frac{E [\Tilde{\zeta}(\vesub{X}{1}, \vesub{X}{2}) \eta(\vesub{Y}{1}, \vesub{Y}{2})]^{2+2\delta} }{ \big[ E [\Tilde{\zeta}(\vesub{X}{1}, \vesub{X}{2}) \eta(\vesub{Y}{1}, \vesub{Y}{2})]^{2} \big]^{1+\delta}}.
\end{align*}
Then, let $r = b^{\frac{1+\delta}{2(3+2\delta)}}$, it holds that
\begin{align*}
     & \sup _{t \in \mathbf{R}} \left|\pr\left(\frac{T_{n, b}}{S_{n, b}}\le t\right)-\Phi(t)\right| \\
 \lesssim  & \Big\{\frac{1}{(nb)^{\delta}} \frac{E [\Tilde{\zeta}(\vesub{X}{1}, \vesub{X}{2}) \eta(\vesub{Y}{1}, \vesub{Y}{2})]^{2+2\delta} }{ \big[ E [\Tilde{\zeta}(\vesub{X}{1}, \vesub{X}{2}) \eta(\vesub{Y}{1}, \vesub{Y}{2})]^{2} \big]^{1+\delta}}\\
      & +b^{\frac{1+\delta}{2}} \frac{\left\{E \left[g_x(\vesub{X}{1}, \vesub{X}{2}, \vesub{X}{3}, \vesub{X}{4}) g_y(\vesub{Y}{1}, \vesub{Y}{2}, \vesub{Y}{3}, \vesub{Y}{4})\right]  \right\} ^{\frac{1+\delta}{2}}}{\big[ E |\Tilde{\zeta}(\vesub{X}{1}, \vesub{X}{2}) \eta(\vesub{Y}{1}, \vesub{Y}{2})|^{2} \big]^{1+\delta}}  \Big\}^{\frac{1}{3+2\delta}}.
  \end{align*}

\end{proof}

\begin{proof}[Proof of Theorem {\rm \ref{random:H1}}]
    For simplicity, we define $h_{b1}(\vesub{W}{1}) = E\big[ \Bar{h}_b(\vesub{W}{1}, \vesub{W}{2}, \vesub{W}{3}, \vesub{W}{4}) | \vesub{W}{1} \big]$. Under the alternative hypothesis, the formulation for $h_{b1}(\vesub{W}{1})$ is detailed as:
    \begin{align*}
          & h_{b1}(\vesub{W}{1})\\
          = & \frac{1}{4} E \Big[ \big( \zeta(\vesub{X}{1}, \vesub{X}{2}) + \zeta(\vesub{X}{3}, \vesub{X}{4}) - \zeta(\vesub{X}{1}, \vesub{X}{3}) - \zeta(\vesub{X}{2}, \vesub{X}{4}) \big)   \\
    & \cdot \big(b  g(Z_1) F(Z_1) \eta(\vesub{Y}{1}, \vesub{Y}{2}) + A_1 b \eta(\vesub{Y}{3}, \vesub{Y}{4})  - b  g(Z_1) F(Z_1) \eta(\vesub{Y}{1}, \vesub{Y}{3}) - A_1 b \eta(\vesub{Y}{2}, \vesub{Y}{4})   \big) \big| \vesub{W}{1}
    \Big] \\
    = & \frac{b}{2} \Big\{ E\big[ \zeta(\vesub{X}{1}, \vesub{X}{2}) \eta(\vesub{Y}{1}, \vesub{Y}{2}) \big] g(Z_1) F(Z_1) +  E\big[ \zeta(\vesub{X}{3}, \vesub{X}{4}) \eta(\vesub{Y}{1}, \vesub{Y}{2}) \big] g(Z_1) F(Z_1) \\
    & -  E\big[ \zeta(\vesub{X}{1}, \vesub{X}{3}) \eta(\vesub{Y}{1}, \vesub{Y}{2}) \big] g(Z_1) F(Z_1) -  E\big[ \zeta(\vesub{X}{3}, \vesub{X}{4}) \eta(\vesub{Y}{1}, \vesub{Y}{3}) \big] g(Z_1) F(Z_1) \\
    & +  E\big[ \zeta(\vesub{X}{1}, \vesub{X}{2}) \eta(\vesub{Y}{3}, \vesub{Y}{4}) \big] A_1 +  E\big[ \zeta(\vesub{X}{3}, \vesub{X}{4}) \eta(\vesub{Y}{3}, \vesub{Y}{4}) \big] A_1 \\
    & -  E\big[ \zeta(\vesub{X}{1}, \vesub{X}{2}) \eta(\vesub{Y}{2}, \vesub{Y}{4}) \big] A_1 -  E\big[ \zeta(\vesub{X}{3}, \vesub{X}{4}) \eta(\vesub{Y}{2}, \vesub{Y}{4}) \big] A_1 \big| \vesub{W}{1}
    \Big\}.
    \end{align*}
    This expression is bounded by the parameter $b$, given that the second moments of $\zeta(\vesub{X}{1}, \vesub{X}{2})$ and $\eta(\vesub{Y}{1}, \vesub{Y}{2})$ are finite:
    $E \big[\zeta(\vesub{X}{1}, \vesub{X}{2}) \big]^2 < \infty,E \big[\eta(\vesub{Y}{1}, \vesub{Y}{2}) \big]^2 < \infty $.
    
    Utilizing the U-statistic convergence theorem, the normalized sum of deviations from the mean,
    \begin{align*}
        \frac{1}{\sqrt{n} \sigma_{b1}} \sum_{i = 1}^n \big[h_{b1}(\vesub{W}{i}) - T_{\zeta, \eta, \kappa_b}(\ve{X}, \ve{Y})) \big] \to N(0, 1),
    \end{align*}
    where $\sigma_{b1}^2 = \var(h_{b1}(\vesub{W}{1}))$. 
    
    The desired result follows.
\end{proof}

Let's analyze the power function of the random-lifter method, comparing it with the Hilbert-Schmidt Independence Criterion (HSIC) method using identical kernels $\zeta$ and $\eta$.  This comparison is centered around the test statistic $T_n$,  with $\gamma$ representing the expectation of $T_n$, formulated as:
\begin{align*}
    \gamma = E\left[\zeta(\vesub{X}{1}, \vesub{X}{2}) \eta(\vesub{Y}{1}, \vesub{Y}{2})  \right] + E\left[\zeta(\vesub{X}{1}, \vesub{X}{2})\right] E\left[\eta(\vesub{Y}{1}, \vesub{Y}{2}) \right]   - 2 E\left[\zeta(\vesub{X}{1}, \vesub{X}{2}) \eta(\vesub{Y}{1}, \vesub{Y}{2}) \right]
\end{align*}
Under similar conditions as Theorem~\ref{random:H1}, we observe that $\frac{\sqrt{n}(T_n - \gamma)}{\sigma_1}$ is distributed according to a standard normal distribution, where $\sigma_1^2$ denotes the variance of the first component in the H-decomposition. We still use the notation of  $A_1 = \int F_1(z) g^2(z)  d z, A_2 = \int F_1^2(z) g^3(z)  d z$.

We first provide some useful lemmas.
\begin{lemma}\label{A3::lem::exp}
    For $Z$ defined on $\mathcal{D}$, the expectation of $T_{n, b}$ is given by:
    \begin{align*}
        E[T_{n, b}] = b \int F_1(z) g^2(z) \mathcal{I}\{z \in \mathcal{D} \} d z \cdot \gamma + O_p(b^2) = A_1 b \gamma +O_p(b^2).
    \end{align*}
\end{lemma}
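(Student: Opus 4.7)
The plan is to reduce the claim to an application of Lemma \ref{A1::lem::kerz} via two facts already available in the paper: the unbiasedness of $T_{n,b}$ as a U-statistic, and the independence of $Z$ from $(\ve{X},\ve{Y})$. Concretely, since the symmetric kernel $\bar h_b$ was constructed precisely so that $T_{n,b}$ is an unbiased U-statistic estimator of the population quantity $T_{\zeta,\eta,\kappa_b}(\ve{X},\ve{Y})$, one has $E[T_{n,b}] = T_{\zeta,\eta,\kappa_b}(\ve{X},\ve{Y})$, so the task collapses to computing the latter.

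The second step is to invoke the factorization noted in the methodology section. Because $Z$ is independent of $(\ve{X},\ve{Y})$ and the $Z_i$ are i.i.d., every one of the three summands defining $T_{\zeta,\eta,\kappa_b}(\ve{X},\ve{Y})$ splits into a product of an $\ve{X},\ve{Y}$-expectation and a $Z$-expectation. Moreover, each of the three $Z$-expectations that appear, namely $E[\kappa_b(Z_1,Z_2)]$ for the first two summands and $E[\kappa_b(Z_1,Z_3)]$ for the third, are identically equal (the indices only label independent copies), so a common scalar factor $E[\kappa_b(Z_1,Z_2)]$ can be pulled out. What remains inside the bracket is exactly the definition of $\gamma = \operatorname{HSIC}(\omega,\zeta,\eta)$, giving the identity $T_{\zeta,\eta,\kappa_b}(\ve{X},\ve{Y}) = \gamma \cdot E[\kappa_b(Z_1,Z_2)]$ already highlighted in Section 2.

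The third step is to insert the expansion for $E[\kappa_b(Z_1,Z_2)]$ supplied by Lemma \ref{A1::lem::kerz}, namely
\begin{align*}
E[\kappa_b(Z_1,Z_2)] \;=\; b\int F_1(z)\,g^2(z)\,\mathcal{I}\{z\in\mathcal{D}\}\,dz + O_p(b^2) \;=\; A_1 b + O_p(b^2).
\end{align*}
Multiplying by $\gamma$ and collecting remainders yields $E[T_{n,b}] = A_1 b\gamma + O_p(b^2)$, which is the claim. The only mild obstacle is bookkeeping: one must verify that the moment conditions on $\zeta$ and $\eta$ ensure that $\gamma$ is finite, so that the $O_p(b^2)$ remainder is genuinely controlled after multiplication (this uses $E[\zeta(\vesub{X}{1},\vesub{X}{2})]^2<\infty$ and $E[\eta(\vesub{Y}{1},\vesub{Y}{2})]^2<\infty$, already invoked in Proposition \ref{Prop::var} and Theorem \ref{random:H1}), but no new analytic content is required beyond Lemma \ref{A1::lem::kerz}.
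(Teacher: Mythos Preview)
Your proposal is correct and follows essentially the same approach as the paper: identify $E[T_{n,b}]$ with $T_{\zeta,\eta,\kappa_b}(\ve{X},\ve{Y})$ by unbiasedness, factor out $E[\kappa_b(Z_1,Z_2)]$ using the independence of $Z$ from $(\ve{X},\ve{Y})$ to obtain $\gamma\cdot E[\kappa_b(Z_1,Z_2)]$, and then apply Lemma~\ref{A1::lem::kerz}. The paper's proof is exactly this computation written out line by line.
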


\begin{proof}[Proof of Lemma {\rm \ref{A3::lem::exp}}]
     We first consider the relationship between $\cov^2_{\zeta, \eta, \kappa_b}(\ve{X}, \ve{Y})$ and $\gamma$, and it follows that
    \begin{align*}
         &E[T_{n, b}]\\
        = & E\left[\zeta(\vesub{X}{1}, \vesub{X}{2}) \eta(\vesub{Y}{1}, \vesub{Y}{2}) \kappa_b(Z_1, Z_2) \right] + E\left[\zeta(\vesub{X}{1}, \vesub{X}{2})\right] E\left[\eta(\vesub{Y}{1}, \vesub{Y}{2}) \kappa_b(Z_1, Z_2) \right]\\
        & - 2 E\left[\zeta(\vesub{X}{1}, \vesub{X}{2}) \eta(\vesub{Y}{1}, \vesub{Y}{2}) \kappa_b(Z_1, Z_2) \right] \\
        = &  E\left[ \kappa_b(Z_1, Z_2) \right] \cdot \left\{ E\left[\zeta(\vesub{X}{1}, \vesub{X}{2}) \eta(\vesub{Y}{1}, \vesub{Y}{2})  \right] + E\left[\zeta(\vesub{X}{1}, \vesub{X}{2})\right] E\left[\eta(\vesub{Y}{1}, \vesub{Y}{2}) \right] \right. \\
        & \left.   - 2 E\left[\zeta(\vesub{X}{1}, \vesub{X}{2}) \eta(\vesub{Y}{1}, \vesub{Y}{2}) \right] \right\} \\
  = & b \int F_1(z) g^2(z) \mathcal{I}\{z \in \mathcal{D} \} d z \cdot \gamma + O_p(b^2).
    \end{align*}
\end{proof}
Similar to the conclusion above, if $\mathcal{D} = \mathbf{R}$, we can obtain
    \begin{align*}
        E[T_{n, b}] = b \int g^2(z) d z \cdot \gamma + O_p(b^3).
    \end{align*}

Next, we consider the variance of the first component of the H-decomposition. For simplicity, we denote $\zeta_{ij} = \zeta(\vesub{X}{i}, \vesub{X}{j})$, the same notation is also used for $\eta$.

\begin{lemma}\label{A3::lem::var}
     For $Z$ defined on $\mathcal{D}$, we have
     \begin{align*}
        & \sigma_{b1}^2 \\= & \frac{b^2}{4} \left\{  A_2 E\left[E\left[\zeta_{12}\eta_{12} + \zeta_{34}\eta_{12} - \zeta_{13}\eta_{12} - \zeta_{34}\eta_{13} \mid \vesub{W}{1} \right]^2 \right] \right. \\
         &  + A_1^2 \left[ E\left[E\left[\zeta_{12}\eta_{34} + \zeta_{34}\eta_{34} - \zeta_{12}\eta_{24} - \zeta_{34}\eta_{24} \mid \vesub{W}{1} \right]^2 \right] \right. \\
         &  - \left(E\left[\zeta_{12}\eta_{12} + \zeta_{34}\eta_{12} - \zeta_{13}\eta_{12} - \zeta_{34}\eta_{13}\right]\right)^2 - \left(E\left[\zeta_{12}\eta_{34} + \zeta_{34}\eta_{34} - \zeta_{12}\eta_{24} - \zeta_{34}\eta_{24}\right]\right)^2 \\
         & + 2 \text{cov}(E\left[\zeta_{12}\eta_{34} + \zeta_{34}\eta_{34} - \zeta_{12}\eta_{24} - \zeta_{34}\eta_{24} \mid \vesub{W}{1} \right], \\
         & \left. \left. \quad \quad E\left[\zeta_{12}\eta_{12} + \zeta_{34}\eta_{12} - \zeta_{13}\eta_{12} - \zeta_{34}\eta_{13} \mid \vesub{W}{1} \right]) \right] \right\} + O_p(b^3).
     \end{align*}
\end{lemma}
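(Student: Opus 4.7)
The plan is to leverage the expansion of $h_{b1}(\vesub{W}{1})$ already derived inside the proof of Theorem~\ref{random:H1}. That expansion groups the eight summands defining $h_{b1}$ into two blocks: one proportional to $b\,g(Z_1)F_1(Z_1)$ (a function of $Z_1$ alone) and one proportional to $A_1 b$ (constant in $Z_1$), each multiplied by an $(\ve{X}_1,\ve{Y}_1)$-measurable functional. Writing these two $(\ve{X}_1,\ve{Y}_1)$-measurable functionals as $\xi_1$ and $\xi_2$---the conditional expectations of $\zeta_{12}\eta_{12}+\zeta_{34}\eta_{12}-\zeta_{13}\eta_{12}-\zeta_{34}\eta_{13}$ and of $\zeta_{12}\eta_{34}+\zeta_{34}\eta_{34}-\zeta_{12}\eta_{24}-\zeta_{34}\eta_{24}$ given $\vesub{W}{1}$, respectively---one has
\begin{equation*}
h_{b1}(\vesub{W}{1}) \;=\; \tfrac{b}{2}\bigl[g(Z_1)F_1(Z_1)\,\xi_1 \;+\; A_1\,\xi_2\bigr] \;+\; O_p(b^2).
\end{equation*}
Crucially, since $\zeta$ and $\eta$ do not involve $Z$ and $Z\perp(\ve{X},\ve{Y})$, the variables $\xi_1,\xi_2$ are functions of $(\ve{X}_1,\ve{Y}_1)$ only and are therefore independent of $Z_1$.

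Next I would compute $\sigma_{b1}^2=E[h_{b1}^2]-(E[h_{b1}])^2$ by squaring the display above and taking expectations term by term. Each expectation factorizes by independence of $Z_1$ from $(\ve{X}_1,\ve{Y}_1)$, and the $Z_1$-averages collapse via Lemma~\ref{A1::lem::kerz} into the constants $A_1=E[g(Z_1)F_1(Z_1)]$ and $A_2=E[g^2(Z_1)F_1^2(Z_1)]$. This yields
\begin{equation*}
E[h_{b1}^2] \;=\; \tfrac{b^2}{4}\bigl\{A_2\,E[\xi_1^2] + 2A_1^2\,E[\xi_1\xi_2] + A_1^2\,E[\xi_2^2]\bigr\} + O_p(b^3),
\end{equation*}
together with $E[h_{b1}] = \tfrac{A_1 b}{2}(E\xi_1+E\xi_2)+O_p(b^2)$, whence $(E[h_{b1}])^2 = \tfrac{A_1^2 b^2}{4}(E\xi_1+E\xi_2)^2+O_p(b^3)$.

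The final step is to subtract and collect. The mixed piece produces $2A_1^2\bigl(E[\xi_1\xi_2]-E[\xi_1]E[\xi_2]\bigr)=2A_1^2\,\mathrm{cov}(\xi_1,\xi_2)$. The apparent asymmetry in the stated formula---the $-(E\xi_1)^2$ stays naked rather than combining with $E[\xi_1^2]$ into $\mathrm{var}(\xi_1)$---is precisely because the leading coefficient on $E[\xi_1^2]$ is $A_2$, not $A_1^2$, so only one copy of $A_1^2(E\xi_1)^2$ is available for cancellation. Collecting the $A_2$-terms and the $A_1^2$-terms separately reproduces exactly the expression in the lemma.

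The main obstacle I foresee is the book-keeping of the $O_p(b^2)$ remainder carried by $h_{b1}$. Its cross with the leading $O(b)$ block contributes $O_p(b^3)$ to $E[h_{b1}^2]$, and its square contributes $O_p(b^4)$; to promote these ``$O_p$'' symbols to genuine $O$-bounds after taking expectations one must invoke uniform integrability supplied by the moment conditions $E[\zeta(\vesub{X}{1},\vesub{X}{2})]^{2+\delta}<\infty$ and $E[\eta(\vesub{Y}{1},\vesub{Y}{2})]^{2+\delta}<\infty$ from Proposition~\ref{Prop::var}, combined with the fact that all $Z$-dependent factors produced by Lemma~\ref{A1::lem::kerz} contribute only bounded multiplicative constants. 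Apart from this bookkeeping, every other step is routine bilinear expansion organized by independence of $Z_1$ from $(\ve{X}_1,\ve{Y}_1)$.
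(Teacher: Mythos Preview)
Your proposal is correct and follows essentially the same approach as the paper: write $h_{b1}(\vesub{W}{1})=\tfrac{b}{2}\bigl[F_1(Z_1)g(Z_1)\,\xi_1+A_1\,\xi_2\bigr]+O_p(b^2)$ and then compute $\sigma_{b1}^2=\var[h_{b1}(\vesub{W}{1})]$ by expanding and using independence of $Z_1$ from $(\ve{X}_1,\ve{Y}_1)$. The paper's proof is in fact more terse---it simply records this expansion and says ``the result can be obtained by computing $\sigma_{b1}^2=\var[h_{b1}(\vesub{W}{1})]$''---so your version spells out the bilinear expansion and the remainder bookkeeping that the paper leaves implicit.
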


\begin{proof}[Proof of Lemma {\rm \ref{A3::lem::var}}]
    For the first component of the H-decomposition, we have
    \begin{align*}
       & h_{b1}(\vesub{W}{1} ) \\= &  \frac{b}{2} \left\{F_1(Z_1) g(Z_1) \cdot E\left[\zeta_{12}\eta_{12} + \zeta_{34}\eta_{12} - \zeta_{13}\eta_{12} - \zeta_{34}\eta_{13} \mid \vesub{W}{1} \right] \right.\\
        & \left. + \int F_1(z) g^2(z) \mathcal{I}\{z \in \mathcal{D} \} d z \cdot E\left[\zeta_{12}\eta_{34} + \zeta_{34}\eta_{34} - \zeta_{12}\eta_{24} - \zeta_{34}\eta_{24} \mid \vesub{W}{1} \right]  \right\} + O_p(b^2).
    \end{align*}
    The result can be obtained by computing $\sigma_{b1}^2 = \var[h_{b1}(\vesub{W}{1} )]$.
\end{proof}

Let 
\begin{align*}
    H_1(\vesub{W}{1}) = & E\left[\zeta_{12}\eta_{12} + \zeta_{34}\eta_{12} - \zeta_{13}\eta_{12} - \zeta_{34}\eta_{13} \mid \vesub{W}{1} \right], \\
     H_2(\vesub{W}{1}) = & E\left[\zeta_{12}\eta_{34} + \zeta_{34}\eta_{34} - \zeta_{12}\eta_{24} - \zeta_{34}\eta_{24} \mid \vesub{W}{1} \right],\\
     \tilde{H}_1=& E[H_1(\vesub{W}{1})]^2,\\
     \tilde{H}_2=& E[H_2(\vesub{W}{1})]^2 + 2 E[H_1(\vesub{W}{1}) H_2(\vesub{W}{1})] - (E[H_1(\vesub{W}{1})])^2\\ &- (E[H_2(\vesub{W}{1})])^2- 2 E[H_1(\vesub{W}{1})] E[H_2(\vesub{W}{1})],
\end{align*}
then, it follows that 
\begin{align*}
    \sigma_{1}^2 = \tilde{H}_1 +\tilde{H}_2,  
\end{align*}
and 
\begin{align*}
    \sigma_{b1}^2 = b^2 \left(A_2  \tilde{H}_1 + A_1^2  \tilde{H}_2 \right)  + O_p(b^3).
\end{align*}
These lemmas lay the groundwork for understanding the statistical behavior of the test statistic under the influence of the random-lifter approach, emphasizing the method's adaptability to different kernel configurations and its implications for the expected value and variance. Additionally, the variance part is split into two terms, each augmented by a different coefficient.

\begin{proof}[Proof of Theorem {\rm \ref{random:H12}}]
    We begin by examining the power function of the random-lifter independence test (Rolin). For Theorem~\ref{random:H1},  the power function is essentially the probability of correctly rejecting the null hypothesis when it is false, which can be represented as follows:
    \begin{align*}
       K_{ROLIN} = & \pr\left(\frac{T_{n, b}}{S_{n, b}} > \Phi(1 - \alpha)\right) \\
        = & \pr\left(\sqrt{n} \frac{T_{n, b} - T_{\zeta, \eta, \kappa_b}(\ve{X}, \ve{Y})}{\sigma_{b1}} > \sqrt{n} \frac{S_{n, b}\Phi(1 - \alpha) - T_{\zeta, \eta, \kappa_b}(\ve{X}, \ve{Y})}{\sigma_{b1}} \right) \\
        = & \Phi\left(\sqrt{n} \frac{T_{\zeta, \eta, \kappa_b}(\ve{X}, \ve{Y}) - S_{n, b}\Phi(1 - \alpha)}{\sigma_{b1}}\right).
    \end{align*}

   Turning our attention to the HSIC method, which also exhibits asymptotic normality under the alternative hypothesis, the power function, $K_{HSIC}$, is given by:
     \begin{align*}
       K_{HSIC} = & \pr\left(n T_n > q_{1 - \alpha} \right) \\
        = & \pr\left(\sqrt{n} \frac{T_{n} - \gamma}{\sigma_{1}} > \sqrt{n} \frac{\frac{1}{n} q_{1 - \alpha} - \gamma}{\sigma_{1}} \right) \\
        = & \Phi\left(\sqrt{n} \frac{\gamma - \frac{1}{n} q_{1 - \alpha} }{\sigma_{1}} \right).
    \end{align*}
    where $q_{1 - \alpha}$ denotes the HSIC threshold for significance level $1 - \alpha$. Then, we have $$\frac{\sqrt{n} \gamma}{\sigma_1} = \Phi^{-1}(p_{HSIC}) + O_p(\frac{1}{\sqrt{n}}).$$

    In the proof of Proposition~\ref{Prop::var}, 
    we have 
    $$\frac{\sqrt{n}S_{n, b}}{\sigma_{b1}} = O_p(\frac{1}{\sqrt{n b}}).$$

    Combining Lemma~\ref{A3::lem::exp} and Lemma~ \ref{A3::lem::var}, we conclude that:
    \begin{align*}
        & K_{ROLIN} \\ 
        = & \Phi\left(\sqrt{n} \frac{T_{\zeta, \eta, \kappa_b}(\ve{X}, \ve{Y}) - S_{n, b}\Phi(1 - \alpha)}{\sigma_{b1}}\right) \\
        = & \Phi\left(\frac{T_{\zeta, \eta, \kappa_b}(\ve{X}, \ve{Y})\sigma_{1}}{\sigma_{b1} \gamma} \Phi^{-1}(K_{HSIC}) + O_p(\frac{1}{\sqrt{nb}}) \right) \\
        = & \Phi\left(\frac{A_1 \sqrt{\tilde{H}_1 + \tilde{H}_2}  }{\sqrt{A_2  \tilde{H}_1 + A_1^2  \tilde{H}_2}} \Phi^{-1}(K_{HSIC}) + O_p(\frac{1}{\sqrt{nb}}) \right) \\
        = & \Phi\left(\frac{ \sqrt{\tilde{H}_1 + \tilde{H}_2}  }{\sqrt{\frac{A_2}{A_1^2}  \cdot \tilde{H}_1 +   \tilde{H}_2}} \Phi^{-1}(K_{HSIC}) + O_p(\frac{1}{\sqrt{nb}}) \right).
    \end{align*}

    This concludes the proof.
\end{proof}

Now, we turn to find the minimax rate of the proposed statistic.
Let $<f_1, f_2> = \int f_1(x)f_2(x)dx$ be the $L_2$ inner product of $f_1$ and $f_2$.

\begin{lemma}\label{lem::var}
Under the alternative hypothesis and assumption~\ref{ass1}, then
$$
 \var(T_{n, b}) \leq B_0 b^2\left(\frac{1}{n}+\frac{\left\|\zeta\right\|_{\infty}\left\|\eta\right\|_{\infty}}{n^2 b}\right)
$$
is held with a constant $B_0$.
\end{lemma}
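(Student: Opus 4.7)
The plan is to expand $\var(T_{n,b})$ using the standard Hoeffding/H-decomposition of a fourth-order U-statistic,
\[
\var(T_{n,b}) \;=\; \sum_{k=1}^{4} \frac{\CC{4}{k}\CC{n-4}{4-k}}{\CC{n}{4}}\, \sigma_{bk}^2,
\]
and then to control each $\sigma_{bk}^2=\var\bigl(E[\bar h_b(\vesub{W}{1},\vesub{W}{2},\vesub{W}{3},\vesub{W}{4})\mid \vesub{W}{1},\dots,\vesub{W}{k}]\bigr)$ separately. The combinatorial coefficients satisfy the standard bounds $\CC{4}{k}\CC{n-4}{4-k}/\CC{n}{k}\lesssim n^{-k}$, so the target inequality reduces to proving $\sigma_{b1}^2\lesssim b^2$ (without any factor of $\|\zeta\|_\infty\|\eta\|_\infty$) and $\sigma_{bk}^2\lesssim b\,\|\zeta\|_\infty\|\eta\|_\infty$ for $k=2,3,4$.

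For the projection terms with $k\ge 2$, the idea is to bound $\sigma_{bk}^2\le E[\bar h_b^2]$ and then use the product structure of $\bar h_b$: after expanding the square, every cross term factors through some $\kappa_b$ product. Independence of $Z$ from $(\ve X,\ve Y)$ then separates the $Z$-integrals from the $(\ve X,\ve Y)$-integrals. Each $E[\kappa_b^2(Z_i,Z_j)]$ equals $O(b)$ by Lemma \ref{A1::lem::kerz}, while each $E[\zeta_{ij}^2\eta_{kl}^2]\le \|\zeta\|_\infty\|\eta\|_\infty\, E[\zeta_{ij}\eta_{kl}]$ is bounded by $\|\zeta\|_\infty\|\eta\|_\infty\|f\|_\infty$ via Assumption~\ref{ass1} (shift-invariance with unit $L_1$ mass) and Young's convolution inequality. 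Combining these yields $\sigma_{bk}^2\lesssim b\,\|\zeta\|_\infty\|\eta\|_\infty$ for $k=2,3,4$, contributing the second term in the claimed bound.

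The main obstacle, and the step to treat most carefully, is $\sigma_{b1}^2$. I would invoke Lemma \ref{A3::lem::var}, which gives
\[
\sigma_{b1}^2 \;=\; b^{2}\bigl(A_2\,\tilde H_1 + A_1^{2}\,\tilde H_2\bigr)+O_p(b^{3}),
\]
and then show $\tilde H_1,\tilde H_2=O(1)$ independently of $\|\zeta\|_\infty,\|\eta\|_\infty$. Each conditional expectation appearing in $H_1(\vesub{W}{1}),H_2(\vesub{W}{1})$ is a convolution of $\zeta\otimes\eta$ (or the marginal $\zeta,\eta$) against the joint or marginal densities. By Young's inequality combined with $\int\zeta\,dx_1=\int\eta\,dy_1=1$,
\[
\bigl|(\zeta\otimes\eta)*f(\ve x,\ve y)\bigr|\le \|\zeta\otimes\eta\|_{1}\|f\|_{\infty}=\|f\|_{\infty},
\]
and analogous pointwise bounds $\|f_x\|_\infty$, $\|f_y\|_\infty$, $\|f_x\|_\infty\|f_y\|_\infty$ hold for the remaining terms. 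Hence $|H_j(\vesub{W}{1})|$ is bounded by a constant depending only on $\|f\|_\infty,\|f_x\|_\infty,\|f_y\|_\infty$, giving $\tilde H_1+\tilde H_2\lesssim 1$ and $\sigma_{b1}^2\lesssim b^{2}$.

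Assembling everything,
\[
\var(T_{n,b})\;\lesssim\; \frac{\sigma_{b1}^2}{n}+\frac{\sigma_{b2}^2}{n^{2}}+\frac{\sigma_{b3}^2}{n^{3}}+\frac{\sigma_{b4}^2}{n^{4}}\;\lesssim\;\frac{b^{2}}{n}+\frac{b\,\|\zeta\|_\infty\|\eta\|_\infty}{n^{2}}\;=\;b^{2}\!\left(\frac{1}{n}+\frac{\|\zeta\|_\infty\|\eta\|_\infty}{n^{2}b}\right),
\]
which is the desired bound with an absolute constant $B_0$ depending only on the density bounds and the kernel $k$. The delicate point throughout is to keep the factor $\|\zeta\|_\infty\|\eta\|_\infty$ out of the leading $b^{2}/n$ term; everywhere else it can be freely used to pass from squared kernels to first powers.
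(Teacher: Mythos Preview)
Your proposal is correct and follows essentially the same approach as the paper: both use the Hoeffding decomposition, bound the combinatorial coefficients by $n^{-k}$, reduce to showing $\sigma_{b1}^2\lesssim b^2$ (via the density sup-norms $\|f\|_\infty,\|f_x\|_\infty,\|f_y\|_\infty$ and the unit-$L_1$-mass Assumption~\ref{ass1}) and $\sigma_{b4}^2\lesssim b\,\|\zeta\|_\infty\|\eta\|_\infty$ (via $E[\kappa_b^2]=O(b)$ and the sup-norm trick $\zeta^2\eta^2\le\|\zeta\|_\infty\|\eta\|_\infty\,\zeta\eta$). The only organizational difference is that the paper re-expands $E[\bar h_b\mid \vesub{W}{1}]$ directly and bounds six explicit variance pieces $M_1,\dots,M_6$, whereas you invoke Lemma~\ref{A3::lem::var} (already established) to get $\sigma_{b1}^2=b^2(A_2\tilde H_1+A_1^2\tilde H_2)+O(b^3)$ and then bound $\tilde H_1,\tilde H_2$ by Young's inequality; the underlying computation and the constants involved are the same.
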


\begin{proof}[Proof of Lemma {\rm \ref{lem::var}}]
Based on the variance of U-statistic \cite{lee2019u}, we have
\begin{align*}
    \var(T_{n, b}) = \frac{1}{\CC{n}{4}} \sum_{k=1}^4 \CC{4}{k} \CC{n-4}{4-k} \sigma_{bk}^2
\end{align*}
and we can find a constant $B_1$ such that
\begin{align*}
    \frac{1}{\CC{n}{4}} \CC{4}{k} \CC{n-4}{4-k} =& \frac{4 !(n-4) !}{n !} \CC{4}{k} \frac{(n-4) !}{(n-8+k)(4-k) !} \\
    \le &  \CC{4}{k} \frac{4 !}{(4-k) !} \frac{1}{(n-3)^4} \frac{(n-4) !}{(n-8+k) !} \\
    \le &\CC{4}{k} \frac{4 !}{(4-k) !} \frac{1}{(n-3)^4} (n-3)^{4-k} \le \frac{B_1}{n^k}
\end{align*}
For the law of total variance, we have
$$
\var\left(T_{n, b}\right) \leq B_2 \left(\frac{\sigma_{b1}^2}{n}+\frac{\sigma_{b4}^2}{n^2}\right)
$$
hold with constant $B_2$.

We first consider $\sigma_{b1}$,
 \begin{align*}
      &    E\big[ \Bar{h}_b(\vesub{W}{1}, \vesub{W}{2}, \vesub{W}{3}, \vesub{W}{4}) | \vesub{W}{1} \big] \\
    = & \frac{b}{2} \Big\{ E\big[ \zeta(\vesub{X}{1}, \vesub{X}{2}) \eta(\vesub{Y}{1}, \vesub{Y}{2}) \big] g(Z_1) F(Z_1) +  E\big[ \zeta(\vesub{X}{3}, \vesub{X}{4}) \eta(\vesub{Y}{1}, \vesub{Y}{2}) \big] g(Z_1) F(Z_1) \\
    & -  E\big[ \zeta(\vesub{X}{1}, \vesub{X}{3}) \eta(\vesub{Y}{1}, \vesub{Y}{2}) \big] g(Z_1) F(Z_1) -  E\big[ \zeta(\vesub{X}{3}, \vesub{X}{4}) \eta(\vesub{Y}{1}, \vesub{Y}{3}) \big] g(Z_1) F(Z_1) \\
    & +  E\big[ \zeta(\vesub{X}{1}, \vesub{X}{2}) \eta(\vesub{Y}{3}, \vesub{Y}{4}) \big] A_1 +  E\big[ \zeta(\vesub{X}{3}, \vesub{X}{4}) \eta(\vesub{Y}{3}, \vesub{Y}{4}) \big] A_1 \\
    & -  E\big[ \zeta(\vesub{X}{1}, \vesub{X}{2}) \eta(\vesub{Y}{2}, \vesub{Y}{4}) \big] A_1 -  E\big[ \zeta(\vesub{X}{3}, \vesub{X}{4}) \eta(\vesub{Y}{2}, \vesub{Y}{4}) \big] A_1 \big| \vesub{W}{1}
    \Big\},
    \end{align*}
    then
    \begin{align*}
 \sigma_{b1}^2 &  \leq b^2 B_3\left[M_1+M_2+M_3+M_4+M_5+M_6\right],
 \end{align*}
 where $B_3$ is a constant and
 \begin{align*}
M_1  & =\var\left[E\left[\zeta(\vesub{X}{1}, \vesub{X}{2}) \eta(\vesub{Y}{1}, \vesub{Y}{2})  g(Z_1) F(Z_1) \big| \vesub{W}{1}\right]\right] \\
M_2 & =\var\left[E\left[\zeta(\vesub{X}{3}, \vesub{X}{4}) \eta(\vesub{Y}{1}, \vesub{Y}{2})  g(Z_1) F(Z_1) \big| \vesub{W}{1}\right]\right] \\
M_3 & =\var\left[E\left[\zeta(\vesub{X}{1}, \vesub{X}{3}) \eta(\vesub{Y}{1}, \vesub{Y}{2}) g(Z_1) F(Z_1) \big| \vesub{W}{1}\right]\right] \\
M_4 & =\var\left[E\left[\zeta(\vesub{X}{3}, \vesub{X}{4}) \eta(\vesub{Y}{1}, \vesub{Y}{3})  g(Z_1) F(Z_1) \big| \vesub{W}{1}\right]\right] \\
M_5 & =\var\left[E\left[\zeta(\vesub{X}{1}, \vesub{X}{2}) \eta(\vesub{Y}{3}, \vesub{Y}{4}) A_1 \big| \vesub{W}{1}\right]\right] \\
M_6 & =\var\left[E\left[\zeta(\vesub{X}{1}, \vesub{X}{2}) \eta(\vesub{Y}{2}, \vesub{Y}{3}) A_1 \big| \vesub{W}{1}\right]\right].
\end{align*}
Next, we calculate the expressions of the six terms. For $M_1$, we have
\begin{align*}
M_1 & \leq E\left\{\left[E\left[\zeta(\vesub{X}{1}, \vesub{X}{2}) \eta(\vesub{Y}{1}, \vesub{Y}{2}) g(Z_1) F(Z_1) \big| \vesub{W}{1}\right]\right]^2\right\} \\
& \leq A_2 E\left[\zeta(\vesub{X}{1}, \vesub{X}{2}) \eta(\vesub{Y}{1}, \vesub{Y}{2}) \zeta(\vesub{X}{1}, \vesub{X}{3}) \eta(\vesub{Y}{1}, \vesub{Y}{3})\right] \\
 & \leq A_2 \|f\|_{\infty}^2 \int \zeta(\vesub{X}{1}, \vesub{X}{2}) \eta(\vesub{Y}{1}, \vesub{Y}{2}) \zeta(\vesub{X}{1}, \vesub{X}{3}) \eta(\vesub{Y}{1}, \vesub{Y}{3}) f(\vesub{X}{1}, \vesub{Y}{1}) \prod_{i=1}^3 d \vesub{x}{i} d \vesub{y}{i} \\
 & =A_2 \|f\|_{\infty}^2 \int\left[\int \zeta(\vesub{X}{1}, \ve{x}) d \ve{x}\right]^2\left[\int \zeta(\vesub{Y}{1}, \ve{y}) d \ve{y}\right]^2 f(\vesub{X}{1}, \vesub{Y}{1}) d \vesub{X}{1} d \vesub{Y}{1} \\
 & =A_2\|f\|_{\infty}^2.
\end{align*}
For $M_2$, we have
\begin{align*}
M_2 & \leq E\left\{\left[E\left[\zeta(\vesub{X}{3}, \vesub{X}{4}) \eta(\vesub{Y}{1}, \vesub{Y}{2}) \cdot g(Z_1) F(Z_1) \big| \vesub{W}{1}\right]\right]^2\right\} \\
& \leq A_2\left[E[\zeta(\vesub{X}{3}, \vesub{X}{4})]\right]^2 \cdot E\left\{\left[E\left[\eta(\vesub{Y}{1}, \vesub{Y}{2}) \big| \vesub{Y}{1}\right]\right]^2\right\} \\
& \leq A_2\left[E[\zeta(\vesub{X}{3}, \vesub{X}{4})]\right]^2 \cdot E\left[\eta(\vesub{Y}{1}, \vesub{Y}{2}) \eta(\vesub{Y}{1}, \vesub{Y}{3})\right] \\
& =A_2\left\|f_x\right\|_{\infty}^2 \int \eta(\vesub{Y}{1}, \vesub{Y}{2}) \eta(\vesub{Y}{1}, \vesub{Y}{3}) f_y(\vesub{Y}{1}) f_y(\vesub{Y}{2}) f_y(\vesub{Y}{3}) d \vesub{Y}{1} d \vesub{Y}{2} d \vesub{Y}{3} \\
& =A_2 \left\|f_x\right\|_{\infty}^2  \int\left[\int \eta(\vesub{Y}{1}, \vesub{Y}{2}) f_y(\vesub{Y}{2}) d \vesub{Y}{2}\right]\left[\int \eta(\vesub{Y}{1}, \vesub{Y}{3}) f_y(\vesub{Y}{3}) d \vesub{Y}{3}\right] f_y(\vesub{Y}{1}) d \vesub{Y}{1} \\
& \leq A_3 \left\|f_x\right\|_{\infty}^2 \left\|f_y\right\|_{\infty}^2.
\end{align*}
For $M_3$, we have
\begin{align*}
M_3  \leq & E\left\{\left[E\left[\zeta(\vesub{X}{1}, \vesub{X}{3}) \eta(\vesub{Y}{1}, \vesub{Y}{2})  g(Z_1) F(Z_1) \big| \vesub{W}{1}\right]\right]^2\right\} \\
 \leq & A_2 E\left[\zeta(\vesub{X}{1}, \vesub{X}{3}) \zeta(\vesub{X}{1}, \vesub{X}{4})\eta(\vesub{Y}{1}, \vesub{Y}{2}) \eta(\vesub{Y}{1}, \vesub{Y}{5})\right] \\
 = & A_2 \int \zeta(\vesub{X}{1}, \vesub{X}{3}) \zeta(\vesub{X}{1}, \vesub{X}{4})\eta(\vesub{Y}{1}, \vesub{Y}{2}) \eta(\vesub{Y}{1}, \vesub{Y}{5}) f(\vesub{X}{1}, \vesub{Y}{1}) f_x(\vesub{X}{3}) f_x(\vesub{X}{4}) f_y(\vesub{Y}{2}) \\
& f_y(\vesub{Y}{5}) d \vesub{X}{1} d \vesub{X}{3} d \vesub{X}{4} d \vesub{Y}{1} d \vesub{Y}{2} d \vesub{Y}{5} \\
 = & A_2 \int\left[\int \zeta(\vesub{X}{1}, \vesub{X}{3}) f_x(\vesub{X}{3}) d \vesub{X}{3}\right]\left[\int \zeta(\vesub{X}{1}, \vesub{X}{4})f_x(\vesub{X}{4}) d \vesub{X}{4}\right] \\
& \cdot {\left[\int \eta(\vesub{Y}{1}, \vesub{Y}{2}) f_y(\vesub{Y}{2}) d \vesub{Y}{2}\right] f(\vesub{X}{1}, \vesub{Y}{2}) d \vesub{X}{1} d \vesub{Y}{1} } \\
 \leq & A_2 \left\|f_x\right\|_{\infty}^2 \left\|f_y\right\|_{\infty}^2.
\end{align*}
For $M_4$, we have
\begin{align*}
M_4  \leq &E\left\{\left[E\left[\zeta(\vesub{X}{3}, \vesub{X}{4}) \eta(\vesub{Y}{1}, \vesub{Y}{3}) g(Z_1) F(Z_1) \big| \vesub{W}{1}\right)\right]^2\right\} \\
\leq & A_2 E\left[\zeta(\vesub{X}{3}, \vesub{X}{4})  \zeta(\vesub{X}{2}, \vesub{X}{5}) \eta(\vesub{Y}{1}, \vesub{Y}{3}) \eta(\vesub{Y}{1}, \vesub{Y}{2})\right] \\
= & A_2 \int  \zeta(\vesub{X}{2}, \vesub{X}{5}) \zeta(\vesub{X}{3}, \vesub{X}{4}) \eta(\vesub{Y}{1}, \vesub{Y}{3}) \eta(\vesub{Y}{1}, \vesub{Y}{2}) f_y(\vesub{Y}{1}) f(\vesub{X}{2}, \vesub{Y}{2}) \\
&  f(\vesub{X}{3}, \vesub{Y}{3}) f_x(\vesub{X}{4}) f_x(\vesub{x}{5}) d \vesub{X}{2} d \vesub{X}{3} d \vesub{X}{4} d \vesub{Y}{1} d \vesub{Y}{2} d \vesub{Y}{3} \\
 \leq& A_2\|f\|_{\infty}^2 \int \left[\int  \zeta(\vesub{X}{2}, \vesub{X}{5}) d \vesub{X}{2}\right] \left[\int \zeta(\vesub{X}{3}, \vesub{X}{4}) d \vesub{X}{3}\right] \left[\int \eta(\vesub{Y}{1}, \vesub{Y}{3}) d \vesub{Y}{3}\right] \\
&  \cdot \left[\int \eta(\vesub{Y}{1}, \vesub{Y}{2}) d \vesub{Y}{2}\right] f_y(\vesub{Y}{1}) f_x(\vesub{X}{4}) f_x(\vesub{X}{5}) d \vesub{Y}{1} d \vesub{X}{4} d \vesub{X}{5} \\
= & A_2 \|f\|_{\infty}^2.
\end{align*}
For $M_5$, we have
\begin{align*}
M_5 & \leq E\left\{\left[E\left[\zeta(\vesub{X}{1}, \vesub{X}{2}) \eta(\vesub{Y}{3}, \vesub{Y}{4}) A_1 \big| \vesub{W}{1}\right]\right]^2\right\} \\
& \leq A_1^2 E\left\{\left[E\left[\zeta(\vesub{X}{1}, \vesub{X}{2}) \big| \vesub{X}{2}\right]\right]^2\right\}\left[E[\eta(\vesub{Y}{3}, \vesub{Y}{4})]\right]^2\\
& =A_1^2 E\left[\zeta(\vesub{X}{1}, \vesub{X}{2}) \zeta(\vesub{X}{1}, \vesub{X}{3})\right]\left[E[\eta(\vesub{Y}{3}, \vesub{Y}{4})]\right]^2 \\
& \leq A_1^2\left\|f_x\right\|_{\infty}^2\left\|f_y\right\|_{\infty}^2.
\end{align*}
For $M_6$, similar to $M_4$ we have
\begin{align*}
M_6 & \leq E\left\{\left[E\left[\zeta(\vesub{X}{1}, \vesub{X}{2}) \eta(\vesub{Y}{2}, \vesub{Y}{3}) A_1 \big| \vesub{W}{1}\right)\right]^2\right\} \\
& \leq A_1^2 E\left[\zeta(\vesub{X}{1}, \vesub{X}{2}) \zeta(\vesub{X}{1}, \vesub{X}{4})\eta(\vesub{Y}{2}, \vesub{Y}{3}) \eta(\vesub{Y}{4}, \vesub{Y}{5})\right] \\
& \leq A_1^2\|f\|_{\infty}^2 
\end{align*}

We now turn to $\sigma_{b4}^2$, note that
\begin{align*}
&\sigma_{b4}^2\\
\leq & B_4 \var\left[\zeta(\vesub{X}{1}, \vesub{X}{2})\left[\eta(\vesub{Y}{1}, \vesub{Y}{2}) \kappa\left(Z_1, Z_2\right)+\eta(\vesub{Y}{3}, \vesub{Y}{4}) \kappa\left(Z_3, Z_4\right) -2 \eta(\vesub{Y}{1}, \vesub{Y}{3}) \kappa\left(Z_1, Z_3\right)\right]\right] \\
\leq & B_4\left\{\var\left[\zeta(\vesub{X}{1}, \vesub{X}{2}) \eta(\vesub{Y}{1}, \vesub{Y}{2}) \kappa\left(Z_1, Z_2\right) \right]+\var\left[\zeta(\vesub{X}{1}, \vesub{X}{2}) \eta(\vesub{Y}{3}, \vesub{Y}{4}) \kappa\left(Z_3, Z_4\right)\right] \right. \\
& \left.+\var\left[\zeta(\vesub{X}{1}, \vesub{X}{2}) \eta(\vesub{Y}{1}, \vesub{Y}{3}) \kappa\left(Z_1, Z_3\right)\right]\right\} \\
\leq & B_4 b\left\{ E\left[\zeta^2(\vesub{X}{1}, \vesub{X}{2}) \eta^2(\vesub{Y}{1}, \vesub{Y}{2})\right]+E\left[\zeta^2(\vesub{X}{1}, \vesub{X}{2}) \eta^2(\vesub{Y}{3}, \vesub{Y}{4})\right] +E\left[\zeta^2(\vesub{X}{1}, \vesub{X}{2}) \eta^2(\vesub{Y}{1}, \vesub{Y}{3})\right]\right\},
\end{align*}
where $B_4$ is a constant. For $ E\left[\zeta^2(\vesub{X}{1}, \vesub{X}{2}) \eta^2(\vesub{Y}{1}, \vesub{Y}{2})\right]$
\begin{align*}
& E\left[\zeta^2(\vesub{X}{1}, \vesub{X}{2}) \eta^2(\vesub{Y}{1}, \vesub{Y}{2})\right] \\
\leq& \left\|\zeta\right\|_{\infty} \left\|\eta\right\|_{\infty} E\left[\zeta(\vesub{X}{1}, \vesub{X}{2}) \eta(\vesub{Y}{1}, \vesub{Y}{2})\right] \\
\leq& \left\|\zeta\right\|_{\infty} \left\|\eta\right\|_{\infty}\|f\|_{\infty} \left[\int \zeta(\vesub{X}{1}, \vesub{X}{2}) d \vesub{X}{2}\right]\left[\int \eta(\vesub{Y}{1}, \vesub{Y}{2}) d \vesub{Y}{2}\right] f(\vesub{X}{1}, \vesub{Y}{1}) d \vesub{X}{1} d \vesub{Y}{1} \\
 =&\left\|\zeta\right\|_{\infty}\left\|\eta\right\|_{\infty}\left\|f\right\|_{\infty}
\end{align*}
Similarly,
\begin{align*}
& E\left[\zeta^2(\vesub{X}{1}, \vesub{X}{2}) \eta^2(\vesub{Y}{3}, \vesub{Y}{4})\right] \leq\left\|\zeta\right\|_{\infty}\left\|\eta\right\|_{\infty}\left\|f_x\right\|_{\infty}\left\|f_y\right\|_{\infty} \\
& E\left[\zeta^2(\vesub{X}{1}, \vesub{X}{2}) \eta^2(\vesub{Y}{1}, \vesub{Y}{3})\right] \leq\left\|\zeta\right\|_{\infty}\left\|\eta\right\|_{\infty}\|f\|_{\infty}.
\end{align*}
Thus, $\sigma_{b4}^2 \leq B_4  b \left\|\zeta\right\|_{\infty}\left\|\eta\right\|_{\infty} \left(\left\|{f}\right\|_{\infty}+\left\| f_x\right\|_{\infty}\left\| f_y\right\|_{\infty} \right)$.
Combining the results above, we can find a constant $B_0$ such that
\begin{align*}
 \var\left(T_{n, b}\right) 
& \leq B_0 b^2\left(\frac{1}{n}+\frac{\left\|\zeta\right\|_{\infty}\left\|\eta\right\|_{\infty}}{n^2 b}\right).
\end{align*}
 This concludes the proof.
\end{proof}

\begin{proof}[Proof of Theorem {\rm \ref{thm::minimax1}}]
For Proposition \ref{Prop::var}, 
there exists $N_0 > 0$, when $n > N_0$, $S_{n, b}^2 \leq B_1 b/n^2$ with a constant $B_1$.
Based on Chebyshev's inequality, we have
\begin{align*}
    \pr_{H_1}\left(\frac{\left|T_{n, b}-E (T_{n, b})\right|}{\sqrt{ \var(T_{n, b})}} \geq \frac{1}{\sqrt{\beta}}\right) \leq \beta,
\end{align*}
which implies $\pr_{H_1}\left(T_{n, b} \leq E\left(T_{n, b}\right)-\sqrt{\frac{\var(T_{n, b})}{\beta}}\right) \leq \beta$.

For Lemma \ref{lem::var}, we have  
\begin{align*}
    E\left(T_{n, b}\right) \geq B(\beta) \left(\frac{b}{\sqrt{n}} + \frac{\sqrt{b\left\|\zeta\right\|_{\infty}\left\|\eta\right\|_{\infty}}}{n}\right) \geq \sqrt{\frac{ \var(T_{n, b})}{\beta}}+ \Phi(\alpha) S_{n, b}
\end{align*}
holds for sufficiently large $n$, then
\begin{align*}
   \pr_{H_1}\left(T_{n, b} \leq \Phi(\alpha) S_{n, b}\right) \leq \pr_{H_1}\left(T_{n, b} \leq E\left(T_{n, b}\right)-\sqrt{\frac{\var(T_{n, b})}{\beta}}\right) \leq \beta
\end{align*}
is also held for sufficiently large $n$.

\end{proof}

\begin{lemma}\label{lem::expect}
Under the alternative hypothesis and assumption \ref{ass1}, then
$$
E\left[T_{n, b}\right]=A_1 b\left\langle \psi, \psi *\left(\zeta \otimes \eta\right)\right\rangle
$$
where $*$ denotes the convolution operator concerning the Lebesgue measure.
\end{lemma}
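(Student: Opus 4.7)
The plan is to combine the exact factorization of $E[T_{n,b}]$ established earlier with a Parseval-type identity that rewrites HSIC as a convolution inner product against $\psi$. Because $T_{n,b}$ is the unbiased U-statistic estimator of $T_{\zeta,\eta,\kappa_b}(\ve{X},\ve{Y})$, and because the independence of $Z$ from $(\ve{X},\ve{Y})$ yields the factorization $T_{\zeta,\eta,\kappa_b}(\ve{X},\ve{Y}) = \operatorname{HSIC}(\omega,\zeta,\eta)\cdot E[\kappa_b(Z_1,Z_2)]$ stated in Section~2.1, we immediately obtain $E[T_{n,b}] = \operatorname{HSIC}(\omega,\zeta,\eta) \cdot E[\kappa_b(Z_1,Z_2)]$. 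Lemma~\ref{A1::lem::kerz} gives $E[\kappa_b(Z_1,Z_2)] = A_1 b + O(b^2)$, so at leading order only the factor $A_1 b$ matters for the claim, and the task reduces to showing $\operatorname{HSIC}(\omega,\zeta,\eta) = \langle \psi, \psi*(\zeta\otimes\eta)\rangle$.

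For this identification, substitute $f(x,y) = \psi(x,y) + f_x(x)f_y(y)$ into the product $\psi(x,y)\psi(x',y')$ to obtain
$$\psi(x,y)\psi(x',y') = f(x,y)f(x',y') - f(x,y)f_x(x')f_y(y') - f_x(x)f_y(y)f(x',y') + f_x(x)f_y(y)f_x(x')f_y(y').$$
Assumption~\ref{ass1} lets us write $\zeta(\vesub{X}{i},\vesub{X}{j}) = \zeta(\vesub{X}{i}-\vesub{X}{j})$ and $\eta$ analogously, so integrating both sides against $\zeta(x-x')\eta(y-y')$ yields, for the four summands respectively, $E[\zeta(\vesub{X}{1},\vesub{X}{2})\eta(\vesub{Y}{1},\vesub{Y}{2})]$, $-E[\zeta(\vesub{X}{1},\vesub{X}{2})\eta(\vesub{Y}{1},\vesub{Y}{3})]$, $-E[\zeta(\vesub{X}{1},\vesub{X}{3})\eta(\vesub{Y}{1},\vesub{Y}{2})]$, and $E[\zeta(\vesub{X}{1},\vesub{X}{2})]E[\eta(\vesub{Y}{1},\vesub{Y}{2})]$. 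By symmetry of the kernels and relabeling, the two middle terms coincide, so their combined contribution reconstitutes exactly the three-term expression for $\operatorname{HSIC}(\omega,\zeta,\eta)$. The left-hand side equals $\langle\psi, \psi*(\zeta\otimes\eta)\rangle$ by the definition of convolution, namely $(\psi*(\zeta\otimes\eta))(x,y) = \int \psi(x',y')\zeta(x-x')\eta(y-y')\,dx'\,dy'$, thereby proving the Parseval identity.

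Combining the two steps gives $E[T_{n,b}] = A_1 b\,\langle\psi,\psi*(\zeta\otimes\eta)\rangle$ at leading order in $b$, with the $O(b^2)$ correction arising solely from the expansion of $E[\kappa_b(Z_1,Z_2)]$ and absorbed as in Lemma~\ref{A3::lem::exp}. The main subtlety is bookkeeping in the second step, specifically matching the four expansion terms against the three summands of $\operatorname{HSIC}$; this is routine once shift-invariance and symmetry of $\zeta,\eta$ are invoked. Notably, the normalization $\int\zeta = \int\eta = 1$ in Assumption~\ref{ass1} is not required for this particular identity, but plays a role in downstream arguments such as those in Theorem~\ref{thm::minimax2}.
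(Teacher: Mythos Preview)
Your proposal is correct and follows essentially the same route as the paper: both reduce the claim to the identity $\operatorname{HSIC}(\omega,\zeta,\eta)=\langle\psi,\psi*(\zeta\otimes\eta)\rangle$ by expanding $\psi(x,y)\psi(x',y')$ (or equivalently factoring $[f-f_xf_y][f-f_xf_y]$) and matching the four pieces to the three HSIC summands, then multiply by $E[\kappa_b(Z_1,Z_2)]=A_1b+O(b^2)$. Your version is slightly more modular in that you invoke the factorization $T_{\zeta,\eta,\kappa_b}=\operatorname{HSIC}\cdot E[\kappa_b]$ from Section~2.1 up front rather than re-deriving it from the kernel $\bar h_b$, and your side remark that the normalization $\int\zeta=\int\eta=1$ is not needed for this lemma is correct, but otherwise the arguments coincide.
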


The result in Lemma~\ref{lem::expect} can also be writen as 
$$E\left[T_{n, b}\right]=\frac{A_1 b}{2}\left(\|\psi\|_2^2+\left\|\psi *\left(\zeta \otimes \eta\right)\right\|_2^2-\left\|\psi - \psi *\left(\zeta \otimes \eta\right)\right\|_2^2\right).$$

\begin{proof}[Proof of Lemma {\rm \ref{lem::expect}}]
    \begin{align*}
& E\left[T_{n, b}\right]\\= & E\left[\Bar{h}_b(\vesub{W}{1}, \vesub{W}{2}, \vesub{W}{3}, \vesub{W}{4}) \right] \\
= & \frac{1}{4} E\left[\left(\zeta(\vesub{X}{1}, \vesub{X}{2})+\zeta(\vesub{X}{3}, \vesub{X}{4}) -\zeta(\vesub{X}{1}, \vesub{X}{3}) -\zeta(\vesub{X}{2}, \vesub{X}{4}) \right) \right. \\
& \left.\cdot A_1  b\left(\eta(\vesub{Y}{1}, \vesub{Y}{2}) +\eta(\vesub{Y}{3}, \vesub{Y}{4}) -\eta(\vesub{Y}{1}, \vesub{Y}{3}) -\eta(\vesub{Y}{2}, \vesub{Y}{4}) \right)\right] \\
= & A_1 b \left\{E \left[\zeta(\vesub{X}{1}, \vesub{X}{2})  \eta(\vesub{Y}{1}, \vesub{Y}{2})\right] +E\left[\zeta(\vesub{X}{1}, \vesub{X}{2})  \eta(\vesub{Y}{1}, \vesub{Y}{2}) \right]-2 E\left[\zeta(\vesub{X}{1}, \vesub{X}{2})  \eta(\vesub{Y}{1}, \vesub{Y}{3}) \right]\right\} \\
= & A_1 b\left\{\int \zeta(\vesub{X}{1}, \vesub{X}{2})  \eta(\vesub{Y}{1}, \vesub{Y}{2})  f(\vesub{X}{1}, \vesub{Y}{1})  f(\vesub{X}{2}, \vesub{Y}{2})  d \vesub{X}{1} d \vesub{X}{2} d \vesub{Y}{1} d \vesub{Y}{2} \right. \\
& +\int \zeta(\vesub{X}{1}, \vesub{X}{2})  \eta(\vesub{Y}{1}, \vesub{Y}{2})  f_x(\vesub{X}{1}) f_x(\vesub{X}{2})   f_y(\vesub{Y}{1}) f_y(\vesub{Y}{2})   d \vesub{X}{1} d \vesub{X}{2} d \vesub{Y}{1} d \vesub{Y}{2}  \\
& \left.-\int \zeta(\vesub{X}{1}, \vesub{X}{2})  \eta(\vesub{Y}{1}, \vesub{Y}{2})  f(\vesub{X}{1}, \vesub{Y}{1}) f_x(\vesub{X}{2}) f_y(\vesub{Y}{2}) d \vesub{X}{1} d \vesub{X}{2} d \vesub{Y}{1} d \vesub{Y}{2} \right\} \\
= & A_1 b \int \zeta(\vesub{X}{1}, \vesub{X}{2})  \eta(\vesub{Y}{1}, \vesub{Y}{2}) \left[f(\vesub{X}{1}, \vesub{Y}{1}) -f_x(\vesub{X}{1}) f_y(\vesub{Y}{1})\right] \\
& {\left[f\left(\vesub{X}{2}, \vesub{Y}{2}\right) -f_x(\vesub{X}{2}) f_y(\vesub{Y}{2})\right] d \vesub{X}{1} d \vesub{X}{2} d \vesub{Y}{1} d \vesub{Y}{2}  } \\
= & A_1 b \int \psi(\vesub{X}{1}, \vesub{Y}{1})\left[\int \psi(\vesub{X}{2}, \vesub{Y}{2}) \zeta\left(\vesub{X}{1}-\vesub{X}{2}\right) \eta\left(\vesub{Y}{1}-\vesub{Y}{2}\right) d \vesub{X}{2} d \vesub{Y}{2}\right] d \vesub{X}{1} d \vesub{Y}{1} \\
= & A_1 b \int \psi(\vesub{X}{1}, \vesub{Y}{1})\left[\psi *\left(\zeta \otimes \eta\right)\right](\vesub{X}{1}, \vesub{Y}{1}) d \vesub{X}{1} d \vesub{Y}{1} \\
= & A_1 b\left\langle \psi, \psi *\left(\zeta \otimes \eta\right)\right\rangle
\end{align*}
\end{proof}

\begin{lemma}\label{lem::var2}
Under the alternative hypothesis and assumption \ref{ass1}, then
$$
 \sigma_{b1}^2 \leq B_0^{\prime} b^2 \| \psi *\left(\zeta \otimes \eta\right) \|_2^2
$$
 holds with a constant $B_0^{\prime}$.
\end{lemma}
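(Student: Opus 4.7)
The plan is to revisit the first-order expansion of $h_{b1}(\vesub{W}{1}) = E[\bar h_b(\vesub{W}{1},\vesub{W}{2},\vesub{W}{3},\vesub{W}{4})\mid \vesub{W}{1}]$ from the proof of Theorem~\ref{random:H1}, which at order $b$ is a sum of four ``$F_1(Z_1)g(Z_1)$-type'' terms and four ``$A_1$-type'' terms. First I would simplify each group, exploiting Assumption~\ref{ass1} (shift-invariance and $\int\zeta=\int\eta=1$) together with the identity $\psi = f - f_x\otimes f_y$, so that the full $\vesub{W}{1}$-dependent part of $h_{b1}$ is expressible through the single function $v(\vesub{X}{1},\vesub{Y}{1}) := (\psi*(\zeta\otimes\eta))(\vesub{X}{1},\vesub{Y}{1})$. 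Once this reduction is achieved, the bound follows from $\var(U+V)\le 2(E[U^2]+E[V^2])$ together with the independence $Z_1\perp(\vesub{X}{1},\vesub{Y}{1})$ already used in Proposition~\ref{Prop::var}.

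For the $F_1g$-group, I would observe that the two ``$\zeta_{34}$'' conditional expectations both reduce to $E[\zeta_{34}]\,E[\eta_{1j}\mid \vesub{Y}{1}]$ for $j\in\{2,3\}$ and cancel because $\vesub{Y}{2},\vesub{Y}{3}$ share the marginal $f_y$; the surviving piece $E[\zeta_{12}\eta_{12}\mid\vesub{W}{1}]-E[\zeta_{13}\mid\vesub{X}{1}]E[\eta_{12}\mid\vesub{Y}{1}]$ collapses directly to $v(\vesub{X}{1},\vesub{Y}{1})$ after inserting the definition of $\psi$. For the $A_1$-group, two of the four terms are pure $\vesub{W}{1}$-independent constants (contributing nothing to the variance), and the remaining pair, after adding and subtracting $f_x(\vesub{x}{2})E[\eta_{34}]$ inside the integral, rearranges to $-\int\zeta(\vesub{X}{1},\vesub{x}{2})\psi(\vesub{x}{2},\vesub{y}{2})\tilde\eta(\vesub{y}{2})\,d\vesub{x}{2}d\vesub{y}{2}$ with $\tilde\eta(\vesub{y}{2})=\int\eta(\vesub{y}{2},\vesub{y}{4})f_y(\vesub{y}{4})\,d\vesub{y}{4}$; a Fubini swap then identifies this with $-\int v(\vesub{X}{1},\vesub{y}{1})f_y(\vesub{y}{1})\,d\vesub{y}{1}$, i.e.\ an $f_y$-weighted integral of the same $v$.

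Putting both pieces together, $h_{b1}(\vesub{W}{1})-c = \tfrac{b}{2}\bigl[F_1(Z_1)g(Z_1)\,v(\vesub{X}{1},\vesub{Y}{1}) - A_1\int v(\vesub{X}{1},\vesub{y}{1})f_y(\vesub{y}{1})\,d\vesub{y}{1}\bigr] + O_p(b^2)$ for some constant $c$. The independence $Z_1\perp(\vesub{X}{1},\vesub{Y}{1})$ gives $E[F_1^2(Z_1)g^2(Z_1)]=A_2$, so the first square contributes at most $A_2\,E[v^2(\vesub{X}{1},\vesub{Y}{1})]\le A_2\|f\|_\infty\|v\|_2^2$; and Cauchy--Schwarz against $f_y$ combined with $\|f_x\|_\infty,\|f_y\|_\infty<\infty$ bounds the second square by $\|f_x\|_\infty\|f_y\|_\infty\|v\|_2^2$. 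Combining via $\var(U+V)\le 2(E[U^2]+E[V^2])$ and absorbing the $O_p(b^2)$ residuals into $O(b^3)$ yields $\sigma_{b1}^2\le B_0'b^2\|\psi*(\zeta\otimes\eta)\|_2^2$ with $B_0'=\tfrac12[A_2\|f\|_\infty+A_1^2\|f_x\|_\infty\|f_y\|_\infty]$. The hard part will be the Fubini rearrangement for the $A_1$-group: it is what collapses an otherwise distinct functional onto an $f_y$-weighted integral of the very same convolution $v$, and without this identification the two variance contributions would appear to involve unrelated norms and could not be combined into a single $\|\psi*(\zeta\otimes\eta)\|_2^2$ on the right-hand side.
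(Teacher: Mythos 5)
Your proposal follows essentially the same route as the paper: the decisive step in both is the identity $(\psi*(\zeta\otimes\eta))(\vesub{X}{1},\vesub{Y}{1})=E\big[\zeta(\vesub{X}{1},\vesub{X}{2})\big(\eta(\vesub{Y}{1},\vesub{Y}{2})-\eta(\vesub{Y}{1},\vesub{Y}{3})\big)\mid \vesub{W}{1}\big]$ (and its Fubini-marginalized variant for the $A_1$-group), after which the variance is bounded by $\|f\|_\infty$ and $\|f_x\|_\infty\|f_y\|_\infty$ times $\|\psi*(\zeta\otimes\eta)\|_2^2$, exactly as in the paper's proof via Lemma \ref{lem::var} and the law of total variance.

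One concrete caveat: your claimed two-term representation of $h_{b1}$, namely that its $\vesub{W}{1}$-dependent part is $\tfrac{b}{2}\big[F_1(Z_1)g(Z_1)\,\Psi(\vesub{X}{1},\vesub{Y}{1})-A_1\,E[\Psi(\vesub{X}{1},\vesub{Y}{3})\mid\vesub{X}{1}]\big]$ with $\Psi=\psi*(\zeta\otimes\eta)$, is based on the eight-term display in the proof of Theorem \ref{random:H1}; the full sixteen-term expansion of $h_b$ (and its symmetrization $\bar h_b$) also contains cross terms such as $-\zeta_{24}\eta_{12}\kappa_{12}$, $+\zeta_{13}\eta_{13}\kappa_{13}$ and $+\zeta_{24}\eta_{13}\kappa_{13}$, which combine into a third, $\ve{Y}$-sided functional $E[\Psi(\vesub{X}{3},\vesub{Y}{1})\mid\vesub{Y}{1}]$. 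The paper keeps this as a separate variance group, bounded by $\|f_x\otimes f_y\|_\infty\|\Psi\|_2^2$. Its omission does not endanger the lemma — the missing piece is controlled by exactly the same Cauchy--Schwarz argument you already use for the $A_1$-group, so the inequality survives with a larger $B_0'$ — but your asserted exact form of $h_{b1}$ and the explicit constant $B_0'=\tfrac12[A_2\|f\|_\infty+A_1^2\|f_x\|_\infty\|f_y\|_\infty]$ are not quite right for the symmetrized kernel; you should either carry the third group or verify the symmetrization before fixing a constant.
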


\begin{proof}[Proof of Lemma {\rm \ref{lem::var2}}]
    For the proof of Lemma \ref{lem::var}, we know
    \begin{align*}
 \sigma_{b1}^2 \leq& B_1 b^2\left\{\var\left[E\left[ \zeta(\vesub{X}{1}, \vesub{X}{2})\left( \eta(\vesub{Y}{1}, \vesub{Y}{2})- \eta(\vesub{Y}{1}, \vesub{Y}{3})\right)  g(Z_1) F(Z_1)\big| \vesub{W}{1}\right]\right]\right. \\
&+\var\left[E\left[ \zeta(\vesub{X}{1}, \vesub{X}{2})\left( \eta(\vesub{Y}{2}, \vesub{Y}{3})- \eta(\vesub{Y}{3}, \vesub{Y}{4}) \right)  A_1\big| \vesub{W}{1}\right]\right. \\
&\left.+\var\left[E\left[\left( \zeta(\vesub{X}{2}, \vesub{X}{3})- \zeta(\vesub{X}{3}, \vesub{X}{4})\right)  \eta(\vesub{Y}{1}, \vesub{Y}{2})  g(Z_1) F(Z_1)\big| \vesub{W}{1}\right]\right]\right\}
\end{align*}
Note that
\begin{align*}
&\left[\psi *\left(\zeta \otimes \eta\right)\right](\vesub{X}{1}, \vesub{Y}{1})\\
= & \int \psi(\vesub{X}{2}, \vesub{Y}{2}) \zeta(\vesub{X}{1}, \vesub{X}{2}) \eta(\vesub{Y}{1}, \vesub{Y}{2}) d\vesub{X}{2} d\vesub{Y}{2} \\
= & \int\left[f(\vesub{X}{2},\vesub{Y}{2})-f_x(\vesub{X}{2}) f_y(\vesub{Y}{2})\right] \zeta(\vesub{X}{1}, \vesub{X}{2}) \eta(\vesub{Y}{1}, \vesub{Y}{2}) d\vesub{X}{2} d\vesub{Y}{2} \\
= & \int \zeta(\vesub{X}{1}, \vesub{X}{2}) \eta(\vesub{Y}{1}, \vesub{Y}{2}) f(\vesub{X}{2},\vesub{Y}{2}) d\vesub{X}{2} d\vesub{Y}{2}-\int \zeta(\vesub{X}{1}, \vesub{X}{2}) \eta(\vesub{Y}{1}, \vesub{Y}{2}) f_x(\vesub{X}{2})  f_y(\vesub{Y}{2}) d\vesub{X}{2} d\vesub{Y}{2} \\
= & \int \zeta(\vesub{X}{1}, \vesub{X}{2})\left(\eta(\vesub{Y}{1}, \vesub{Y}{2})-\eta(\vesub{Y}{2}, \vesub{Y}{3})\right) f(\vesub{X}{2},\vesub{Y}{2}) f_y(\vesup{y}{\prime \prime}) d\vesub{X}{2} d\vesub{Y}{2} d \vesub{Y}{3}\\
= & E\left[\zeta(\vesub{X}{1}, \vesub{X}{2})\left(\eta(\vesub{Y}{1}, \vesub{Y}{2})-\eta(\vesub{Y}{1}, \vesub{Y}{3})\right) \mid \vesub{X}{1}, \vesub{Y}{1} \right].
\end{align*}

Let $\Psi(\ve{x}, \ve{y}) = \left[\psi *\left(\zeta \otimes \eta\right)\right](\ve{x}, \ve{y})$, then  
\begin{align*}
& E\left[ \zeta(\vesub{X}{1}, \vesub{X}{2})\left( \eta(\vesub{Y}{1}, \vesub{Y}{2})-\eta(\vesub{Y}{1}, \vesub{Y}{3})\right) \big| \vesub{W}{1}\right]=\Psi(\vesub{X}{1}, \vesub{Y}{1}) \\
& E\left[ \zeta(\vesub{X}{1}, \vesub{X}{2})\left( \eta(\vesub{Y}{2}, \vesub{Y}{3})- \eta(\vesub{Y}{3}, \vesub{Y}{4})\right) \big| \vesub{W}{1}\right]=E\left[\Psi(\vesub{X}{1}, \vesub{Y}{3}) \big| \vesub{X}{1}\right] \\
& E\left[\left( \zeta(\vesub{X}{2}, \vesub{X}{3})- \zeta(\vesub{X}{3}, \vesub{X}{4})\right)  \eta(\vesub{Y}{1}, \vesub{Y}{2}) \big| \vesub{W}{1}\right]=E\left[\Psi(\vesub{X}{3}, \vesub{Y}{1}) \big| \vesub{Y}{1}\right]
\end{align*}

For the law of total variance, we have
$$\sigma_{b1}^2 \leq B_1 b^2\left\{A_2 \var\left(\Psi(\vesub{X}{1}, \vesub{Y}{1})\right)+A_1^2  \var\left(\Psi(\vesub{X}{1}, \vesub{Y}{3})\right)+A_2 \var\left(\Psi(\vesub{X}{3}, \vesub{Y}{1})\right)\right\}$$
We can bound the three terms by
\begin{align*}
& \var\left(\Psi(\vesub{X}{1}, \vesub{Y}{1})\right) \leq\|f\|_{\infty}\|G\|_2^2 \\
& \var\left(\Psi(\vesub{X}{1}, \vesub{Y}{3})\right) \leq\left\|f_x \otimes f_y\right\|_{\infty}\|\Psi\|_2^2 \\
& \var\left(\Psi(\vesub{X}{3}, \vesub{Y}{1})\right) \leq\left\|f_x \otimes f_y\right\|_{\infty}\|\Psi\|_2^2.
\end{align*}
Thus,
$$\sigma_{b1}^2 \leq B_0^{\prime} b^2 \| \psi *\left(\zeta \otimes \eta\right) \|_2^2,$$
is hold with a constant $B_0^{\prime}$.
\end{proof}

\begin{proof}[Proof of Theorem {\rm \ref{thm::minimax2}}]
For Lemma \ref{lem::var2}, we know there exists a constant $B_0^{\prime}$ such that
$$\sigma_{b1}^2 \leq B_0^{\prime} b^2 \| \psi *\left(\zeta \otimes \eta\right) \|_2^2,$$
Similar to the proof of Lemma \ref{lem::var}, we have
$$\var(T_{n, b}) \leq \frac{B_1^{\prime} b^2}{n}\left(\| \psi *\left(\zeta \otimes \eta\right) \|_2^2 + \frac{1}{nb} \|\zeta \|_{\infty} \|\eta\|_{\infty} \right),$$
where $B_1^{\prime}$ is a constant.
Since $\sqrt{a+b} \leq \sqrt{a} + \sqrt{b}$ and $2\sqrt{ab} \leq c a + \frac{b}{c}$ for $c > 0$, then
\begin{align*}
    2\sqrt{\frac{\var(T_{n, b})}{\beta}} &\leq 2 \sqrt{\frac{B_1^{\prime} b^2}{n \beta} \| \psi *\left(\zeta \otimes \eta\right) \|_2^2} + 2 \sqrt{\frac{B_1^{\prime} b}{n^2 \beta}\|\zeta \|_{\infty} \|\eta\|_{\infty}} \\
    & \leq A_2 b  \| \psi *\left(\zeta \otimes \eta\right) \|_2^2 + \frac{B_2^{\prime} b}{n} + \frac{B_2^{\prime} \sqrt{b}}{n} \sqrt{\|\zeta \|_{\infty} \|\eta\|_{\infty}},
\end{align*}
is hold with constant $B_2^{\prime}$. For Lemma \ref{lem::expect}, we have
$$
E\left[T_{n, b}\right]=\frac{A_1 b}{2}\left(\|\psi\|_2^2+\left\|\psi *\left(\zeta \otimes \eta\right)\right\|_2^2-\left\|\psi-\psi *\left(\zeta \otimes \eta\right)\right\|_2^2\right).
$$
We modify the condition in Theorem \ref{thm::minimax1}, 
\begin{align*}
    &\frac{A_1 b}{2}\left(\|\psi\|_2^2+\left\|\psi *\left(\zeta \otimes \eta\right)\right\|_2^2-\left\|\psi-\psi *\left(\zeta \otimes \eta\right)\right\|_2^2\right) \\
    \geq & \frac{A_1}{2} b  \| \psi *\left(\zeta \otimes \eta\right) \|_2^2 + \frac{B_2^{\prime} b}{2n} + \frac{B_2^{\prime} \sqrt{b}}{2n} \sqrt{\|\zeta \|_{\infty} \|\eta\|_{\infty}} + \frac{B_2^{\prime} \sqrt{b}}{n},
\end{align*}
which implies
$$\|\psi\|_2^2 \geq \|\psi - \psi*\left(\zeta \otimes \eta\right)\|_2^2 + \frac{B(\beta)}{n\sqrt{b}} \sqrt{\|\zeta \|_{\infty} \|\eta\|_{\infty}}.$$
\end{proof}

\section{Additional simulations}

In this section, we examine two additional scenarios. The initial four examples demonstrate the dependency relationship, wherein both $X$ and $Y$ are categorical variables. Subsequent examples elucidate the dependency interactions between one continuous and one discrete variable. The results are summarized in Figure~\ref{example1_4}, \ref{example5_8}.

\begin{enumerate}
    \item[\BLUE{Example 1:}] $X$ is sampled from the binomial distribution $B(1, 0.5)$, $\epsilon_1$ and $\epsilon_2$ are independently drawn from binomial distributions $B(2, 0.5)$ and $B(3, 0.5)$, respectively. We have
    $$Y = X * \epsilon_1 + (1 - X) * \epsilon_2.$$

    \item[\BLUE{Example 2:}] $X$ and $\epsilon$ are independently derived from binomial distribution $B(2, 0.5)$ and $B(10, 0.5)$, respectively. $Y$ is given by
    $$Y = \sqrt{X} + \epsilon.$$

    \item[\BLUE{Example 3:}] $X$ and $\epsilon$ are independently drawn from Poisson distributions, $Pois(1)$ and $Pois(2)$, respectively. Let
    $$Y = 0.2*(X + X^2) + \epsilon.$$

    \item[\BLUE{Example 4:}] $X$ follows a Poisson distribution $Pois(1)$, $\epsilon$ is sampled independently from a binomial distribution $B(3, 0.5)$. $Y$ is calculated as 
    $$Y = X - \epsilon^2.$$

    \item[\BLUE{Example 5:}] $X$ is obtained from a binomial distribution $B(1, 0.5)$, and $Y$ is modeled as a normal distribution with a mean equal to $X$ and an identity variance, expressed as 
    $$Y \sim N(X, I_p).$$

    \item[\BLUE{Example 6:}] $X$ is sampled from the binomial distribution $B(3, 0.5)$, $\epsilon$ is derived from the standard normal distribution. Let 
    $$Y = X * \epsilon.$$
    
    \item[\BLUE{Example 7:}] $X$ is sourced from a binomial distribution $B(1, 0.5)$, $\epsilon_1$ and $\epsilon_2$ are independently from chi-square distributions $\chi(3)$ and $\chi(5/3)$, respectively. We have
    $$Y = X * \epsilon_1 + (1 - X) * \epsilon_2.$$
    
    \item[\BLUE{Example 8:}] $X$ originates from a Poisson distribution $pois(1)$ and $\epsilon_1$ is derived from a Student-t distribution with 2 degrees of freedom,
    and $\epsilon_2$ is independently drawn from a centered normal distribution with a variance of 2. Then
    $$Y = X * \epsilon_1 + \epsilon_2.$$
\end{enumerate}

\begin{figure}[h!]
    \centerline{\includegraphics[width = 1.0\textwidth, height = 0.67\textwidth]{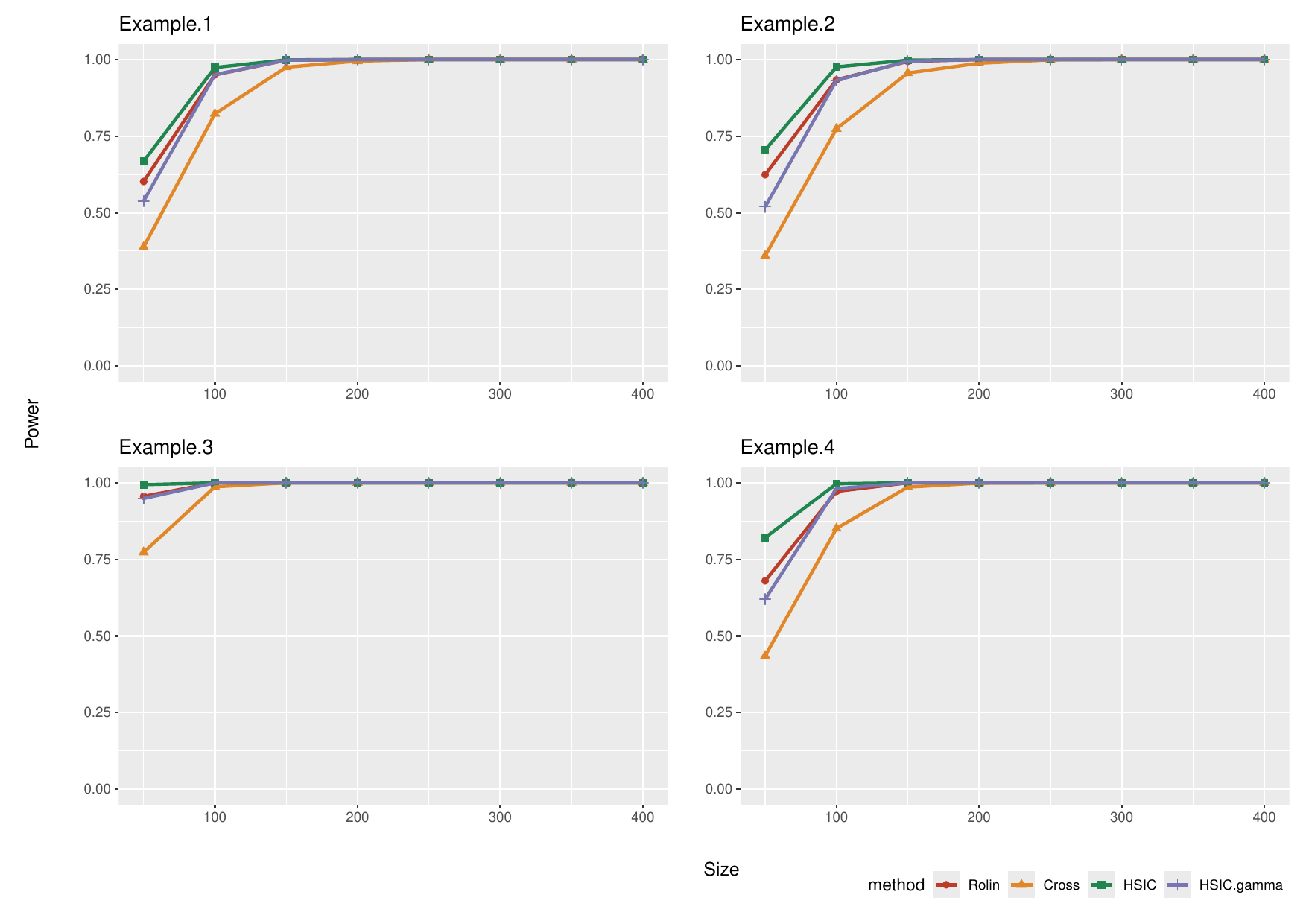}}
    \caption{The power performance in Examples 1-4.}
    \label{example1_4}
\end{figure}

\begin{figure}[h!]
    \centerline{\includegraphics[width = 1.0\textwidth, height = 0.67\textwidth]{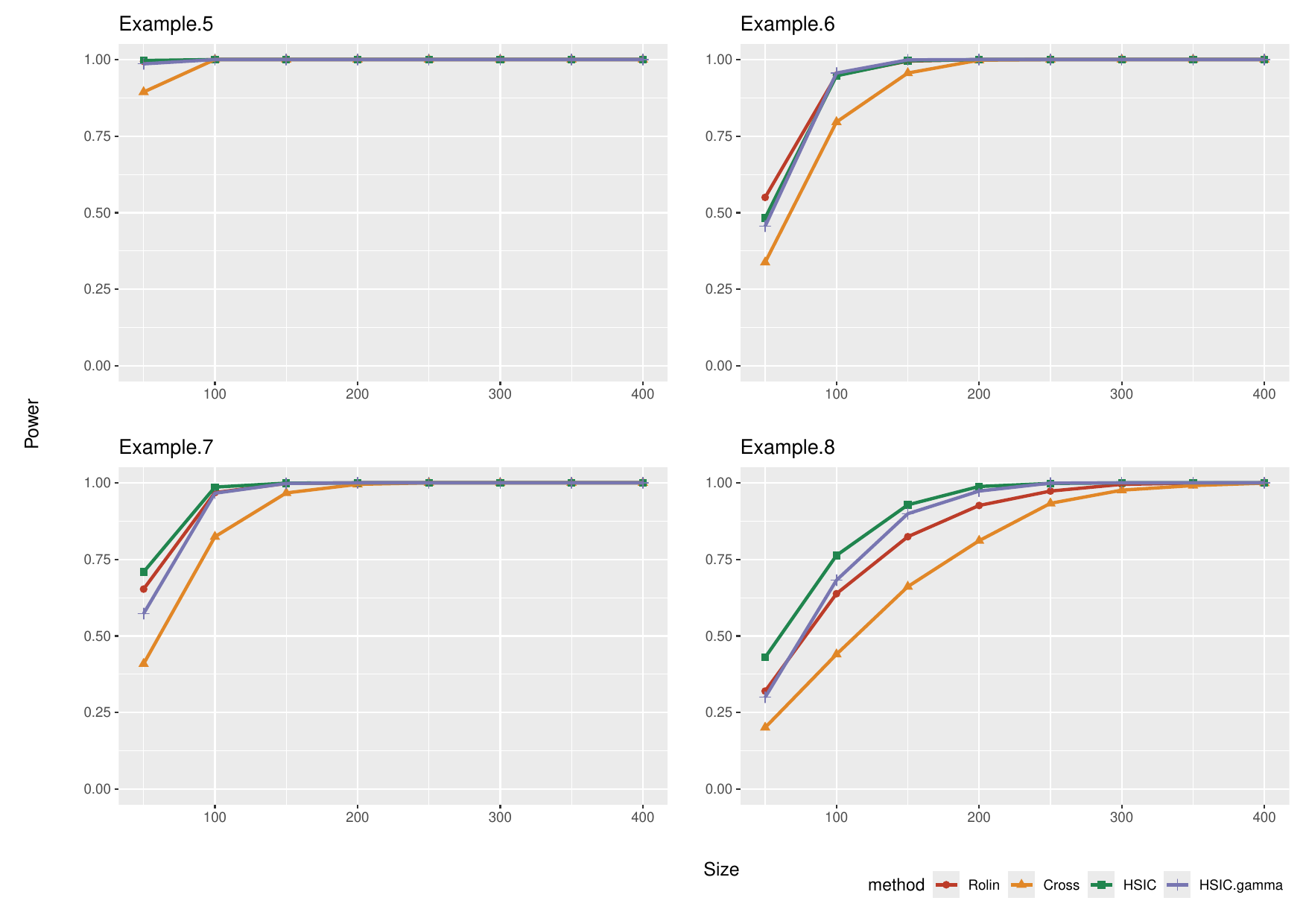}}
    \caption{The power performance in Examples 5-8.}
    \label{example5_8}
\end{figure}

Below, we provide four additional examples illustrating the asymptotic distribution of the proposed statistic under the null hypothesis in discrete cases.

\begin{enumerate}
    \item[\BLUE{Example 9:}] $X$ and $Y$ are both discrete variable, where
    \begin{align*}
        X  \sim B(5, 0.5), \quad
        Y  \sim B(10, 0.5).
    \end{align*}

    \item[\BLUE{Example 10:}] $X$ and $Y$ are both discrete variable, where
    \begin{align*}
        X  \sim B(6, 0.5), \quad
        Y  \sim pois(2).
    \end{align*}

    \item[\BLUE{Example 11:}] $X$ is discrete and $Y$ is continuous. That is 
    \begin{align*}
        X  \sim B(5, 0.5), \quad
        Y  \sim N(0, 1).
    \end{align*}

    \item[\BLUE{Example 12:}] $X$ is discrete and $Y$ is continuous. That is 
    \begin{align*}
        X  \sim pois(2.5), \quad
        Y  \sim N(0, 1).
    \end{align*}   
\end{enumerate}
The results are shown in Figure \ref{example9_12}.

\begin{figure}[h!]
\centerline{\includegraphics[width = 1.0\textwidth, height = 0.67\textwidth]{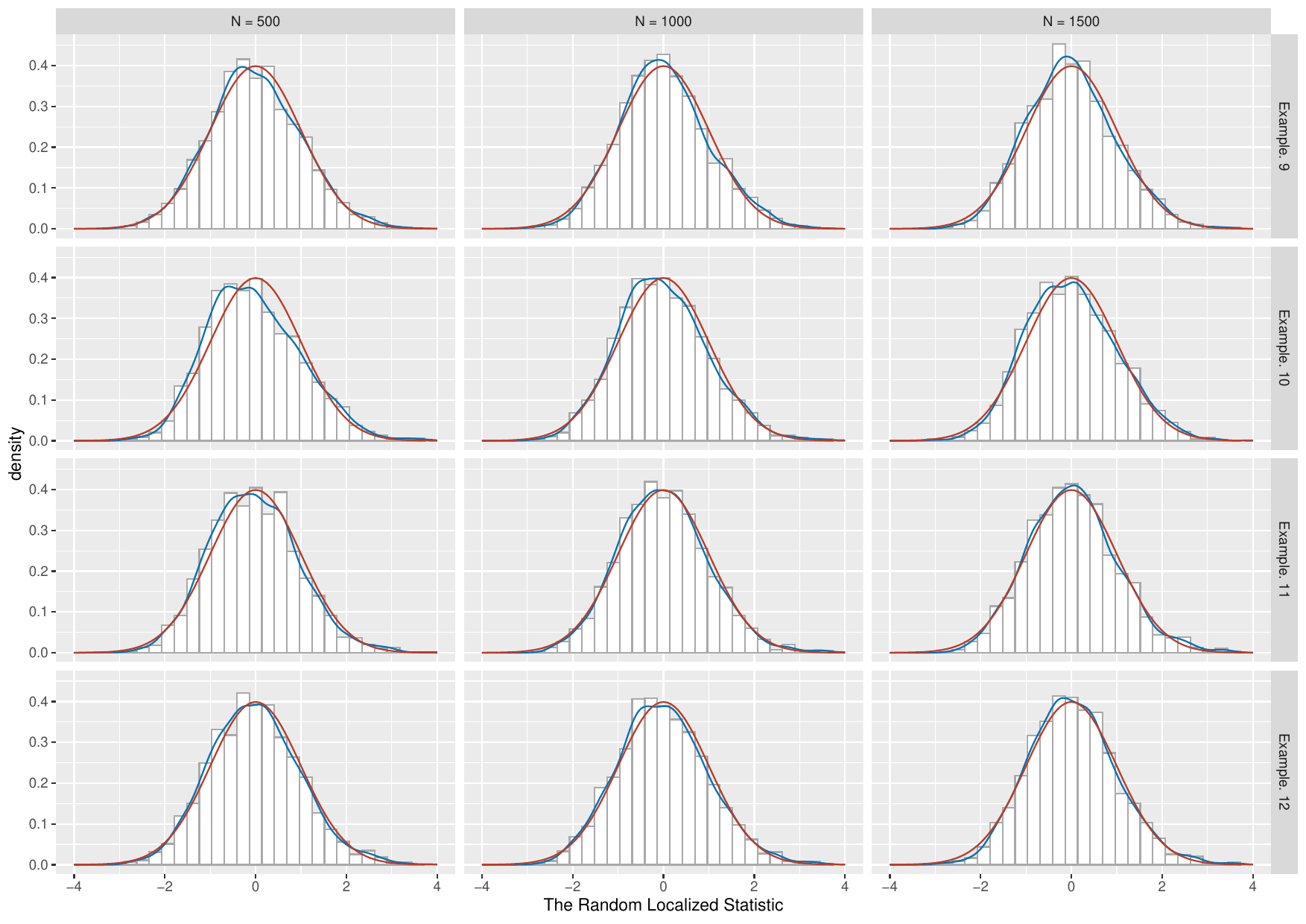}}
    \caption{The histogram and density plot in Examples 9-12.}
    \label{example9_12}
\end{figure}
\end{appendix}
\bibliographystyle{imsart-nameyear}
\bibliography{aos-RL.bib}

\begin{thebibliography}{34}

\bibitem[\protect\citeauthoryear{Albert et~al.}{2022}]{Melisande2022}
\begin{barticle}[author]
\bauthor{\bsnm{Albert},~\bfnm{M{\'e}lisande}\binits{M.}},
  \bauthor{\bsnm{Laurent},~\bfnm{B{\'e}atrice}\binits{B.}},
  \bauthor{\bsnm{Marrel},~\bfnm{Amandine}\binits{A.}} \AND
  \bauthor{\bsnm{Meynaoui},~\bfnm{Anouar}\binits{A.}}
(\byear{2022}).
\btitle{{Adaptive test of independence based on HSIC measures}}.
\bjournal{The Annals of Statistics}
\bvolume{50}
\bpages{858 -- 879}.
\bdoi{10.1214/21-AOS2129}
\end{barticle}
\endbibitem

\bibitem[\protect\citeauthoryear{Deb and Sen}{2023}]{deb2023}
\begin{barticle}[author]
\bauthor{\bsnm{Deb},~\bfnm{Nabarun}\binits{N.}} \AND
  \bauthor{\bsnm{Sen},~\bfnm{Bodhisattva}\binits{B.}}
(\byear{2023}).
\btitle{Multivariate Rank-Based Distribution-Free Nonparametric Testing Using
  Measure Transportation}.
\bjournal{Journal of the American Statistical Association}
\bvolume{118}
\bpages{192--207}.
\bdoi{10.1080/01621459.2021.1923508}
\end{barticle}
\endbibitem

\bibitem[\protect\citeauthoryear{Fan and Li}{1996}]{fan1996}
\begin{barticle}[author]
\bauthor{\bsnm{Fan},~\bfnm{Yanqin}\binits{Y.}} \AND
  \bauthor{\bsnm{Li},~\bfnm{Qi}\binits{Q.}}
(\byear{1996}).
\btitle{Consistent Model Specification Tests: Omitted Variables and
  Semiparametric Functional Forms}.
\bjournal{Econometrica}
\bvolume{64}
\bpages{865--890}.
\end{barticle}
\endbibitem

\bibitem[\protect\citeauthoryear{Fukumizu et~al.}{2007}]{NIPS2007_Fukumizu}
\begin{binproceedings}[author]
\bauthor{\bsnm{Fukumizu},~\bfnm{Kenji}\binits{K.}},
  \bauthor{\bsnm{Gretton},~\bfnm{Arthur}\binits{A.}},
  \bauthor{\bsnm{Sun},~\bfnm{Xiaohai}\binits{X.}} \AND
  \bauthor{\bsnm{Sch\"{o}lkopf},~\bfnm{Bernhard}\binits{B.}}
(\byear{2007}).
\btitle{Kernel Measures of Conditional Dependence}.
In \bbooktitle{Advances in Neural Information Processing Systems}
(\beditor{\bfnm{J.}\binits{J.}~\bsnm{Platt}},
  \beditor{\bfnm{D.}\binits{D.}~\bsnm{Koller}},
  \beditor{\bfnm{Y.}\binits{Y.}~\bsnm{Singer}} \AND
  \beditor{\bfnm{S.}\binits{S.}~\bsnm{Roweis}}, eds.)
\bvolume{20}.
\bpublisher{Curran Associates, Inc.}
\end{binproceedings}
\endbibitem

\bibitem[\protect\citeauthoryear{Gao and Shao}{2023}]{gao2023two}
\begin{barticle}[author]
\bauthor{\bsnm{Gao},~\bfnm{Hanjia}\binits{H.}} \AND
  \bauthor{\bsnm{Shao},~\bfnm{Xiaofeng}\binits{X.}}
(\byear{2023}).
\btitle{Two sample testing in high dimension via maximum mean discrepancy}.
\bjournal{Journal of Machine Learning Research}
\bvolume{24}
\bpages{1--33}.
\end{barticle}
\endbibitem

\bibitem[\protect\citeauthoryear{Gao et~al.}{2021}]{Gao2021}
\begin{barticle}[author]
\bauthor{\bsnm{Gao},~\bfnm{Lan}\binits{L.}},
  \bauthor{\bsnm{Fan},~\bfnm{Yingying}\binits{Y.}},
  \bauthor{\bsnm{Lv},~\bfnm{Jinchi}\binits{J.}} \AND
  \bauthor{\bsnm{Shao},~\bfnm{Qi-Man}\binits{Q.-M.}}
(\byear{2021}).
\btitle{{Asymptotic distributions of high-dimensional distance correlation
  inference}}.
\bjournal{The Annals of Statistics}
\bvolume{49}
\bpages{1999 -- 2020}.
\bdoi{10.1214/20-AOS2024}
\end{barticle}
\endbibitem

\bibitem[\protect\citeauthoryear{Gretton et~al.}{2005a}]{gretton2005measuring}
\begin{binproceedings}[author]
\bauthor{\bsnm{Gretton},~\bfnm{Arthur}\binits{A.}},
  \bauthor{\bsnm{Bousquet},~\bfnm{Olivier}\binits{O.}},
  \bauthor{\bsnm{Smola},~\bfnm{Alex}\binits{A.}} \AND
  \bauthor{\bsnm{Sch{\"o}lkopf},~\bfnm{Bernhard}\binits{B.}}
(\byear{2005}a).
\btitle{Measuring statistical dependence with Hilbert-Schmidt norms}.
In \bbooktitle{International conference on algorithmic learning theory}
\bpages{63--77}.
\bpublisher{Springer}.
\end{binproceedings}
\endbibitem

\bibitem[\protect\citeauthoryear{Gretton et~al.}{2005b}]{gretton2005kernel}
\begin{barticle}[author]
\bauthor{\bsnm{Gretton},~\bfnm{Arthur}\binits{A.}},
  \bauthor{\bsnm{Herbrich},~\bfnm{Ralf}\binits{R.}},
  \bauthor{\bsnm{Smola},~\bfnm{Alexander}\binits{A.}},
  \bauthor{\bsnm{Bousquet},~\bfnm{Olivier}\binits{O.}} \AND
  \bauthor{\bsnm{Sch{\"o}lkopf},~\bfnm{Bernhard}\binits{B.}}
(\byear{2005}b).
\btitle{Kernel Methods for Measuring Independence}.
\bjournal{Journal of Machine Learning Research}
\bvolume{6}
\bpages{2075--2129}.
\end{barticle}
\endbibitem

\bibitem[\protect\citeauthoryear{Gretton et~al.}{2005c}]{pmlr-vR5-gretton05a}
\begin{binproceedings}[author]
\bauthor{\bsnm{Gretton},~\bfnm{Arthur}\binits{A.}},
  \bauthor{\bsnm{Smola},~\bfnm{Alexander}\binits{A.}},
  \bauthor{\bsnm{Bousquet},~\bfnm{Olivier}\binits{O.}},
  \bauthor{\bsnm{Herbrich},~\bfnm{Ralf}\binits{R.}},
  \bauthor{\bsnm{Belitski},~\bfnm{Andrei}\binits{A.}},
  \bauthor{\bsnm{Augath},~\bfnm{Mark}\binits{M.}},
  \bauthor{\bsnm{Murayama},~\bfnm{Yusuke}\binits{Y.}},
  \bauthor{\bsnm{Pauls},~\bfnm{Jon}\binits{J.}},
  \bauthor{\bsnm{Sch\"olkopf},~\bfnm{Bernhard}\binits{B.}} \AND
  \bauthor{\bsnm{Logothetis},~\bfnm{Nikos}\binits{N.}}
(\byear{2005}c).
\btitle{Kernel Constrained Covariance for Dependence Measurement}.
In \bbooktitle{Proceedings of the Tenth International Workshop on Artificial
  Intelligence and Statistics}
(\beditor{\bfnm{Robert~G.}\binits{R.~G.}~\bsnm{Cowell}} \AND
  \beditor{\bfnm{Zoubin}\binits{Z.}~\bsnm{Ghahramani}}, eds.).
\bseries{Proceedings of Machine Learning Research}
\bvolume{R5}
\bpages{112--119}.
\bpublisher{PMLR}
\bnote{Reissued by PMLR on 30 March 2021.}
\end{binproceedings}
\endbibitem

\bibitem[\protect\citeauthoryear{Gretton et~al.}{2007}]{gretton2007kernel}
\begin{barticle}[author]
\bauthor{\bsnm{Gretton},~\bfnm{Arthur}\binits{A.}},
  \bauthor{\bsnm{Fukumizu},~\bfnm{Kenji}\binits{K.}},
  \bauthor{\bsnm{Teo},~\bfnm{Choon}\binits{C.}},
  \bauthor{\bsnm{Song},~\bfnm{Le}\binits{L.}},
  \bauthor{\bsnm{Sch{\"o}lkopf},~\bfnm{Bernhard}\binits{B.}} \AND
  \bauthor{\bsnm{Smola},~\bfnm{Alex}\binits{A.}}
(\byear{2007}).
\btitle{A kernel statistical test of independence}.
\bjournal{Advances in neural information processing systems}
\bvolume{20}.
\end{barticle}
\endbibitem

\bibitem[\protect\citeauthoryear{Hongjian~Shi and Han}{2022}]{shi2022}
\begin{barticle}[author]
\bauthor{\bsnm{Hongjian~Shi},~\bfnm{Mathias~Drton}\binits{M.~D.}} \AND
  \bauthor{\bsnm{Han},~\bfnm{Fang}\binits{F.}}
(\byear{2022}).
\btitle{Distribution-Free Consistent Independence Tests via Center-Outward
  Ranks and Signs}.
\bjournal{Journal of the American Statistical Association}
\bvolume{117}
\bpages{395--410}.
\bdoi{10.1080/01621459.2020.1782223}
\end{barticle}
\endbibitem

\bibitem[\protect\citeauthoryear{Kim, Balakrishnan and
  Wasserman}{2020}]{kim2020}
\begin{barticle}[author]
\bauthor{\bsnm{Kim},~\bfnm{Ilmun}\binits{I.}},
  \bauthor{\bsnm{Balakrishnan},~\bfnm{Sivaraman}\binits{S.}} \AND
  \bauthor{\bsnm{Wasserman},~\bfnm{Larry}\binits{L.}}
(\byear{2020}).
\btitle{{Robust multivariate nonparametric tests via projection averaging}}.
\bjournal{The Annals of Statistics}
\bvolume{48}
\bpages{3417 -- 3441}.
\bdoi{10.1214/19-AOS1936}
\end{barticle}
\endbibitem

\bibitem[\protect\citeauthoryear{Kim, Balakrishnan and
  Wasserman}{2022}]{kim2022}
\begin{barticle}[author]
\bauthor{\bsnm{Kim},~\bfnm{Ilmun}\binits{I.}},
  \bauthor{\bsnm{Balakrishnan},~\bfnm{Sivaraman}\binits{S.}} \AND
  \bauthor{\bsnm{Wasserman},~\bfnm{Larry}\binits{L.}}
(\byear{2022}).
\btitle{{Minimax optimality of permutation tests}}.
\bjournal{The Annals of Statistics}
\bvolume{50}
\bpages{225 -- 251}.
\bdoi{10.1214/21-AOS2103}
\end{barticle}
\endbibitem

\bibitem[\protect\citeauthoryear{Lee}{2019}]{lee2019u}
\begin{bbook}[author]
\bauthor{\bsnm{Lee},~\bfnm{A~J}\binits{A.~J.}}
(\byear{2019}).
\btitle{U-statistics: Theory and Practice}.
\bpublisher{Routledge}, \baddress{New York}.
\end{bbook}
\endbibitem

\bibitem[\protect\citeauthoryear{Li and Yuan}{2019}]{li2019optimality}
\begin{bmisc}[author]
\bauthor{\bsnm{Li},~\bfnm{Tong}\binits{T.}} \AND
  \bauthor{\bsnm{Yuan},~\bfnm{Ming}\binits{M.}}
(\byear{2019}).
\btitle{On the Optimality of Gaussian Kernel Based Nonparametric Tests against
  Smooth Alternatives}.
\end{bmisc}
\endbibitem

\bibitem[\protect\citeauthoryear{Lin and Bai}{2011}]{lin2011probability}
\begin{bbook}[author]
\bauthor{\bsnm{Lin},~\bfnm{Zhengyan}\binits{Z.}} \AND
  \bauthor{\bsnm{Bai},~\bfnm{Zhidong}\binits{Z.}}
(\byear{2011}).
\btitle{Probability inequalities}.
\bpublisher{Springer Science \& Business Media}.
\end{bbook}
\endbibitem

\bibitem[\protect\citeauthoryear{Lopez-Paz, Hennig and
  Sch{\"o}lkopf}{2013}]{lopez2013randomized}
\begin{barticle}[author]
\bauthor{\bsnm{Lopez-Paz},~\bfnm{David}\binits{D.}},
  \bauthor{\bsnm{Hennig},~\bfnm{Philipp}\binits{P.}} \AND
  \bauthor{\bsnm{Sch{\"o}lkopf},~\bfnm{Bernhard}\binits{B.}}
(\byear{2013}).
\btitle{The randomized dependence coefficient}.
\bjournal{Advances in neural information processing systems}
\bvolume{26}.
\end{barticle}
\endbibitem

\bibitem[\protect\citeauthoryear{Lyons}{2013}]{lyons2013}
\begin{barticle}[author]
\bauthor{\bsnm{Lyons},~\bfnm{Russell}\binits{R.}}
(\byear{2013}).
\btitle{{Distance covariance in metric spaces}}.
\bjournal{The Annals of Probability}
\bvolume{41}
\bpages{3284 -- 3305}.
\bdoi{10.1214/12-AOP803}
\end{barticle}
\endbibitem

\bibitem[\protect\citeauthoryear{Moignard et~al.}{2015}]{moignard2015decoding}
\begin{barticle}[author]
\bauthor{\bsnm{Moignard},~\bfnm{Victoria}\binits{V.}},
  \bauthor{\bsnm{Woodhouse},~\bfnm{Steven}\binits{S.}},
  \bauthor{\bsnm{Haghverdi},~\bfnm{Laleh}\binits{L.}},
  \bauthor{\bsnm{Lilly},~\bfnm{Andrew~J}\binits{A.~J.}},
  \bauthor{\bsnm{Tanaka},~\bfnm{Yosuke}\binits{Y.}},
  \bauthor{\bsnm{Wilkinson},~\bfnm{Adam~C}\binits{A.~C.}},
  \bauthor{\bsnm{Buettner},~\bfnm{Florian}\binits{F.}},
  \bauthor{\bsnm{Macaulay},~\bfnm{Iain~C}\binits{I.~C.}},
  \bauthor{\bsnm{Jawaid},~\bfnm{Wajid}\binits{W.}},
  \bauthor{\bsnm{Diamanti},~\bfnm{Evangelia}\binits{E.}} \betal{et~al.}
(\byear{2015}).
\btitle{Decoding the regulatory network of early blood development from
  single-cell gene expression measurements}.
\bjournal{Nature biotechnology}
\bvolume{33}
\bpages{269--276}.
\end{barticle}
\endbibitem

\bibitem[\protect\citeauthoryear{Pan et~al.}{2019}]{pan2019ball}
\begin{barticle}[author]
\bauthor{\bsnm{Pan},~\bfnm{Wenliang}\binits{W.}},
  \bauthor{\bsnm{Wang},~\bfnm{Xueqin}\binits{X.}},
  \bauthor{\bsnm{Zhang},~\bfnm{Heping}\binits{H.}},
  \bauthor{\bsnm{Zhu},~\bfnm{Hongtu}\binits{H.}} \AND
  \bauthor{\bsnm{Zhu},~\bfnm{Jin}\binits{J.}}
(\byear{2019}).
\btitle{Ball covariance: A generic measure of dependence in Banach space}.
\bjournal{Journal of the American Statistical Association}
\bvolume{115}
\bpages{307--317}.
\end{barticle}
\endbibitem

\bibitem[\protect\citeauthoryear{Pfister et~al.}{2018}]{pfister2018kernel}
\begin{barticle}[author]
\bauthor{\bsnm{Pfister},~\bfnm{Niklas}\binits{N.}},
  \bauthor{\bsnm{B{\"u}hlmann},~\bfnm{Peter}\binits{P.}},
  \bauthor{\bsnm{Sch{\"o}lkopf},~\bfnm{Bernhard}\binits{B.}} \AND
  \bauthor{\bsnm{Peters},~\bfnm{Jonas}\binits{J.}}
(\byear{2018}).
\btitle{Kernel-based tests for joint independence}.
\bjournal{Journal of the Royal Statistical Society: Series B (Statistical
  Methodology)}
\bvolume{80}
\bpages{5--31}.
\end{barticle}
\endbibitem

\bibitem[\protect\citeauthoryear{Rigollet and
  Tsybakov}{2007}]{rigollet2007linear}
\begin{barticle}[author]
\bauthor{\bsnm{Rigollet},~\bfnm{Ph}\binits{P.}} \AND
  \bauthor{\bsnm{Tsybakov},~\bfnm{Alexander~B}\binits{A.~B.}}
(\byear{2007}).
\btitle{Linear and convex aggregation of density estimators}.
\bjournal{Mathematical Methods of Statistics}
\bvolume{16}
\bpages{260--280}.
\end{barticle}
\endbibitem

\bibitem[\protect\citeauthoryear{Schweizer and
  Wolff}{1981}]{schweizer1981nonparametric}
\begin{barticle}[author]
\bauthor{\bsnm{Schweizer},~\bfnm{Berthold}\binits{B.}} \AND
  \bauthor{\bsnm{Wolff},~\bfnm{Edward~F}\binits{E.~F.}}
(\byear{1981}).
\btitle{On nonparametric measures of dependence for random variables}.
\bjournal{The annals of statistics}
\bvolume{9}
\bpages{879--885}.
\end{barticle}
\endbibitem

\bibitem[\protect\citeauthoryear{Scott}{1979}]{scott1979optimal}
\begin{barticle}[author]
\bauthor{\bsnm{Scott},~\bfnm{David~W}\binits{D.~W.}}
(\byear{1979}).
\btitle{On optimal and data-based histograms}.
\bjournal{Biometrika}
\bvolume{66}
\bpages{605--610}.
\end{barticle}
\endbibitem

\bibitem[\protect\citeauthoryear{Shekhar, Kim and
  Ramdas}{2023}]{shekhar2023permutation}
\begin{barticle}[author]
\bauthor{\bsnm{Shekhar},~\bfnm{Shubhanshu}\binits{S.}},
  \bauthor{\bsnm{Kim},~\bfnm{Ilmun}\binits{I.}} \AND
  \bauthor{\bsnm{Ramdas},~\bfnm{Aaditya}\binits{A.}}
(\byear{2023}).
\btitle{A permutation-free kernel independence test}.
\bjournal{Journal of Machine Learning Research}
\bvolume{24}
\bpages{1--68}.
\end{barticle}
\endbibitem

\bibitem[\protect\citeauthoryear{Silverman}{2018}]{silverman2018density}
\begin{bbook}[author]
\bauthor{\bsnm{Silverman},~\bfnm{Bernard~W}\binits{B.~W.}}
(\byear{2018}).
\btitle{Density estimation for statistics and data analysis}.
\bpublisher{Routledge}.
\end{bbook}
\endbibitem

\bibitem[\protect\citeauthoryear{Sriperumbudur
  et~al.}{2010}]{JMLR:v11:sriperumbudur10a}
\begin{barticle}[author]
\bauthor{\bsnm{Sriperumbudur},~\bfnm{Bharath~K.}\binits{B.~K.}},
  \bauthor{\bsnm{Gretton},~\bfnm{Arthur}\binits{A.}},
  \bauthor{\bsnm{Fukumizu},~\bfnm{Kenji}\binits{K.}},
  \bauthor{\bsnm{Sch{{\"o}}lkopf},~\bfnm{Bernhard}\binits{B.}} \AND
  \bauthor{\bsnm{Lanckriet},~\bfnm{Gert R.~G.}\binits{G.~R.~G.}}
(\byear{2010}).
\btitle{Hilbert Space Embeddings and Metrics on Probability Measures}.
\bjournal{Journal of Machine Learning Research}
\bvolume{11}
\bpages{1517-1561}.
\end{barticle}
\endbibitem

\bibitem[\protect\citeauthoryear{Sz{\'e}kely, Rizzo and
  Bakirov}{2007}]{szekely2007measuring}
\begin{barticle}[author]
\bauthor{\bsnm{Sz{\'e}kely},~\bfnm{G{\'a}bor~J}\binits{G.~J.}},
  \bauthor{\bsnm{Rizzo},~\bfnm{Maria~L}\binits{M.~L.}} \AND
  \bauthor{\bsnm{Bakirov},~\bfnm{Nail~K}\binits{N.~K.}}
(\byear{2007}).
\btitle{Measuring and testing dependence by correlation of distances}.
\bjournal{The annals of statistics}
\bvolume{35}
\bpages{2769--2794}.
\end{barticle}
\endbibitem

\bibitem[\protect\citeauthoryear{Székely and Rizzo}{2013}]{SZEKELY2013193}
\begin{barticle}[author]
\bauthor{\bsnm{Székely},~\bfnm{Gábor~J.}\binits{G.~J.}} \AND
  \bauthor{\bsnm{Rizzo},~\bfnm{Maria~L.}\binits{M.~L.}}
(\byear{2013}).
\btitle{The distance correlation t-test of independence in high dimension}.
\bjournal{Journal of Multivariate Analysis}
\bvolume{117}
\bpages{193-213}.
\bdoi{https://doi.org/10.1016/j.jmva.2013.02.012}
\end{barticle}
\endbibitem

\bibitem[\protect\citeauthoryear{Wang et~al.}{2021}]{wang2021nonparametric}
\begin{bmisc}[author]
\bauthor{\bsnm{Wang},~\bfnm{Xueqin}\binits{X.}},
  \bauthor{\bsnm{Zhu},~\bfnm{Jin}\binits{J.}},
  \bauthor{\bsnm{Pan},~\bfnm{Wenliang}\binits{W.}},
  \bauthor{\bsnm{Zhu},~\bfnm{Junhao}\binits{J.}} \AND
  \bauthor{\bsnm{Zhang},~\bfnm{Heping}\binits{H.}}
(\byear{2021}).
\btitle{Nonparametric Statistical Inference via Metric Distribution Function in
  Metric Spaces}.
\end{bmisc}
\endbibitem

\bibitem[\protect\citeauthoryear{Wen et~al.}{2023}]{jiang2023}
\begin{btechreport}[author]
\bauthor{\bsnm{Wen},~\bfnm{Canhong}\binits{C.}},
  \bauthor{\bsnm{Wang},~\bfnm{Xueqin}\binits{X.}},
  \bauthor{\bsnm{Gao},~\bfnm{Zhe}\binits{Z.}} \AND
  \bauthor{\bsnm{Jiang},~\bfnm{Yunlu}\binits{Y.}}
(\byear{2023}).
\btitle{High-dimensional robust test of independence via Grothendieck's
  covariance}
\btype{Technical Report}.
\end{btechreport}
\endbibitem

\bibitem[\protect\citeauthoryear{Xiru}{1980}]{chen1980}
\begin{barticle}[author]
\bauthor{\bsnm{Xiru},~\bfnm{Chen}\binits{C.}}
(\byear{1980}).
\btitle{On limiting properties of U-statistics and von Mises statistics}.
\bjournal{Scientia Sinica (in Chinese)}
\bvolume{10}
\bpages{522-532}.
\end{barticle}
\endbibitem

\bibitem[\protect\citeauthoryear{Zhang}{2019}]{zhang2019bet}
\begin{barticle}[author]
\bauthor{\bsnm{Zhang},~\bfnm{Kai}\binits{K.}}
(\byear{2019}).
\btitle{BET on Independence}.
\bjournal{Journal of the American Statistical Association}
\bvolume{114}
\bpages{1620--1637}.
\end{barticle}
\endbibitem

\bibitem[\protect\citeauthoryear{Zhang et~al.}{2018}]{zhang2018large}
\begin{barticle}[author]
\bauthor{\bsnm{Zhang},~\bfnm{Qinyi}\binits{Q.}},
  \bauthor{\bsnm{Filippi},~\bfnm{Sarah}\binits{S.}},
  \bauthor{\bsnm{Gretton},~\bfnm{Arthur}\binits{A.}} \AND
  \bauthor{\bsnm{Sejdinovic},~\bfnm{Dino}\binits{D.}}
(\byear{2018}).
\btitle{Large-scale kernel methods for independence testing}.
\bjournal{Statistics and Computing}
\bvolume{28}
\bpages{113--130}.
\end{barticle}
\endbibitem

\end{thebibliography}

\end{document}